\documentclass[11pt, a4paper, onecolumn]{quantumarticle}

\pdfoutput=1

\usepackage[T1]{fontenc} 
\usepackage[english]{babel} 
\usepackage[utf8]{inputenc}
\usepackage{csquotes}
\usepackage[most]{tcolorbox}
\usepackage[table,dvipsnames]{xcolor}
\usepackage[psdextra]{hyperref}
\usepackage{bookmark}
\usepackage{amsmath,amssymb,amsfonts,mathtools,amsthm,comment}
\usepackage{graphicx,tikz}
\usepackage{bbm}
\usepackage{physics}
\usepackage{tabularray}
\usepackage[capitalise]{cleveref}
\usepackage[ruled,boxed,linesnumbered]{algorithm2e}
\usepackage[margin=1in]{geometry}
\usepackage{tikz}
\usepackage{tikz}
\usetikzlibrary{shapes.geometric}
\usepackage{amsmath}
\usepackage{subcaption}

\usepackage{stmaryrd}

\usepackage[
style=numeric,
sorting=none,
maxbibnames=10
]{biblatex}
\AtEveryBibitem{
    \clearfield{urlyear}
    \clearfield{urlmonth}
    \clearfield{month}
}

\usepackage{enumitem}
\usepackage[normalem]{ulem}
\usepackage{multirow}
\usepackage{parskip}

\usepackage{xr}
\newtcolorbox[auto counter]{iobox}[2][]{
  enhanced, breakable,
  colframe=blue!50,
  colback=white,
  coltitle=black,
  colbacktitle=white,        
  fonttitle=\bfseries,
  title={Problem~\thetcbcounter: #2},   
  label={#1},
  titlerule=0pt,
  boxrule=0.8pt, arc=3pt,
  boxsep=0pt,
  toptitle=2pt, bottomtitle=2pt,
  top=0pt, bottom=6pt, left=6pt, right=6pt,
  subtitle style={
    boxrule=0pt,
    colback=blue!12,
    coltext=black,
    fontupper=\bfseries,     
    boxsep=0pt,
    top=2pt, bottom=2pt, left=6pt, right=6pt,
    before skip=0pt,         
    after skip=4pt,          
  },
}
\usepackage{authblk}
\newtheorem{theorem}{Theorem}
\newtheorem*{theorem*}{Theorem}
\newtheorem{corollary}[theorem]{Corollary}
\newtheorem{lemma}[theorem]{Lemma}

\newtheorem{definition}[theorem]{Definition}
\theoremstyle{definition}

\newcommand{\fro}{\mathrm{F}}

\newcommand{\y}{\mathrm{y}}

\newcommand{\myleft}{\mathopen{}\mathclose\bgroup\left}
\newcommand{\myright}{\aftergroup\egroup\right}

\DeclarePairedDelimiterX{\iiiNorm}[1]{\lvert}{\rvert}{%
  \delimsize\lvert\delimsize\lvert#1\delimsize\rvert\delimsize\rvert%
}

\DeclarePairedDelimiterXPP\snorm[1]{}\lVert\rVert{_\infty}{\ifblank{#1}{\,\cdot\,}{#1}}   

\DeclarePairedDelimiterXPP\twonorm[1]{}\lVert\rVert{_2}{\ifblank{#1}{\,\cdot\,}{#1}}   

\DeclarePairedDelimiterXPP\trnorm[1]{}\lVert\rVert{_1}{\ifblank{#1}{\,\cdot\,}{#1}}   

\DeclarePairedDelimiterXPP\fnorm[1]{}\lVert\rVert{_{\fro}}{\ifblank{#1}{\,\cdot\,}{#1}}   

\DeclarePairedDelimiterXPP\dnorm[1]{}\lVert\rVert{_\diamond}{\ifblank{#1}{\,\cdot\,}{#1}}   

\DeclarePairedDelimiterXPP\cbnorm[1]{}\lVert\rVert{_\mathrm{cb}}{\ifblank{#1}{\,\cdot\,}{#1}}   
\DeclarePairedDelimiterXPP\onenorm[1]{}\lVert\rVert{_{1\rightarrow 1}}{\ifblank{#1}{\,\cdot\,}{#1}}   
\DeclarePairedDelimiterXPP\ddnorm[1]{}\lVert\rVert{_{\diamond\rightarrow \diamond}}{\ifblank{#1}{\,\cdot\,}{#1}}   
\DeclarePairedDelimiterXPP\ssnorm[1]{}\lVert\rVert{_{\infty\rightarrow\infty}}{\ifblank{#1}{\,\cdot\,}{#1}}   

\DeclarePairedDelimiterX\Set[1]\{\}{%
  
  #1
}

\DeclarePairedDelimiterX\innerp[2]{\langle}{\rangle}{%
  \ifblank{#1}{\,\cdot\,}{#1} , \ifblank{#2}{\,\cdot\,}{#2}%
}

\DeclarePairedDelimiterX\sandwich[3]{\langle}{\rangle}%
  {#1\,\delimsize\vert\kern0.15ex\mathopen{}#2\kern0.15ex\delimsize\vert\kern0.15ex\mathopen{}#3}

\DeclarePairedDelimiterX\obraket[2]{(}{)}%
  {#1\kern0.15ex\delimsize\vert\kern0.15ex\mathopen{}#2}

\DeclarePairedDelimiterX\oketbra[2]{\vert}{\vert}%
  {#1\kern0.15ex\delimsize)\delimsize(\kern0.15ex\mathopen{}#2}

\DeclarePairedDelimiterX\osandwich[3]{(}{)}%
  {#1\,\delimsize\vert\kern0.15ex\mathopen{}#2\kern0.15ex\delimsize\vert\kern0.15ex\mathopen{}#3}

\newcommand{\Hil}{\mathcal{H}}
\newcommand{\Scurvy}{\mathcal{S}}
\newcommand{\Ncurvy}{\mathcal{N}}
\newcommand{\symplspace}{\mathbb{F}_2^{2n}}
\newcommand{\Pn}{\mathcal{P}_n}

\newcommand{\supp}{\mathrm{support}}

\newcommand{\Ftwotwon}{\mathbb{F}_2^{2n}}

\newcommand{\blfootnote}[1]{%
  \begingroup
  \renewcommand\thefootnote{}\footnote{#1}%
  \addtocounter{footnote}{-1}%
  \endgroup
}
\makeatletter
\newcommand\rst@title{}
\newcommand\rst@key{}
\newcommand\rst@env{}
\newcommand\rst@ctr{}

\newcommand\newrestated[3]{%
  \newenvironment{#1}[1][]{%
    \renewcommand\rst@env{#2}%
    \renewcommand\rst@ctr{#3}%
    \renewcommand\rst@title{##1}%
    \@ifnextchar\label\rst@grab{\rst@start{}}%
  }{%
    \end{#2}\addtocounter{#3}{-1}%
  }%
}
\def\rst@grab\label#1{\rst@start{#1}}
\newcommand\rst@start[1]{%
  \renewcommand\rst@key{#1}%
  \ifx\rst@key\@empty\else
    \expandafter\renewcommand\csname the\rst@ctr\endcsname{\ref*{#1}}%
  \fi
  \ifx\rst@title\@empty
    \expandafter\begin\expandafter{\rst@env}%
  \else
    \expandafter\begin\expandafter{\rst@env}[\rst@title]%
  \fi
}
\makeatother

\newrestated{restatedtheorem}{theorem}{theorem}
\newrestated{restatedlemma}{lemma}{theorem}

\makeatletter
\def\thm@space@setup{%
  \thm@preskip=\medskipamount
  \thm@postskip=\medskipamount
}
\makeatother

\title{Logical information localisation in stabiliser codes via single-qubit measurements}
\date{March 2026}

\begin{document}
\author[1]{Jelena Mackeprang$^*\,$}
\author[1,2]{Hemant Sharma$^*\,$}

\author[1]{Jonas Helsen}
\affil[1]{QuSoft and CWI, Science Park 123, 1098XG Amsterdam, The Netherlands}
\affil[2]{QuTech, TU Delft, Delft, The Netherlands}
\blfootnote{{*}These authors contributed equally to this work.}

\maketitle
\begin{abstract}
 
Stabiliser path finding (SPF) has previously been introduced as a method to localise logical information in a stabiliser code undergoing loss onto a single pre-specified target qubit, using only one round of single-qubit measurements. When working with limited resources and flying qubits, this fast read-out of logical information is a helpful tool for fault-tolerant communication. In this work, we provide a broad analytical and computational study of localisation via SPF. We introduce $g$-SPF, where the task is to localise the logical information onto a set of at most $g$ unspecified target qubits. Through analytical arguments based on percolation theory and the disjointness of stabiliser codes, we prove that for i.i.d. qubit loss with probability $p<1/2$ and sufficiently large planar surface codes, localisation via $g$-SPF for constant $g$ succeeds with a probability converging to one, which establishes a localisation threshold. Furthermore, we propose and implement two algorithms to solve SPF. The first is exact and formulates SPF as an integer linear program, whereas the second is heuristic and formulates SPF as a decoding problem. We validate both algorithms by numerically reproducing the localisation threshold for the surface code and demonstrate that the heuristic algorithm is considerably faster. Our work significantly reduces the time required to solve SPF compared to current state-of-the-art algorithms, allowing us to study localisation in substantially larger stabiliser codes than previously considered in the literature. Together, these theoretical and computational results open the door to various applications, such as fault-tolerant teleportation and efficient logical fusion.

\end{abstract}
\section{Introduction}\label{sec:intro}

Measurement-based quantum computation (MBQC) and communication promise various different future applications, such as secure communication, distributed computing and cryptographic protocols~\cite{quantum_internet,zhang2025towards,escolafarras2025quantum,BARRAL2025100747,Krenn2016}. Photonic hardware is a promising route towards scalable, fast and energy-efficient quantum computation~\cite{zhu2026recent,slussarenko2019photonic}. Photons are naturally suited for MBQC where any computation is performed using only (adaptive) single-qubit measurements and local Clifford operations on a large entangled resource state~\cite{wei2019measurement,PhysRevA.76.052315,PhysRevLett.86.5188,maring2024versatile}. They are also the most natural resource used in fusion based quantum computation~\cite{bartolucci2023fusion} and of course in quantum communication~\cite{quantumnetworks,krenn2016quantum}. However, quantum computation or communication with photons faces the problem of photon loss. Fault-tolerant schemes and suitable quantum error correcting codes for MBQC  , quantum communication or fusion based quantum computation are therefore highly sought after~\cite{abughanem2026toward,PRXQuantum.6.020304,bartolucci2025comparisonschemeshighlyloss,encodedfusion,qj8s-j274,thomas2024fusion}.

Bell et al.~\cite{bellOptimizingGraphCodes2023} searched for graph codes - stabiliser codes defined with respect to graphs~\cite{schlingemannStabilizerCodesCan2001, Hwang_2016}- suitable for their proposed schemes for photonic quantum repeaters and fusion-based quantum computation. Their search relies on the stabiliser path finding (SPF) method of Morley-Short et al.~\cite{morley-shortLosstolerantTeleportationLarge2019,ltdecode}, where the goal is to find a pair of representatives for logical $X$ and $Z$ that only anti-commute qubit-wise on one designated target qubit and have no support on any lost qubits. Measuring these representatives qubit-wise \emph{localises} the logical information onto the target qubit. This procedure constitutes a local operations and classical communication (LOCC) protocol~\cite{chitambarEverythingYouAlways2014} that localises information onto one qubit in a stabiliser code. Bell et al. propose to use SPF for \emph{adaptive fusions} (see Ref.~\cite{photonic_fusion}) tolerant to fusion failures and qubit losses. 
The SPF method from Ref.~\cite{morley-shortLosstolerantTeleportationLarge2019} works by exhaustively listing logical representatives of the code suitable for SPF, which becomes infeasible for larger codes due to the exponentially increasing number of logical representatives. In practice, this limited the search in Ref.~\cite{bellOptimizingGraphCodes2023} to twelve-qubit graphs on a computing cluster. The procedure in Ref.~\cite{bellOptimizingGraphCodes2023} is also purely numerical and provides no analytical guarantee that loss tolerance persists at larger sizes. Furthermore, in Ref.~\cite{morley-shortLosstolerantTeleportationLarge2019} it is assumed that the target qubit is never lost. Arbitrarily choosing a particular target qubit may lead to sub-optimal success rates of the fault-tolerant scheme under consideration. Due to the broad applications of SPF suggested in Ref.~\cite{bellOptimizingGraphCodes2023} it is desirable to find scalable algorithms to solve it. Moreover, localisation via SPF is an interesting theoretical concept on its own that has - to our knowledge - not been studied in this form in previous literature. 
  
In this work, we thus provide a first theoretical and numerical analysis of localisation via SPF. 
We introduce another version of SPF, dubbed `$g$-SPF', by relaxing the condition of qubit-wise commutation of $X$ and $Z$ on all but a designated target qubit to qubit-wise commutation on all but an arbitrary subset of at most $g$ qubits. This is a more natural definition of the problem statement.  
We analyse $g$-SPF theoretically and show that the planar surface code on a square lattice exhibits a threshold for localisation via $g$-SPF for constant $g$ at an i.i.d. loss probability $p=1/2$, which, interestingly, coincides with its standard erasure threshold - despite localisation and erasure decoding not necessarily being equivalent.  
Moreover, we propose and implement a fast deterministic algorithm for SPF both for its original formulation with a designated target qubit and the more relaxed version $g$-SPF. Additionally, we propose and implement a fast heuristic algorithm for $g$-SPF. We use the linear structure of quantum error correcting codes to encode the loss of qubits and treat SPF as an optimisation problem turned either into a deterministic integer linear program or a heuristic decoding problem. Our algorithms are orders of magnitude faster than the one suggested in Ref.~\cite{morley-shortLosstolerantTeleportationLarge2019}, which enables the systematic search for large codes suitable for fault-tolerant schemes. 

This paper is structured as follows. In Section~\ref{sec:results} we give a brief informal overview of our analytical as well as computational results. Only basic knowledge of stabiliser codes is required to understand this section. Section~\ref{sec:background} reviews the preliminaries required to understand this work and introduces our notation. In Section~\ref{sec:formalisation}, we discuss stabiliser path finding and formalise it as an optimisation problem. We then move on to our own contributions. In Section~\ref{sec:threshold_surfacecode} we prove that the planar surface code exhibits a localisation threshold of $p=1/2$, where $p$ denotes the qubit loss probability. Our computational contributions start with Section~\ref{sec:opt_algos}, where we introduce our deterministic and heuristic algorithms D-LoFi and H-LoFi that solve SPF. We present our numerical results in Section~\ref{sec:numerics}. In the end, we give a conclusion and outlook in Section~\ref{sec:concl}.

\section{Results}\label{sec:results}
Here, we informally sketch our main analytical and numerical results. We consider both the original formulation of SPF, where the goal is to find an $X$ and a $Z$ with no support on the lost qubits that anti-commute qubit-wise on only a designated target qubit and commute qubit-wise elsewhere and our newly introduced version $g$-SPF, where $X$ and $Z$ must anti-commute on at most $g$ arbitrary target qubits. In the following, we will call any pair $(X,Z)$ with no support on the lost qubits that fulfil the specific anti-commutation conditions a `Target-SPF-' or `$g$-SPF pair'.

\subsection{Analytics}

  We analyse $g$-SPF analytically using the planar surface code as an example. Our main analytical result is Theorem~\ref{thm:threshold_surface_informal}, which we will state below. For a given i.i.d. qubit loss probability $p$, we define the success rate of localisation via $g$-SPF as the rate at which a $g$-SPF pair exists. A \emph{family} of stabiliser codes is a sequence $\{\mathcal{S}_n\}_{n \in \mathbb{N}}$ of codes defined by a common construction and indexed by a size parameter $n$, which here denotes the number of qubits. Theorem~\ref{thm:threshold_surface_informal} states that for $p<1/2$ the localisation success rate converges to one for $n \to \infty$ for the family of planar surface code.  
\begin{theorem}[$g$-SPF-threshold for the planar surface code - informal]\label{thm:threshold_surface_informal}
    Let $\{\Scurvy_n\}$ be a family of planar surface codes and let each qubit in the code be lost with a probability $p$. For any $p<1/2$, there exists a constant $g$ such that for large enough $n$ one can - with probability converging to one - find a $g$-SPF pair $(X,Z)$. That is, for $p<1/2$ the $g$-SPF success rate converges to one. Moreover, for any $p>1/2$ the $g$-SPF success rate converges to zero, meaning $p=1/2$ is the exact threshold.
\end{theorem}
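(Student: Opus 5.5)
Here is how I would prove the two halves: reduce to bond percolation on the two dual lattices of the surface code for $p<1/2$, and use duality plus the no-cloning property of erasures for $p>1/2$.

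\textbf{Setup.} For the planar surface code on an $L\times L$ square lattice, qubits sit on edges. Logical $Z$ representatives are strings of $Z$'s along primal paths joining the two rough boundaries. Logical $X$ representatives are strings of $X$'s along dual paths joining the two smooth boundaries. A qubit is lost independently with probability $p$, so a qubit is present with probability $1-p$. A $Z$-string avoiding lost qubits is exactly an open crossing in bond percolation on the primal lattice with open probability $1-p$. Likewise, an $X$-string avoiding lost qubits is an open dual crossing in bond percolation on the dual lattice with the same open probability $1-p$. Note that this is not the usual erasure duality, where present primal edges correspond to lost dual edges. Here both strings must use \emph{present} qubits, so both lattices see open probability $1-p>1/2$.

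\textbf{Existence of a $g$-SPF pair for $p<1/2$.} By Kesten's theorem ($p_c=1/2$ for bond percolation on $\mathbb{Z}^2$) and standard RSW/box-crossing estimates, the probability that both crossings exist tends to one. Given a primal crossing $\gamma$ and a dual crossing $\gamma^*$, the operators $\bar Z(\gamma)$ and $\bar X(\gamma^*)$ anticommute qubit-wise exactly on the edges in $\gamma\cap\gamma^*$. Since they anticommute globally, $|\gamma\cap\gamma^*|$ is odd. The remaining task is to make this intersection small, of size at most some constant $g$ (ideally $1$).

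Plan for this step:
\begin{enumerate}
\item Use supercritical percolation at density $1-p>1/2$ to find a horizontal primal open crossing $\gamma$ in the lower part of the box, for instance in the strip $[0,L]\times[0,L/3]$.
\item Find a vertical dual open crossing $\gamma^*$ that enters from the bottom boundary, meets $\gamma$, and then continues to the top.
\item Modify the pair locally. Whenever $\gamma^*$ crosses $\gamma$ several times, reroute one of them using open detours. Alternatively, multiply one operator by stabilisers supported on present qubits to cancel intersections in pairs.
\end{enumerate}

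A cleaner alternative is to work with the open clusters directly. Take the topmost open primal crossing $\gamma$ (it exists by planarity and the exploration argument). Take a dual crossing $\gamma^*$ that lies above $\gamma$ except for a final segment reaching down to it. Then argue that the number of intersections is bounded by a constant with high probability, using exponential decay of finite dual clusters in the supercritical phase. I expect this step to be the main obstacle: proving that the intersection count stays bounded, rather than growing with $L$, while both paths still avoid lost qubits.

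\textbf{Failure for $p>1/2$.} Here the present qubits are subcritical. With high probability there is then no open primal crossing, and therefore no logical $Z$ supported on present qubits. More carefully, any logical $Z$ representative avoiding lost qubits would contain a primal crossing path in its support: its $\mathbb{F}_2$ boundary structure forces this. So no $g$-SPF pair exists for any $g$. This is consistent with the fact that for $p>1/2$ the lost set itself supports a logical operator, so the logical information cannot be recovered from the remaining qubits at all. Combining the two regimes shows that $p=1/2$ is the exact threshold.
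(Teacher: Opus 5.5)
\textbf{The gap.} Your $p>1/2$ half is fine. It is essentially the paper's argument: any surviving representative of the relevant class contains an open crossing in its support, and such crossings fail to exist with high probability in the subcritical regime. The paper simply cites the exact erasure threshold instead. The gap is in the $p<1/2$ half, at exactly the step you flag. You never show that some surviving pair $(\gamma,\gamma^*)$ has $|\gamma\cap\gamma^*|\le g$ with $g$ independent of the lattice size $\Lambda$, and none of your three mechanisms supplies this.
\begin{itemize}
\item \emph{Cancelling crossings pairwise.} Whether by stabilisers or by rerouting, this amounts to finding an alternative open path around each excursion of $\gamma^*$ across $\gamma$. A product of stabilisers over a region is supported on the region's boundary, and all of that boundary would have to avoid lost qubits. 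Nothing guarantees such detours exist, let alone that only $O(1)$ of them are needed.
\item \emph{The topmost-crossing variant.} As stated it is not a top-to-bottom crossing: $\gamma^*$ must continue below $\gamma$ to the bottom boundary. Read charitably, it presupposes a dual crossing that meets $\gamma$ once and then reaches the bottom without returning, which is precisely what has to be proved.
\item \emph{Exponential decay of finite dual clusters.} This does not by itself bound how often a given dual crossing re-enters $\gamma$.
\end{itemize}
Controlling the geometry of one specific pair is genuinely delicate; the paper supports $g=1$ only numerically.

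\textbf{The missing idea.} Do not control any particular pair; average over many disjoint ones.
\begin{itemize}
\item \emph{Many disjoint crossings.} For $p<1/2$, Grimmett's Lemma 11.22 gives constants $\beta(p),\gamma(p)>0$ with $P_p(M_\Lambda\le\beta(\Lambda-1))\le e^{-\gamma(\Lambda-1)}$, where $M_\Lambda$ is the maximal number of edge-disjoint open left-right crossings.
\item \emph{Both lattices at once.} Apply this to the lattice and to the rotated dual lattice, and combine via Harris' inequality or simply a union bound. With probability at least $1-2e^{-\gamma(\Lambda-1)}$ you obtain families $X_D$ and $Z_D$, each consisting of more than $\beta(\Lambda-1)$ pairwise disjoint surviving representatives.
\item \emph{Double counting.} Each qubit lies in at most one member of $X_D$ and at most one member of $Z_D$. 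Hence $\sum_{(X,Z)\in X_D\times Z_D}|\mathrm{supp}(X)\cap\mathrm{supp}(Z)|\le n<2\Lambda^2$. Since there are more than $\beta^2(\Lambda-1)^2$ pairs, some pair has intersection smaller than $2\Lambda^2/(\beta^2(\Lambda-1)^2)\le 8/\beta^2$ for $\Lambda\ge 2$.
\item \emph{Conclusion.} Because $\alpha(X,Z)\le|\mathrm{supp}(X)\cap\mathrm{supp}(Z)|$, that pair is a $g$-SPF pair with $g=\lfloor 8/\beta^2\rfloor$.
\end{itemize}
This counting step is what replaces your step 3 and the topmost-crossing alternative.
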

Theorem~\ref{thm:threshold_surface_informal} states that localisation via LOCC onto a constant number of qubits on the surface code has an actual threshold. Moreover, this localisation threshold coincides with the standard erasure threshold from Ref.~\cite{staceThresholdsTopologicalCodes2009}, which is not trivially obvious. Practically speaking, due to its high threshold, the planar surface code turns out to be very suitable for localisation via $g$-SPF. Provided that the hardware allows for the creation of a planar surface code, one could thus use it to transmit quantum information over highly lossy channels. As any stabiliser code is local-Clifford equivalent to a graph code (see Ref.~\cite{graph_code_equiv_stabiliser}), one could consider the graph code version of the surface code for the tasks listed in Ref.~\cite{bellOptimizingGraphCodes2023}. 
To prove Theorem~\ref{thm:threshold_surface_informal}, we derive a similar lemma to the scrubbing lemma from Ref.~\cite{jochym-oconnorDisjointnessStabilizerCodes2018}, which we will state below. The support of an operator $P$ is the subset of qubits onto which $P$ acts non-trivially. We write $\supp(P)$ for the support of $P$. 
\begin{lemma}[Upper bound on minimum intersection size - informal]\label{lem:intersection_disjointness}
Let $\{\Scurvy_n\}$ be a family of codes encoding one logical qubit. Let $X$ and $Z$ be the logical operators of $\Scurvy_n$. Suppose that for each $\Scurvy_n$ one can find subsets $X_R$ and $Z_R$ of the set of all representatives for $X$ and $Z$ respectively consisting only of of pairwise disjoint operators, where $|X_R|=\Delta_X$ and $|Z_R|=\Delta_Z$. Then
\begin{equation}\label{eq:disjointness_bound_intersection}
    \min_{X \in X_R, Z\in Z_R} \big|\mathrm{supp}(Z) \cap \mathrm{supp}(X)\big| \leq n/(\Delta_X\Delta_Z).
\end{equation}
\end{lemma}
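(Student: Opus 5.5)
The plan is a double counting (pigeonhole) argument. The only properties of the representatives it needs are that the elements of $X_R$ have pairwise disjoint supports, and likewise for $Z_R$. I will bound the total intersection
\begin{equation*}
    T := \sum_{X \in X_R}\sum_{Z \in Z_R} \big|\supp(Z)\cap\supp(X)\big|
\end{equation*}
from above by $n$, and then use that a minimum is at most an average.

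\textbf{Rewriting $T$ as a count over qubits.} Writing each intersection size as a sum of indicators over the $n$ physical qubits $q$ and exchanging the order of summation gives
\begin{equation*}
    T = \sum_{q=1}^{n} \big|\{X\in X_R : q \in \supp(X)\}\big|\cdot\big|\{Z\in Z_R : q\in\supp(Z)\}\big|.
\end{equation*}
Because the supports of the operators in $X_R$ are pairwise disjoint, each qubit $q$ lies in the support of at most one $X\in X_R$, so the first factor is in $\{0,1\}$. The same holds for the second factor by the disjointness of $Z_R$. Each summand is therefore at most $1$, and $T\le n$. In fact $T$ equals the number of qubits lying in $\bigl(\bigcup_{X\in X_R}\supp(X)\bigr)\cap\bigl(\bigcup_{Z\in Z_R}\supp(Z)\bigr)$.

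\textbf{Minimum versus average.} There are exactly $\Delta_X\Delta_Z$ pairs $(X,Z)\in X_R\times Z_R$, so
\begin{equation*}
    \min_{X\in X_R,\,Z\in Z_R}\big|\supp(Z)\cap\supp(X)\big| \le \frac{T}{\Delta_X\Delta_Z}\le \frac{n}{\Delta_X\Delta_Z},
\end{equation*}
which is \eqref{eq:disjointness_bound_intersection}.

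There is no real obstacle here. The only point needing care is the exchange of summation and the observation that disjointness caps each per-qubit multiplicity at one. In particular, the argument does not use that the $X$'s and $Z$'s anticommute as logical operators; that fact is only needed later to guarantee the minimum intersection is nonzero, i.e.\ odd.
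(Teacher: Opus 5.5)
Your proof is correct and is essentially the paper's argument: the paper also double counts $H=\sum_{(X,Z)}|\supp(X)\cap\supp(Z)|$ over qubits and compares the minimum with the average, but it works with general $c$-disjoint subsets, where each qubit lies in at most $c_X$ resp.\ $c_Z$ operators, giving $H\le nc_Xc_Z$ over $c_X\Delta\, c_Z\Delta$ pairs; your pairwise-disjoint case is $c_X=c_Z=1$. One minor aside: anticommutation forces $\alpha(X,Z)$ to be odd, not the intersection size itself. The two coincide when, as in the surface code, the representatives are pure $X$-type and pure $Z$-type, and this plays no role in the bound.
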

Lemma~\ref{lem:intersection_disjointness} upper bounds the minimum intersection size of $X$ and $Z$ representatives. The more pairwise disjoint representatives $X$ and $Z$ one can find, the smaller the minimum intersection becomes. Using known statements from percolation theory, we will show in Section~\ref{sec:threshold_surfacecode} that even after loss the number of \emph{surviving} pairwise disjoint representatives for $X$ and $Z$ scales with $\sqrt{n}$, provided that the single-qubit loss probability $p$ is smaller than 1/2. This, together with Lemma~\ref{lem:intersection_disjointness}, then implies a constant minimum intersection between \emph{surviving} $X$ and $Z$ representatives.
\subsection{Algorithms}
 
The main computational contribution of our work is the proposal of a deterministic algorithm for Target-SPF and $g$-SPF and a heuristic one for $g$-SPF. In the first algorithm, Deterministic Localisation Finder (D-LoFi), we encode Target- and $g$-SPF as constrained quadratic optimisation problems, which we formulate as integer linear programs (ILPs) and solve using an ILP solver. 
Moreover, we minimise the size of the total support of the SPF pair for a given set of lost qubits (referred to as a loss configuration), so that the number of required single-qubit measurements for localisation is minimal.
Our second algorithm, Heuristic Localisation Finder (H-LoFi), encodes $g$-SPF as a type of most-likely-error decoding problem that we solve using a decoder. 
It first formulates the search for a short $Z$ representative as a decoding problem. Then, it formulates the search for an $X$ representative that anti-commutes qubit-wise with $Z$ on a small set of qubits as another decoding problem. We use the stabiliser tableau representation of stabiliser codes~\cite{gottesmanStabilizerCodesQuantum1997} and utilise linear algebra on symplectic spaces to encode the constraint that the SPF pair $(X,Z)$ cannot have support on the lost qubits. That is, our pre-processing maps the input stabiliser tableau to one describing the updated stabiliser code after loss. Note that the tableau representation is the natural representation of stabiliser codes and does not require any particular structure of the code, allowing us to analyse stabiliser codes beyond graph codes. 

We numerically reproduce threshold behaviour similar to what was predicted by Theorem~\ref{thm:threshold_surface_informal} using our algorithms D-LoFi and H-LoFi in Fig.~\ref{fig: sc_thres_informal}, where we plot the success rate of $g$-SPF on the rotated surface code for increasing loss probability $p$, where $g=1$. Both our algorithms indicate that a stronger version of Theorem~\ref{thm:threshold_surface_informal} holds, namely that $g$ can be set to one, though an analytical proof is needed to make a definite statement. The deterministic algorithm D-LoFi is exact, meaning it will always find a $g$-SPF pair if it exists. Fig.~\ref{fig: sc_thres_informal}(b) demonstrates that our H-LoFi algorithm closely approximates the exact success rates obtained by D-LoFi. We conclude that both algorithms serve as a reliable tool to faithfully compute thresholds for $g$-SPF on general stabiliser codes.

\begin{figure*}[t]
    \begin{minipage}{0.49\textwidth}
        \includegraphics[width=0.95\linewidth]{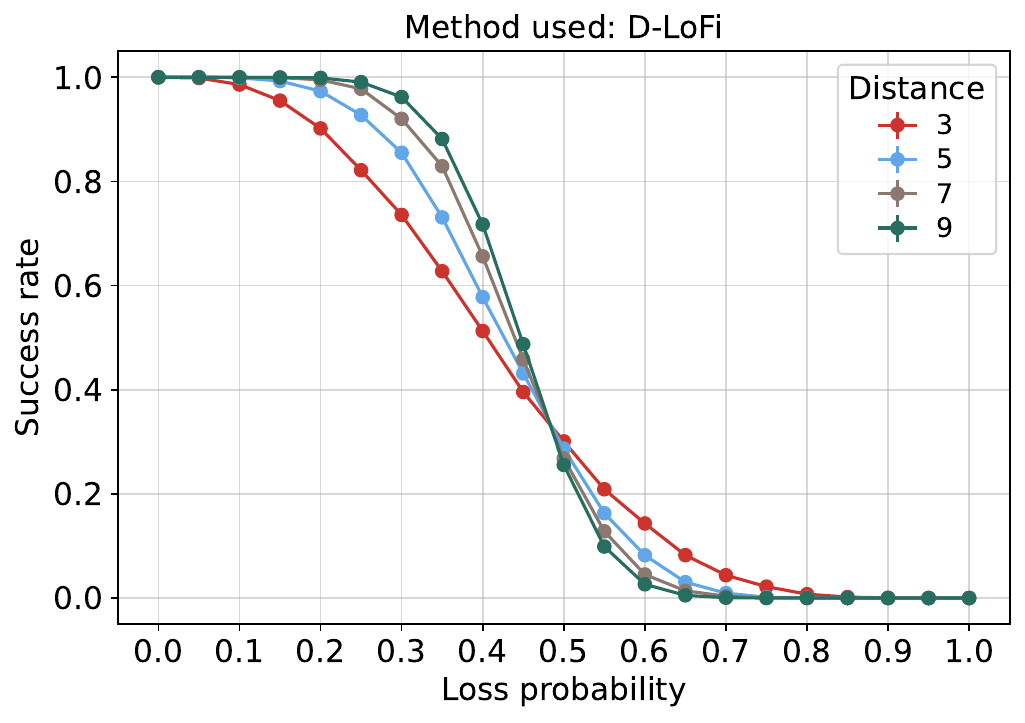}
        \subcaption{Method: Deterministic Localisation Finder (D-LoFi)}
    \end{minipage}
    \begin{minipage}{0.49\textwidth}
        \includegraphics[width=0.95\linewidth]{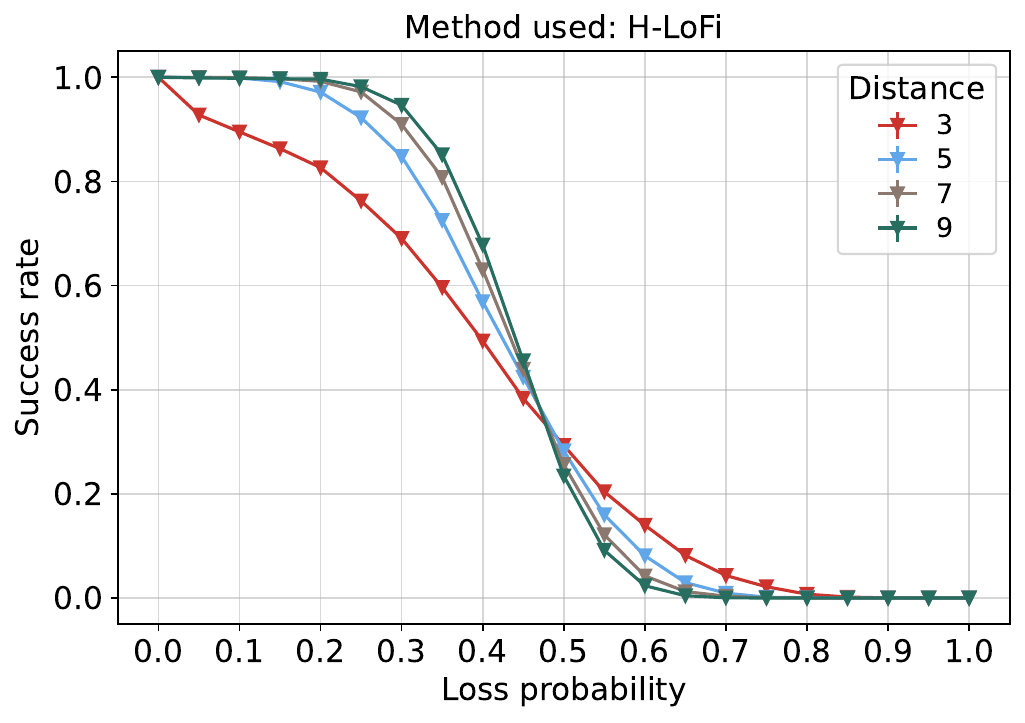}
        \subcaption{Method: Heuristic Localisation Finder (H-LoFi)}
    \end{minipage}
    \caption{Success rate at which a $g$-SPF pair is found for the rotated surface code (see for example Ref.~\cite{rotated_surface_code}) and $g=1$. For every loss probability $p$ we Monte Carlo sample at least 5000 loss configurations and estimate the success rate of $g$-SPF.}
    \label{fig: sc_thres_informal}
\end{figure*}

Fig.~\ref{fig: sc_thres_speed_informal}(a) compares the average runtime of H-LoFi with the average runtime of D-LoFi for a fixed loss probability $p=0.1$. We observe that H-LoFi is orders of magnitude faster than D-LoFi. Moreover, Fig.~\ref{fig: sc_thres_speed_informal}(b) compares the relative increase of the total support size $|\supp(X) \cup \supp(Z)|$ of the $g$-SPF pairs $(X,Z)$ found by the two algorithms. It shows that while H-LoFi is considerably faster than D-LoFi, we do not sacrifice much in terms of its performance. D-LoFi returns a $g$-SPF pair $(X,Z)$ that requires a minimal number of single-qubit measurements for subsequent localisation and the solution found by H-LoFi does not significantly increase this required number of measurements. 

\begin{figure*}[t]
 \begin{minipage}{0.49\textwidth}
        \includegraphics[width=0.95\linewidth]{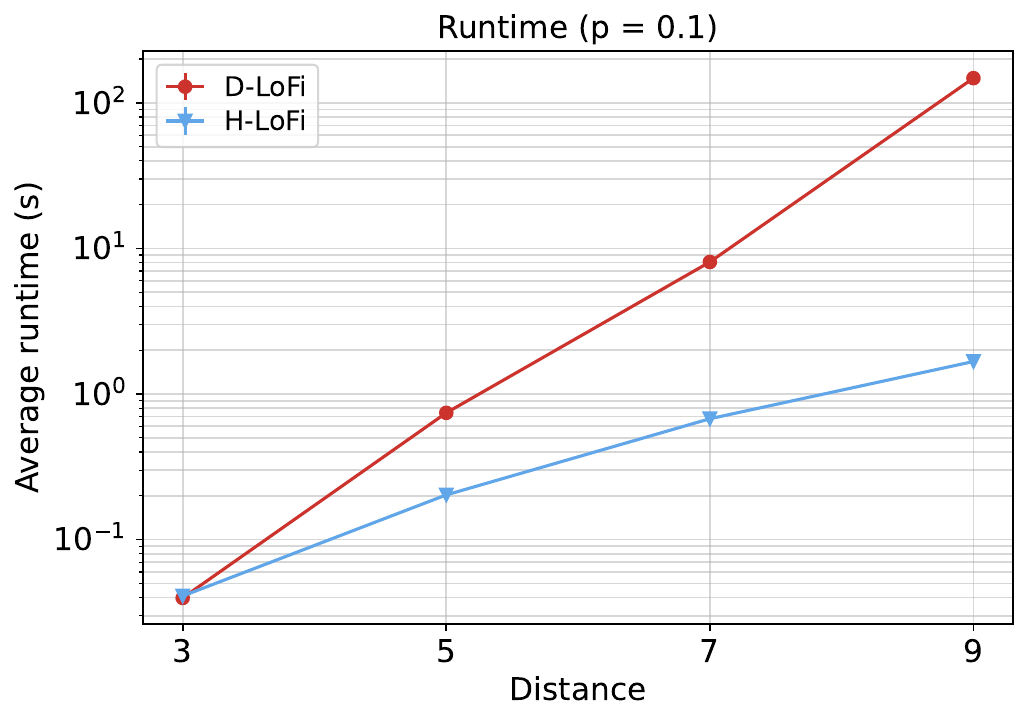}
        \subcaption{}
    \end{minipage}
    \begin{minipage}{0.49\textwidth}
        \includegraphics[width=0.95\linewidth]{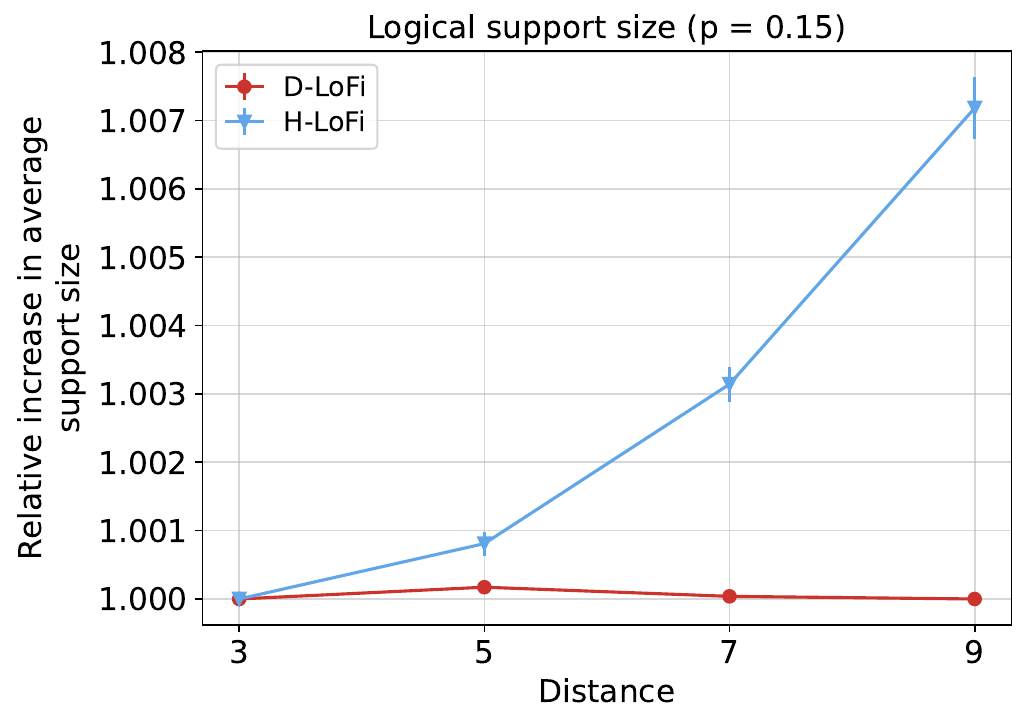}
        \subcaption{}
        \end{minipage}
    \caption{In (a) we plot the average runtime of D-LoFi and H-LoFi with increasing code distance for a fixed single-qubit loss probability $p=0.1$. We average the runtime over 5000 Monte Carlo samples taken at $p=0.1$ during the experiment to obtain the $g$-SPF thresholds in Fig.~\ref{fig: sc_thres_informal}. In (b) we plot the average relative increase of the support of the $g$-SPF pair $(X,Z)$ that D-LoFi and H-LoFi found, for the fixed loss probability $p=0.15$.} 
    \label{fig: sc_thres_speed_informal}
\end{figure*}
 
We compare the runtime of our deterministic algorithm D-LoFi to the runtime of the algorithm proposed in Ref.~\cite{morley-shortLosstolerantTeleportationLarge2019} in Fig.~\ref{fig: ms_rt_comp_informal}, using the `crazy graph' code as an example. (See Section~\ref{sec:graph_codes} for details.) The runtimes of the algorithm in Ref.~\cite{morley-shortLosstolerantTeleportationLarge2019} for more than 18 qubits exceeded our imposed wall time of about 16 hours. We compute the average runtime of D-LoFi by averaging over a large number of unique loss configurations. Fig.~\ref{fig: ms_rt_comp_informal} shows that our method is orders of magnitude faster than the one from Ref.~\cite{morley-shortLosstolerantTeleportationLarge2019}. Even though D-LoFi presumably scales exponentially due to the nature of the problem, the algorithm from Ref.~\cite{morley-shortLosstolerantTeleportationLarge2019} becomes infeasible much faster. One can therefore use our methods to compute thresholds for large codes or systematically search for codes suitable for the tasks discussed in Ref.~\cite{bellOptimizingGraphCodes2023}.

\begin{figure}[htbp]
\hspace{0.225\textwidth}
    \begin{minipage}{0.99\textwidth}
        \includegraphics[width=0.5\linewidth]{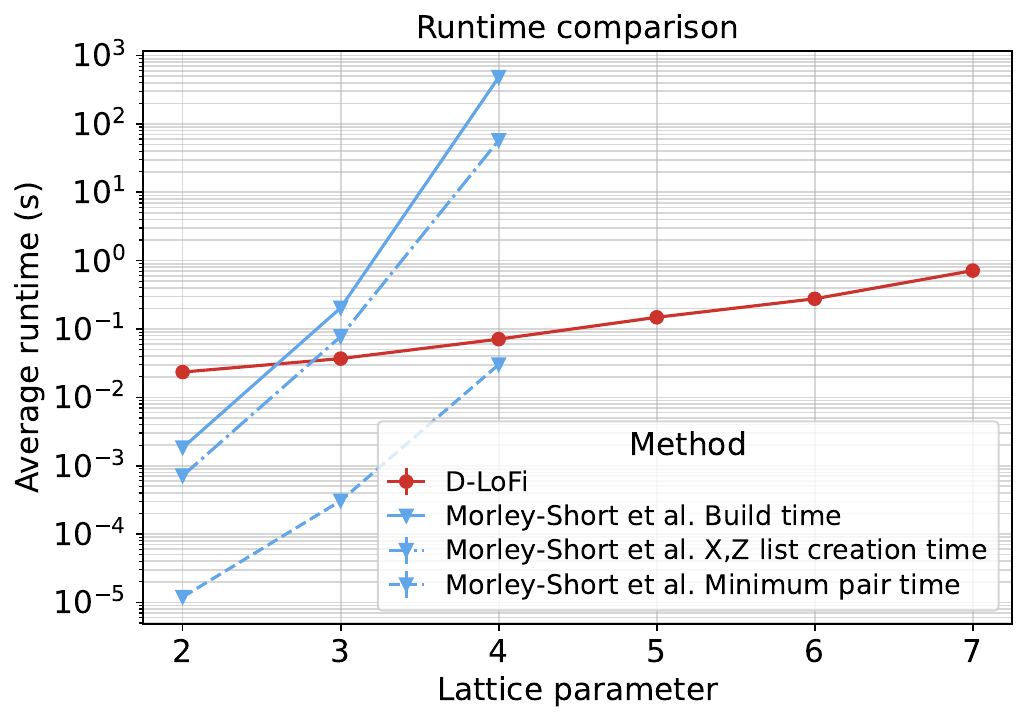}
    \end{minipage}
    \caption{We plot the runtime comparison of D-LoFi with the algorithm from Ref.~\cite{morley-shortLosstolerantTeleportationLarge2019} (labelled as \emph{Morley-Short et al.}). We sample either all possible loss configurations (for small codes) or 1000 unique ones for each lattice parameter and run both algorithms. Both algorithms find the minimal Target-SPF pair for one of the graph codes from Ref.~\cite{morley-shortLosstolerantTeleportationLarge2019}, parametrised by the lattice parameter $l$. The number of qubits $n$ is given by $n=l^2+2$. The algorithm from Ref.~\cite{morley-shortLosstolerantTeleportationLarge2019} consists of three steps, which we explain in more detail in Section~\ref{sec:numerics}. To find an SPF pair one needs to perform all steps; thus the total runtime is the sum of the three runtimes plotted here. We do not plot the runtimes of the algorithm from Ref.~\cite{morley-shortLosstolerantTeleportationLarge2019} for $l>4$ ($n>18$), as those exceed our imposed wall time limit of 60,000 seconds ($\sim$ 16 hours and 40 minutes). We compare with the average runtime of our algorithm D-LoFi, averaged over the same loss configurations.}
    \label{fig: ms_rt_comp_informal}
\end{figure}

As the runtime of the D-LoFi scales better than that of the algorithm from Ref.~\cite{morley-shortLosstolerantTeleportationLarge2019}, we can run D-LoFi for graph codes of significantly larger sizes than those considered in Ref.~\cite{morley-shortLosstolerantTeleportationLarge2019}, which we will discuss in more detail in Section~\ref{sec:numerics}.

\section{Preliminaries}\label{sec:background}

In this section, we will go over the basic preliminaries on stabiliser codes and their symplectic representation needed to understand this work. Then, we will define the stabiliser codes we consider here, which include graph codes and the planar surface code. We will also go over our notation.

 Here we do not go into any details on quantum error correction, but only give the required background on stabiliser codes. For an introduction to QEC, see for example Ref.~\cite{lidarQuantumErrorCorrection2013} or Ref.~\cite{gottesmanIntroductionQuantumError2009}. 
A stabiliser code is defined as the joint positive eigenspace of a stabiliser (group) $\Scurvy \subseteq \mathcal{P}_n$, which is an abelian subgroup of the $n$-qubit Pauli group ($\mathcal{P}_n$) such that $- \mathbf{I}_n \notin \Scurvy$, where $\mathbf{I}_n$ is the identity on $\Hil^{\otimes n}$. We can find a minimal generating set $\{g_i\}$ for $\Scurvy$. Throughout this paper we will write $r=|\{g_i\}|$ for the rank of $\Scurvy$. A stabiliser code defined on $n$ qubits with $r$ generators encodes $k:=n-r$ logical qubits. We will write $Q$ for the set of all qubits.
The logical space is determined by the normaliser of $\Scurvy$ in $\mathcal{P}_n$, i.e.
\begin{equation}
    \mathcal{N}(\Scurvy)=\{P \in \mathcal{P}_n| \, [P,s]=0 \,  \forall \,  s \in \Scurvy\}.
\end{equation}
In fact, the quotient group $\mathcal{N}(\Scurvy)/\Scurvy$ is isomorphic to $\mathcal{P}_k$. The elements of $\Ncurvy(\Scurvy)/\Scurvy$ are equivalence classes. When we want to refer to them as a set, i.e. when we refer to the \emph{set of all representatives} for a given $P \in \Ncurvy(\Scurvy)/\Scurvy$, we will write $[P]$. $P$ then is the symbol for the equivalence class. Note that the logical basis choice is arbitrary. The most important take-away from this section is that $[P]$ is a set of multiple equivalent representatives of the same logical operator. Therefore, logical information is encoded redundantly. For example, if one qubit is lost, one can often find some representative in $[P]$ that acts as the identity on this qubit and whose restriction onto all other preserved qubits is still a valid representative of $P$. This is the principle behind erasure decoding. 

We write $\Ncurvy(\Scurvy)/\Scurvy$ for the quotient set and $\Ncurvy(\Scurvy)-\Scurvy$ for the set difference.
$\mathcal{P}_n/\langle i \mathbf{I}_n \rangle$ is isomorphic to the symplectic space $\mathbb{F}_2^{2n}$, where the symplectic product is a bilinear form given by: %
\begin{equation} \label{eq:symplecticprod}
\langle \cdot , \cdot \rangle: \mathbb{F}_2^{2n} \times \mathbb{F}_2^{2n} \rightarrow \mathbb{F}_2:  \langle v , w \rangle= v \Omega w^T,  
\end{equation}
where $\Omega$ is a $2n \times 2n$ binary matrix:%
\begin{align}
    \Omega = \begin{bmatrix}
        0 & {I}_n\\
        {I}_n & 0
    \end{bmatrix}. \label{eq:Omega}
\end{align}
To identify each $v \in \Ftwotwon$ with a $P \in \mathcal{P}_n$, one introduces the mapping:
\begin{equation}\label{eq:iso_sympl_pauli}
 (1,0) \mapsto X, \, (0,1) \mapsto Z.
\end{equation}
We will use the symplectic vector notation $v=(x|z)$ to denote the $x$- and $z$-part in a $ v \in \Ftwotwon$. Eqn.~\eqref{eq:iso_sympl_pauli} can be straightforwardly extended to multiple qubits, i.e. to $\Ftwotwon$. Addition in $\Ftwotwon$ corresponds to multiplication in $\mathcal{P}_n$. For example,  multiplying $Y$ and $X$ corresponds to $(1|1)+(1|0)=(0|1)$. The symplectic product determines commutation of any $v$ and $w \in \Ftwotwon$, or rather the commutation of the corresponding Pauli operators. $v$ and $w$ commute if their symplectic product, as defined in Eqn.~\eqref{eq:symplecticprod}, is zero and anti-commute if their symplectic product is one. In a stabiliser code, in the absence of errors, it suffices to  use the symplectic space despite it being oblivious to phases. 

A \emph{stabiliser tableau} $T$ is a full-rank $r \times (2n)$ binary matrix where each row corresponds to the symplectic representation of each generator. The stabiliser group is abelian and thus any two rows $v,w$ fulfil $\langle v,w \rangle =0$. $\mathcal{N}(\Scurvy)$ corresponds to the kernel of $T\Omega$ and $\mathcal{N}(\Scurvy)/\Scurvy$ corresponds to $\ker (T\Omega)/\mathcal{R}(T)$, where we denote the row span of some matrix $M$ by $\mathcal{R}(M)$. $\mathcal{N}(\Scurvy)/\Scurvy$ is a symplectic vector space, a vector space with a symplectic form. A \emph{symplectic basis} of a symplectic vector space is a basis consisting of two sets $\{e_i\}$ and $\{f_i\}$, where $\langle e_i,e_j \rangle =0$ and $\langle f_i,f_j \rangle =0$ $\forall i,j$, such that $\langle e_i, f_i \rangle=1$ and $\langle e_i, f_j \rangle =0$ $\forall i \neq j$. A symplectic basis of a symplectic vector space always exists. (See Ref.~\cite{EslamiRad2024Symplectic} for details.)  

 Throughout this work we will often need to refer to the $i$-th component of a $P \in \Pn$. To do so, we will write $P_i$. Moreover, for a qubit index subset $I \subseteq Q$ we introduce the notation $P_I$ to refer to the \emph{restriction} of $P$ onto the qubits in $I$. Analogously, we introduce the notation $v[I]$, which we define as the restriction of $v \in \Ftwotwon$ on the indices in $\symplspace$ corresponding to the qubits in $I$. For example, if $I=\{0,1\}$ and $v=(0,0,1|1,0,0)$, then $v[I]=(0,0|1,0)$. 

The \emph{support} of any $P$ $\in \Pn$ is the subset of qubits on which $P$ acts non-trivially. For example, the support of $X_0X_1$ is $\{0,1\}$. 
 We will also introduce a short-form for the number of qubits on which two Paulis $P, Q$ $\in \Pn$ anti-commute \emph{qubit-wise}:
 \begin{equation}\label{eq:qubit_wise_anticomm}
     \alpha(P,Q)=\left | \{i \in (\supp(P) \cap \supp(Q))| \,  \{P_i,Q_i\}=0 \}\right|.
 \end{equation}

\subsection{Graph codes}\label{sec:graph_codes}
Here we discuss stabiliser codes defined with respect to graphs.
A stabiliser state is a state that is uniquely determined as the joint positive eigenstate of a stabiliser group, meaning the dimension of the joint positive eigenspace of the stabiliser group is one. A graph state is a stabiliser state induced by a graph $G=(V,E)$ with vertex set $V$ and edge set $E$. Given a graph, the stabilisers of the graph state are defined as follows: Qubits sit on the nodes. One can define a generating set for $\Scurvy$ by defining a stabiliser $s_v$ for each qubit $v \in V$:
\begin{equation}\label{eq:graph_state_generators}
    s_v=X_v \prod_{w \in \mathcal{N}(v)} Z_w,
\end{equation}
where $\mathcal{N}(v)$ denotes the neighbourhood of vertex $v$. In order to turn a graph state into a graph code, one usually entangles the graph state with an input qubit using CZ-gates, measures the input qubit in the $X$-basis and subsequently discards the input qubit, which results in a graph code encoding one qubit (see Ref.~\cite{Hwang_2016}). In Ref.~\cite{morley-shortLosstolerantTeleportationLarge2019}, the authors do not perform the measurement and the input qubit is still considered part of the code. Hence, given a graph the corresponding stabilisers are given by Eqn.~\eqref{eq:graph_state_generators} for all but the input node. $Z_I$, where $I$ denotes the input qubit, and $X_I \prod_{w \in \mathcal{N}(I)} Z_w$ are representatives for the two non-trivial generators of $\mathcal{N}(\Scurvy)/\Scurvy$. The fact that the single-qubit operator $Z_I$ is an element of $\Ncurvy(\Scurvy)-\Scurvy$ means that all logical information is destroyed if the input qubit is lost. In this work, in the case of a designated target qubit, we follow the same convention as in Ref.~\cite{morley-shortLosstolerantTeleportationLarge2019}, namely that the input and the target qubit are never lost. 

 We visualise the so-called `crazy graph' defining the crazy graph code from Ref.~\cite{morley-shortLosstolerantTeleportationLarge2019} in Fig.~\ref{fig:channels}. We will use the crazy graph to benchmark our algorithms against the ones in Ref.~\cite{morley-shortLosstolerantTeleportationLarge2019}. The crazy graph code is parametrised by a lattice parameter $l$, which is related to the qubit number $n$ by $n=l^2+2$.

\begin{figure}[t]
    \centering
 \begin{subfigure}[b]{0.45\textwidth}
        \centering
      \begin{tikzpicture}[xscale=1.50, yscale=1.10,
    nd/.style={draw, circle, inner sep=1.3pt, font=\scriptsize, fill=white, line width=0.7pt},
    onode/.style={draw=red!80!black, circle, inner sep=1.3pt, font=\scriptsize, fill=red!80!black, text=white, line width=0.7pt},
    ed/.style={gray!65, line width=0.9pt}]
    \draw[ed] (0.000,0.000) -- (1.000,1.500);
    \draw[ed] (0.000,0.000) -- (1.000,0.500);
    \draw[ed] (0.000,0.000) -- (1.000,-0.500);
    \draw[ed] (0.000,0.000) -- (1.000,-1.500);
    \draw[ed] (1.000,1.500) -- (1.000,0.500);
    \draw[ed] (1.000,0.500) -- (1.000,-0.500);
    \draw[ed] (1.000,-0.500) -- (1.000,-1.500);
    \draw[ed] (2.000,1.500) -- (2.000,0.500);
    \draw[ed] (2.000,0.500) -- (2.000,-0.500);
    \draw[ed] (2.000,-0.500) -- (2.000,-1.500);
    \draw[ed] (1.000,1.500) -- (2.000,1.500);
    \draw[ed] (1.000,0.500) -- (2.000,0.500);
    \draw[ed] (1.000,-0.500) -- (2.000,-0.500);
    \draw[ed] (1.000,-1.500) -- (2.000,-1.500);
    \draw[ed] (1.000,0.500) -- (2.000,1.500);
    \draw[ed] (1.000,-0.500) -- (2.000,0.500);
    \draw[ed] (1.000,-1.500) -- (2.000,-0.500);
    \draw[ed] (3.000,1.500) -- (3.000,0.500);
    \draw[ed] (3.000,0.500) -- (3.000,-0.500);
    \draw[ed] (3.000,-0.500) -- (3.000,-1.500);
    \draw[ed] (2.000,1.500) -- (3.000,1.500);
    \draw[ed] (2.000,0.500) -- (3.000,0.500);
    \draw[ed] (2.000,-0.500) -- (3.000,-0.500);
    \draw[ed] (2.000,-1.500) -- (3.000,-1.500);
    \draw[ed] (2.000,1.500) -- (3.000,0.500);
    \draw[ed] (2.000,0.500) -- (3.000,-0.500);
    \draw[ed] (2.000,-0.500) -- (3.000,-1.500);
    \draw[ed] (4.000,1.500) -- (4.000,0.500);
    \draw[ed] (4.000,0.500) -- (4.000,-0.500);
    \draw[ed] (4.000,-0.500) -- (4.000,-1.500);
    \draw[ed] (3.000,1.500) -- (4.000,1.500);
    \draw[ed] (3.000,0.500) -- (4.000,0.500);
    \draw[ed] (3.000,-0.500) -- (4.000,-0.500);
    \draw[ed] (3.000,-1.500) -- (4.000,-1.500);
    \draw[ed] (3.000,0.500) -- (4.000,1.500);
    \draw[ed] (3.000,-0.500) -- (4.000,0.500);
    \draw[ed] (3.000,-1.500) -- (4.000,-0.500);
    \draw[ed] (4.000,1.500) -- (5.000,0.000);
    \draw[ed] (4.000,0.500) -- (5.000,0.000);
    \draw[ed] (4.000,-0.500) -- (5.000,0.000);
    \draw[ed] (4.000,-1.500) -- (5.000,0.000);
    \node[nd] at (0.000,0.000) {0};
    \node[nd] at (1.000,1.500) {1};
    \node[nd] at (1.000,0.500) {2};
    \node[nd] at (1.000,-0.500) {3};
    \node[nd] at (1.000,-1.500) {4};
    \node[nd] at (2.000,1.500) {5};
    \node[nd] at (2.000,0.500) {6};
    \node[nd] at (2.000,-0.500) {7};
    \node[nd] at (2.000,-1.500) {8};
    \node[nd] at (3.000,1.500) {9};
    \node[nd] at (3.000,0.500) {10};
    \node[nd] at (3.000,-0.500) {11};
    \node[nd] at (3.000,-1.500) {12};
    \node[nd] at (4.000,1.500) {13};
    \node[nd] at (4.000,0.500) {14};
    \node[nd] at (4.000,-0.500) {15};
    \node[nd] at (4.000,-1.500) {16};
    \node[onode] at (5.000,0.000) {17};
  \end{tikzpicture}
      \caption{Triangular graph}
         \label{fig:first}
     \end{subfigure}
     \hfill
    \begin{subfigure}[b]{0.48\textwidth}
        \centering
        \begin{tikzpicture}[xscale=1.60, yscale=1.10,
    nd/.style={draw, circle, inner sep=1.3pt, font=\scriptsize, fill=white, line width=0.7pt},
    onode/.style={draw=red!80!black, circle, inner sep=1.3pt, font=\scriptsize, fill=red!80!black, text=white, line width=0.7pt},
    ed/.style={gray!65, line width=0.9pt}]
    \draw[ed] (0.000,0.000) -- (1.000,1.500);
    \draw[ed] (0.000,0.000) -- (1.000,0.500);
    \draw[ed] (0.000,0.000) -- (1.000,-0.500);
    \draw[ed] (0.000,0.000) -- (1.000,-1.500);
    \draw[ed] (1.000,1.500) -- (2.000,1.500);
    \draw[ed] (1.000,0.500) -- (2.000,1.500);
    \draw[ed] (1.000,-0.500) -- (2.000,1.500);
    \draw[ed] (1.000,-1.500) -- (2.000,1.500);
    \draw[ed] (1.000,1.500) -- (2.000,0.500);
    \draw[ed] (1.000,0.500) -- (2.000,0.500);
    \draw[ed] (1.000,-0.500) -- (2.000,0.500);
    \draw[ed] (1.000,-1.500) -- (2.000,0.500);
    \draw[ed] (1.000,1.500) -- (2.000,-0.500);
    \draw[ed] (1.000,0.500) -- (2.000,-0.500);
    \draw[ed] (1.000,-0.500) -- (2.000,-0.500);
    \draw[ed] (1.000,-1.500) -- (2.000,-0.500);
    \draw[ed] (1.000,1.500) -- (2.000,-1.500);
    \draw[ed] (1.000,0.500) -- (2.000,-1.500);
    \draw[ed] (1.000,-0.500) -- (2.000,-1.500);
    \draw[ed] (1.000,-1.500) -- (2.000,-1.500);
    \draw[ed] (2.000,1.500) -- (3.000,1.500);
    \draw[ed] (2.000,0.500) -- (3.000,1.500);
    \draw[ed] (2.000,-0.500) -- (3.000,1.500);
    \draw[ed] (2.000,-1.500) -- (3.000,1.500);
    \draw[ed] (2.000,1.500) -- (3.000,0.500);
    \draw[ed] (2.000,0.500) -- (3.000,0.500);
    \draw[ed] (2.000,-0.500) -- (3.000,0.500);
    \draw[ed] (2.000,-1.500) -- (3.000,0.500);
    \draw[ed] (2.000,1.500) -- (3.000,-0.500);
    \draw[ed] (2.000,0.500) -- (3.000,-0.500);
    \draw[ed] (2.000,-0.500) -- (3.000,-0.500);
    \draw[ed] (2.000,-1.500) -- (3.000,-0.500);
    \draw[ed] (2.000,1.500) -- (3.000,-1.500);
    \draw[ed] (2.000,0.500) -- (3.000,-1.500);
    \draw[ed] (2.000,-0.500) -- (3.000,-1.500);
    \draw[ed] (2.000,-1.500) -- (3.000,-1.500);
    \draw[ed] (3.000,1.500) -- (4.000,1.500);
    \draw[ed] (3.000,0.500) -- (4.000,1.500);
    \draw[ed] (3.000,-0.500) -- (4.000,1.500);
    \draw[ed] (3.000,-1.500) -- (4.000,1.500);
    \draw[ed] (3.000,1.500) -- (4.000,0.500);
    \draw[ed] (3.000,0.500) -- (4.000,0.500);
    \draw[ed] (3.000,-0.500) -- (4.000,0.500);
    \draw[ed] (3.000,-1.500) -- (4.000,0.500);
    \draw[ed] (3.000,1.500) -- (4.000,-0.500);
    \draw[ed] (3.000,0.500) -- (4.000,-0.500);
    \draw[ed] (3.000,-0.500) -- (4.000,-0.500);
    \draw[ed] (3.000,-1.500) -- (4.000,-0.500);
    \draw[ed] (3.000,1.500) -- (4.000,-1.500);
    \draw[ed] (3.000,0.500) -- (4.000,-1.500);
    \draw[ed] (3.000,-0.500) -- (4.000,-1.500);
    \draw[ed] (3.000,-1.500) -- (4.000,-1.500);
    \draw[ed] (4.000,1.500) -- (5.000,0.000);
    \draw[ed] (4.000,0.500) -- (5.000,0.000);
    \draw[ed] (4.000,-0.500) -- (5.000,0.000);
    \draw[ed] (4.000,-1.500) -- (5.000,0.000);
    \node[nd] at (0.000,0.000) {0};
    \node[nd] at (1.000,1.500) {1};
    \node[nd] at (1.000,0.500) {2};
    \node[nd] at (1.000,-0.500) {3};
    \node[nd] at (1.000,-1.500) {4};
    \node[nd] at (2.000,1.500) {5};
    \node[nd] at (2.000,0.500) {6};
    \node[nd] at (2.000,-0.500) {7};
    \node[nd] at (2.000,-1.500) {8};
    \node[nd] at (3.000,1.500) {9};
    \node[nd] at (3.000,0.500) {10};
    \node[nd] at (3.000,-0.500) {11};
    \node[nd] at (3.000,-1.500) {12};
    \node[nd] at (4.000,1.500) {13};
    \node[nd] at (4.000,0.500) {14};
    \node[nd] at (4.000,-0.500) {15};
    \node[nd] at (4.000,-1.500) {16};
    \node[onode] at (5.000,0.000) {17};
  \end{tikzpicture}
        \caption{Crazy graph}
        \label{fig:second}
    \end{subfigure}
\caption{The triangular graph and the crazy graph ($4\times4$, i.e.\ 18 qubits) from Ref.~\cite{morley-shortLosstolerantTeleportationLarge2019}. The input qubit is qubit 0 and the red node is the target qubit.}
\label{fig:channels}
\end{figure}

Technically, we discuss families of graph codes. The graph codes shown in Fig.~\ref{fig:channels} are $(4 \times 4)$, i.e. $l=4$. One can straightforwardly extend them to larger $(l \times l)$ codes by adding columns or rows of nodes and following the same pattern for the edges. 
 
\subsection{Planar surface code}\label{sec:surface_code}

In this section, we briefly introduce the planar surface code on a square lattice with $\Lambda\times(\Lambda-1)$ vertices. We will define its stabilisers in a slightly informal manner and will refrain from mathematical rigour beyond what is necessary to understand the proof for the $g$-SPF threshold. For a more detailed introduction to surface codes see~\cite{dennisTopologicalQuantumMemory2002,fowlerSurfaceCodesPractical2012,lidarQuantumErrorCorrection2013}. The most important take-away from the paragraphs below is that the representatives of logical $X$ operators are edges along a path from left to right in a lattice that is $\Lambda$ edges wide and $\Lambda-1$ edges high, and the representatives of logical $Z$ operators are edges along a path from top to bottom in its dual lattice, which is $\Lambda-1$ edges wide and $\Lambda$ edges high, which will enable the application of standard percolation theory on a square lattice later. 

The surface code is a topological stabiliser code defined on a lattice. There are multiple variations of the surface code, but here we consider the planar surface code (with open boundary conditions) defined on a square lattice with vertex set $V$, edge set $E$ and face set $F$. Qubits are attached to each edge $e \in E$. The surface code is a CSS code~\cite{calderbankGoodQuantumErrorcorrecting1996,steaneMultipleparticleInterferenceQuantum1996,nielsenQuantumComputationQuantum2010} and has $X$-type stabilisers (plaquettes) that are defined as the product of $X$-operators on edges along a face of the square lattice and $Z$-type stabilisers (stars) that are products of $Z$-operators neighbouring a vertex $v$. To be precise, an $X$-type stabiliser is defined as $X_f:=\prod_{e \in N_E(f)} X_e$, where $N_E(f)$ denotes the set of edges bounding the face $f \in F$, and a $Z$-type stabiliser is defined as $Z_v:=\prod_{e \in N_E(v)} Z_e$, where $N_E(v)$ denotes the set of edges incident to the vertex $v \in V$. See Fig.~\ref{fig:sc-stab} for a visualisation. 

\begin{figure}[htbp]\centering
\begin{subfigure}[b]{0.48\textwidth}\centering
\resizebox{\linewidth}{!}{%
\begin{tikzpicture}[scale=1.0,
 latt/.style={gray!65,line width=0.5pt},
 xstab/.style={blue!60!white,line width=1.6pt},
 zstab/.style={red!80!black,line width=1.6pt},
 q/.style={draw,circle,fill=white,inner sep=1.3pt},
 vtx/.style={draw,circle,fill=black,inner sep=0.8pt},
 bl/.style={font=\footnotesize\itshape}]
\useasboundingbox (-1.75,-2.75) rectangle (4.95,5.25);
\fill[blue!12] (1,2) rectangle (2,3);
\foreach \y in {0,...,4}{\foreach \x in {0,...,2}{\draw[latt] (\x,\y)--(\x+1,\y);}}
\foreach \x in {0,...,3}{\foreach \y in {0,...,3}{\draw[latt] (\x,\y)--(\x,\y+1);}}
\foreach \y in {0,...,4}{\draw[latt] (-0.6,\y)--(0,\y); \draw[latt] (3,\y)--(3.6,\y);}
\draw[xstab] (1,2)--(2,2)--(2,3)--(1,3)--cycle;
\draw[zstab] (1,1)--(3,1); \draw[zstab] (2,0)--(2,2);
\node[red!80!black,fill=white,draw,diamond,inner sep=1.1pt] at (2,1) {};
\foreach \y in {0,...,4}{\foreach \x in {0.5,1.5,2.5,-0.3,3.3}{\node[q] at (\x,\y){};}}
\foreach \x in {0,...,3}{\foreach \y in {0.5,1.5,2.5,3.5}{\node[q] at (\x,\y){};}}
\foreach \x in {0,...,3}{\foreach \y in {0,...,4}{\node[vtx] at (\x,\y){};}}
\node[blue!60!white,font=\small] at (1.5,2.5) {$X_f$};
\node[red!80!black,font=\small] at (2.6,1.4) {$Z_v$};
\node[bl] at (-1.15,2){rough};
\node[bl] at (4.15,2){rough};
\node[bl] at (1.5,4.55){smooth};
\node[bl] at (1.5,-0.55){smooth};
\draw[xstab] (-0.4,-1.35)--(0.25,-1.35); \node[anchor=west,font=\scriptsize] at (0.3,-1.35){$X$-type stabiliser (plaquette, face)};
\draw[zstab] (-0.4,-1.85)--(0.25,-1.85); \node[anchor=west,font=\scriptsize] at (0.3,-1.85){$Z$-type stabiliser (star, vertex)};
\node[q] at (-0.075,-2.35){}; \node[anchor=west,font=\scriptsize] at (0.3,-2.35){qubit (on each edge)};
\end{tikzpicture}
}
\caption{Stabilisers and boundaries.}\label{fig:sc-stab}
\end{subfigure}\hfill
\begin{subfigure}[b]{0.48\textwidth}\centering
\resizebox{\linewidth}{!}{%
\begin{tikzpicture}[scale=1.0,
 latt/.style={gray!45,line width=0.4pt},
 q/.style={draw=gray!60,circle,fill=white,inner sep=1.0pt},
 vtx/.style={draw=gray!60,circle,fill=gray!60,inner sep=0.7pt},
 dual/.style={orange!85!black,dashed,line width=0.8pt},
 dualf/.style={orange!30,dashed,line width=0.4pt},
 dnode/.style={draw=orange!85!black,rectangle,fill=white,inner sep=1.3pt},
 xlog/.style={blue!60!white,line width=1.8pt},
 zlog/.style={red!80!black,line width=1.8pt},
 bl/.style={font=\footnotesize\itshape}]
\useasboundingbox (-1.75,-2.75) rectangle (4.95,5.25);
\foreach \y in {0,...,4}{\foreach \x in {0,...,2}{\draw[latt] (\x,\y)--(\x+1,\y);}}
\foreach \x in {0,...,3}{\foreach \y in {0,...,3}{\draw[latt] (\x,\y)--(\x,\y+1);}}
\foreach \y in {0,...,4}{\draw[latt] (-0.6,\y)--(0,\y); \draw[latt] (3,\y)--(3.6,\y);}
\foreach \y in {0.5,1.5,2.5,3.5}{
  \draw[dualf] (-0.3,\y)--(0.5,\y);
  \draw[dualf] (0.5,\y)--(1.5,\y)--(2.5,\y);
  \draw[dualf] (2.5,\y)--(3.3,\y);}
\foreach \x in {-0.3,0.5,1.5,2.5,3.3}{\draw[dualf] (\x,0.5)--(\x,3.5);}
\draw[dual] (2.5,-0.4) -- (2.5,4.4);
\foreach \y in {0,...,4}{\draw[zlog] (2,\y)--(3,\y);}
\draw[xlog] (-0.6,2)--(3.6,2);
\foreach \y in {0,...,4}{\foreach \x in {0.5,1.5,2.5,-0.3,3.3}{\node[q] at (\x,\y){};}}
\foreach \x in {0,...,3}{\foreach \y in {0.5,1.5,2.5,3.5}{\node[q] at (\x,\y){};}}
\foreach \x in {0,...,3}{\foreach \y in {0,...,4}{\node[vtx] at (\x,\y){};}}
\foreach \y in {0.5,1.5,2.5,3.5}{\node[dnode] at (2.5,\y){};}
\node[draw=black,circle,line width=1pt,inner sep=3pt] at (2.5,2) {};
\node[blue!60!white,font=\small] at (-1.15,2){$\bar X$};
\node[red!80!black,font=\small] at (2.5,5.02){$\bar Z$};
\node[bl] at (-1.25,1.3){rough};
\node[bl] at (4.25,1.3){rough};
\node[bl] at (1.5,4.55){smooth};
\node[bl] at (1.5,-0.55){smooth};
\draw[xlog] (-0.4,-1.35)--(0.25,-1.35); \node[anchor=west,font=\scriptsize] at (0.3,-1.35){logical $\bar X$: edge path, rough$\to$rough};
\draw[zlog] (-0.4,-1.85)--(0.25,-1.85); \node[anchor=west,font=\scriptsize] at (0.3,-1.85){logical $\bar Z$: dual path top$\to$bottom};
\draw[dual] (-0.4,-2.35)--(0.25,-2.35); \node[anchor=west,font=\scriptsize] at (0.3,-2.35){dual lattice};
\end{tikzpicture}
}
\caption{Logical operators and dual lattice.}\label{fig:sc-log}
\end{subfigure}
\caption{The planar surface code on a square lattice with $\Lambda\times(\Lambda-1)$ vertices (here $\Lambda=5$) and qubits on edges. (a) $X$-type stabilisers are plaquettes (products of $X$ on the edges around a face) and $Z$-type stabilisers are stars (products of $Z$ on the edges around a vertex); the left/right boundaries are rough and the top/bottom boundaries are smooth. (b) Any path of edges between two rough boundaries is a logical $X$ representative (left to right); any path between the two rough boundaries of the \emph{dual} lattice is a logical $Z$ representative (top to bottom), whose corresponding direct-lattice edges are shown in red.}\label{fig:surface}
\end{figure}

So far we have not considered any boundary conditions. We only defined $X$- and $Z$-type stabilisers in the bulk.
We consider a surface code with so-called `smooth boundaries'  at the top and bottom and `rough boundaries' at the left and right side. It can be constructed by first starting out with an infinite lattice and then considering a sublattice with $\Lambda$ rows and $\Lambda-1$ columns of vertices as the lattice on which we define the code. The smooth boundary is the edge set along the top or bottom of the lattice. All faces above or below the boundaries and all edges that only border on such a face are removed. The resulting $Z$-type stabilisers now become products of single-qubit $Z$ operators on the three edges in the neighbourhood of each vertex on the smooth boundary. The rough boundary is a result of removing vertices along the left and right side of the lattice. This induces the deletion of any edge that has two deleted vertices in its neighbourhood. The resulting $X$-type stabilisers along a rough boundary are still products of edges along one face, only including the non-deleted edges, meaning that along the rough boundary they are plaquettes that are open on the boundary. For simplification one can introduce the dual lattice, where faces and vertices are exchanged and edges connect the new vertices. Then, smooth and rough boundaries and plaquette and star stabilisers are switched. We added rough and smooth boundaries and a sketch of the dual lattice in Fig.~\ref{fig:surface}. The planar surface code has $\Lambda^2$ qubits on horizontal edges ($\Lambda$ per row, including the two edges dangling from the rough boundaries) and $(\Lambda-1)^2$ qubits on vertical edges, i.e.\ $n=\Lambda^2+(\Lambda-1)^2$ in total. Measured in edges, the lattice is thus $\Lambda$ wide and $\Lambda-1$ high; we refer to it as the $\Lambda\times(\Lambda-1)$ lattice. Its dual lattice has the $(\Lambda-1)\times\Lambda$ faces (including the open plaquettes along the rough boundaries) as vertices and is $\Lambda-1$ edges wide and $\Lambda$ edges high; we refer to it as the $(\Lambda-1)\times\Lambda$ dual lattice.

The planar surface code encodes one logical qubit. Any product of single-qubit $X$ operators on edges along a path from a rough boundary to another one, i.e. the edges along a path from left to right, is a representative of logical $X$. Any product of single-qubit $Z$ operators on edges along a path in the \emph{dual} lattice from a rough boundary to another rough boundary is a representative of logical $Z$. This is visualised in Fig.~\ref{fig:sc-log}. 
To obtain the representatives of $Z$ in the actual lattice (and not its dual), one maps each edge in the path on the dual lattice to its corresponding edge in the lattice. We show such a mapping in Fig.~\ref{fig:sc-log}.  
However, the dual lattice serves as a more convenient way to represent $Z$ representatives, which we will also use in the rest of this work. In summary, the representatives of $\bar X$ are the paths from left to right in the $\Lambda\times(\Lambda-1)$ lattice and the representatives of $\bar Z$ are the paths from top to bottom in the $(\Lambda-1)\times\Lambda$ dual lattice. Both lattices are a square lattice of $\Lambda$ edges by $(\Lambda-1)$ edges, one rotated by $90^{\circ}$ with respect to the other, and the minimum-weight representatives of $\bar X$ and $\bar Z$ both have weight $\Lambda$.

 \section{Stabiliser path finding}\label{sec:formalisation}

In this section we discuss the task of \emph{stabiliser path finding} (SPF), as first defined in Ref.~\cite{morley-shortLosstolerantTeleportationLarge2019}. We first give an overview of the idea, describe the method to solve SPF described in Ref.~\cite{morley-shortLosstolerantTeleportationLarge2019} and then formalise SPF and our relaxed version of it, where the goal is to localise information onto at most $g$ qubits. 

\subsection{SPF in previous literature}\label{sec:SPFprevious}

 In SPF the goal is to find two representatives $X$ and $Z$ that only anti-commute on a target qubit $t$ \emph{and} that both have no support on any lost qubits. We will call this pair an `SPF pair' or a `measurement pattern'. Measuring $X$ and $Z$ qubit-wise on all but the target qubit creates representatives of logical $X$ and $Z$ that correspond to single-qubit Paulis on the target qubit. Hence, information has been localised onto the target qubit.

The method in Ref.~\cite{morley-shortLosstolerantTeleportationLarge2019,sammorley_short_spf_2019} consists of listing all representatives of $X$ and $Z$ that can yield a valid measurement pattern. The main observation is that not every representative is suitable for a pattern: those that can be written as a product of some other representative $X'$ and a stabiliser $s$, where the supports of $X'$ and $s$ do not overlap, need not be considered. Thus, such operators, dubbed `trivial logical operators', are first filtered out. The authors do so by first creating a list of all non-trivial stabilisers, which are defined analogously, and then multiplying a representative of each logical class by non-trivial stabilisers and checking whether the result is non-trivial. They then create a list of all combinations of $X$ and $Z$ that anti-commute qubit-wise only on the target qubit. Then, for each set of lost qubits, one can search in the list for valid measurement patterns. The subsequent list search is fast; however, even after filtering out trivial logical operators, the creation of the list scales exponentially with the number of qubits, rendering it infeasible for large codes. Deciding whether a given representative is trivial or not also constitutes a complicated problem.
  
Moreover, the suggested teleportation protocol in Ref.~\cite{morley-shortLosstolerantTeleportationLarge2019} using SPF makes two assumptions: Firstly, the authors consider the designated input qubit as part of the graph code, in contrast to the usual understanding of graph codes, where the input qubit is entangled with a graph state and subsequently discarded (see for example Ref.~\cite{Hwang_2016}). Therefore, the code still contains a single-qubit logical operator supported only on the input qubit. If this input qubit is lost, all logical information is lost. The teleportation procedure from Ref.~\cite{morley-shortLosstolerantTeleportationLarge2019} thus requires a physical model where the input qubit may be part of more robust hardware, such as superconducting qubits, and is not sent to the receiver; or, if the input qubit, too, is encoded into a photon and sent to the receiver, presupposes that the input qubit is never lost. As a consequence, one either needs to be able to preserve long-range entanglement or think of some error mitigation/correction schemes that protect individual qubits from loss. Secondly, the target qubit (or output qubit, as they call it) must not be lost either. 

It should be noted that Ref.~\cite{PhysRevA.110.052617} discusses the localisation of logical information onto one qubit in a stabiliser code via local operations and classical communication (LOCC), where the authors also prove that a protocol to localise information to a specified target can be found in polynomial time. The main difference from the setting in Ref.~\cite{morley-shortLosstolerantTeleportationLarge2019} is that in the latter the task is to search for a localisation protocol \emph{after} qubit loss, for which the arguments in Ref.~\cite{PhysRevA.110.052617} do not apply. We suspect that the hardness of the localisation task stems from qubit loss.

 \subsection{Formalisation}
 
In this section, we formalise Target-SPF and our generalisation $g$-SPF into an optimisation problem. To be clear, we call the problem the authors of Ref.~\cite{morley-shortLosstolerantTeleportationLarge2019} address `Target-SPF' and our generalisation `$g$-SPF'. In both cases the goal is to find a pair of logical representatives $X \in [X]$ and $Z \in [Z]$. In Ref.~\cite{morley-shortLosstolerantTeleportationLarge2019}, the authors establish that in order for the localisation onto a target qubit $t$ to be successful, $X$ and $Z$ must fulfil the so-called \emph{stabiliser path finding} conditions, which we summarise as the \emph{Lost-Qubits condition} and the \emph{Target condition}: 
\begin{definition}[Lost-Qubits condition]\label{def:Lemptyset}
    Logical operators $X \in [X]$, $Z \in [Z]$ fulfil the Lost-Qubits condition if:
     \begin{equation} (\supp(X) \cup \supp(Z)) \cap L=\emptyset, \end{equation} 
     where $L \subseteq Q$ is the set of lost qubits.
\end{definition}
Recall that $Q$ is the set of all qubits.
\begin{definition}[Target condition]\label{def:tanti}
Operators $X \in [X]$ and $Z \in [Z]$ fulfil the Target condition if:
     \begin{equation} \{X_t,Z_t\}=0, \quad [X_i,Z_i]=0 \quad \forall \, i \neq t. \end{equation} 
\end{definition}
These two conditions specify that valid logical operators $X$ and $Z$ must satisfy qubit-wise anti-commutation only on the target qubit $t$ and commutation otherwise and they should have no support on the qubits that have been lost.

After creating the list of all measurement patterns that fulfil the Target condition, the authors of Ref.~\cite{morley-shortLosstolerantTeleportationLarge2019} search for SPF pairs that fulfil the Lost-Qubits condition and that minimise the total support of $X$ and $Z$, i.e. $|\supp(X) \cup \supp(Z)|$. An SPF pair with minimal total support is preferable to make the localisation procedure more robust in the case of imperfect measurements. In this work, we consider the minimisation of the support of logical operators as our main goal. Our deterministic algorithm D-LoFi finds the exact minimum, whereas our heuristic algorithm H-LoFi prefers solutions $(X,Z)$ with small support due to its cost function. Note, however, that to merely find a measurement pattern, minimisation is not strictly necessary.

We consider two types of SPF: The first, dubbed `Target-SPF', is the same SPF task as in Ref.~\cite{morley-shortLosstolerantTeleportationLarge2019}. We summarise it below.
\begin{iobox}[prob:tSPF]{Target-SPF}
\tcbsubtitle{Input}
Stabiliser group $\Scurvy$, target qubit index $t$, set of lost qubits $L \subseteq Q$.

\tcbsubtitle{Output}
An $X \in [X]$, $Z \in [Z]$ such that:

\begin{itemize}
    \item The Lost-Qubits condition
    \item The Target condition
    \item optional: $|\supp(X) \cup \supp(Z)|$ minimal
\end{itemize}
\end{iobox}
When computing thresholds for Target-SPF, the authors of Ref.~\cite{morley-shortLosstolerantTeleportationLarge2019} assume that the target qubit is never lost. This may not be a realistic assumption. Moreover, there may be other potential target qubits for which an $SPF$-pair may be found, even if there does not exist one for qubit $t$. Localisation of information may thus be hindered by the arbitrary assignment of a target. We therefore introduce the second type of SPF-problem, which we call `$g$-SPF', where we relax the Target condition (Definition~\ref{def:tanti}) to the $g$-condition:
\begin{definition}[$g$-condition]\label{def:gcondition}
    Operators $X \in [X]$, $Z \in [Z]$ fulfil the $g$-condition if:
     \begin{equation} \alpha(X,Z) \leq g. \end{equation}
\end{definition}
Here $\alpha(X,Z)$ is defined as the number of qubits on which $X$ and $Z$ anti-commute, see Eqn.~\eqref{eq:qubit_wise_anticomm}. The $g$-SPF problem can thus be defined as: 
\begin{iobox}[prob:gSPF]{$g$-SPF}
\tcbsubtitle{Input}
Stabiliser group $\Scurvy$, integer $g$, set of lost qubits $L \subseteq Q$.
 
\tcbsubtitle{Output}
An $X \in [X]$, $Z \in [Z]$ such that $X$ and $Z$ fulfil the following conditions :

\begin{itemize}
    \item The Lost-Qubits condition
    \item The $g$-condition
    \item optional: $|\supp(X) \cup \supp(Z)|$ minimal
\end{itemize}
\end{iobox}
Note that if $g$ is one, $g$-SPF becomes Target-SPF with an arbitrary target qubit.  

\section{Localisation threshold for the planar surface code on a square lattice}\label{sec:threshold_surfacecode}
 
Here, we show that the planar surface code on a square lattice (defined in Section~\ref{sec:surface_code}) has a threshold for localisation via $g$-SPF, where $g$ is a constant. We will first clarify some notation and definitions and then state the formal version of Theorem~\ref{thm:threshold_surface_informal}. The rest of this section is then structured as follows: In Section~\ref{sec:relation_percolation} we will introduce basic percolation theory and how it relates to the $g$-SPF problem on the planar surface code. Then, in Section~\ref{sec:disjointness}, we will introduce the disjointness of stabiliser codes and derive a lemma that relates it to the intersection size of the $(X,Z)$ logical pairs in the surface code. We will combine both in Section~\ref{sec:proof_spfthreshold_surface}, where we prove Theorem~\ref{thm:threshold_surface_informal}.

Recall that we write $\{\Scurvy_n\}$ for a family of codes indexed by the code size $n$ (the number of qubits). Qubit losses are i.i.d. with respect to a qubit loss probability $p$.  We will write $Q$ for the set of all $n$ physical qubits, i.e. $|Q|=n$. We will now define what it means for a family $\{\Scurvy_n\}$ to exhibit a $g$-SPF threshold.  Given a loss probability, a set of lost qubits $L \subseteq Q$ can be sampled according to the i.i.d. loss probability distribution $P_p:  \mathbf{P}(Q) \rightarrow [0,1]$, where we write $\mathbf{P}(S)$ for the power set of a set $S$:
 \begin{equation}\label{eq:loss_prob}
     P_p(L)= p^{|L|}(1-p)^{n-|L|}.
 \end{equation}
 It is helpful to be precise about the terms we use. Technically speaking, we wish to define the probability of the event that there exists a $g$-SPF pair. Here, an event is a subset of loss configurations. Thus, the event that there exists a $g$-SPF pair is defined as:
\begin{equation}\label{eq:loss_config_SPFpairexists}
L_{\mathrm{success}}(\leq g):=\{L \in \mathbf{P}(Q)\big | g\text{-SPF pair exists for }L\}.
\end{equation}
The probability of  $L_{\mathrm{success}}(\leq g)$ is determined by:
\begin{equation}\label{eq:PgSPF}
    P_p( L_{\mathrm{success}}(\leq g))=\sum_{L \in L_{\mathrm{success}}(\leq g)}P_p(L),
\end{equation}
where $P_p(L)$ is given by Eqn.~\eqref{eq:loss_prob}. 

We say that a family of codes has an $\mathcal{O}(1)$-SPF threshold if for sufficiently large $n$, a $g$-SPF pair exists with high likelihood, where $g$ is a constant: 

\begin{definition}[Threshold for $\mathcal{O}(1)$-SPF] \label{def:threshold}
A family of codes $\{\Scurvy_n\}$ has an $\order{1}$-SPF threshold $p'$ if for any qubit loss probability $p<p'$ there exists a constant $g$ such that the following holds for the $g$-SPF success probability, as defined in Eqn.~\eqref{eq:PgSPF}: 
\begin{equation}
\lim_{n \rightarrow \infty} P_p( L_{\mathrm{success}}(\leq g)) =1,
\end{equation}
where $ L_{\mathrm{success}}(\leq g)$ is given by Eqn.~\eqref{eq:loss_config_SPFpairexists}.
\end{definition}
Note that $g$ may be different for each $p <p'$. The important point is that $g$ remains constant with respect to $n$. 

We will show that the $\mathcal{O}(1)$-SPF threshold for the planar surface code is at least $p'=1/2$.  
\begin{restatedtheorem}[$\order{1}$-SPF threshold for the planar surface code]\label{thm:threshold_surface_informal}
Let $\{\Scurvy_n\}$ be a family of planar surface codes. Consider the i.i.d. qubit erasure channel where each qubit is lost with the individual loss probability $p$. Then, for all $p<1/2$, there exists a strictly positive constant $g=g(p)$ such that: 
\begin{equation}\label{eq:limit_success_rate}
   \lim_{n \to \infty} P_p(L_{\mathrm{success}}(\leq g))=1,
\end{equation}
where $P_p(L_{\mathrm{success}}(\leq g))$ is defined in Eqn.~\eqref{eq:PgSPF}. Moreover, for all $p>1/2$ and any constant $g$:
\begin{equation}\label{eq:limit_success_rate_up}
   \lim_{n \to \infty} P_p(L_{\mathrm{success}}(\leq g))=0,
\end{equation}
meaning $p=1/2$ is the exact threshold.
\end{restatedtheorem}
Note that the event $L_{\mathrm{success}}(\leq g)$ and hence its probability $P_p(L_{\mathrm{success}}(\leq g))$ depend on the code $\Scurvy_n$. 
The standard erasure threshold of the planar surface code is $p_e=1/2$, which has been obtained using percolation theory in Ref.~\cite{staceThresholdsTopologicalCodes2009}.
It is defined with respect to the mere existence of logical information, which is not the same as the existence of a $g$-SPF pair. Hence one cannot directly apply the results from Ref.~\cite{staceThresholdsTopologicalCodes2009} to prove Theorem~\ref{thm:threshold_surface_informal}. Nevertheless, one can prove Theorem~\ref{thm:threshold_surface_informal} by combining known statements about bond percolation in a two-dimensional box with the notion of sets of \emph{disjoint} logical operators in stabiliser codes. In fact, we derive a lemma similar to the scrubbing lemma in Ref.~\cite{jochym-oconnorDisjointnessStabilizerCodes2018}, which relates the so-called disjointness of a stabiliser code to the minimum intersection size of the logical operator pairs $(X,Z)$. We will show, using percolation theory, that below the threshold $p'=1/2$, for a large enough $n$, there exist sets $X_R \subseteq [X]$ and $Z_R \subseteq [Z]$ of $\Theta(\sqrt{n})$ disjoint representatives for $X$ and $Z$. Then, using the modified scrubbing lemma, we will show that this implies that there exists a constant $g$ such that, for large enough $n$, the $g$-SPF success rate asymptotically approaches 1.

\subsection{Relation to percolation theory}\label{sec:relation_percolation}

Percolation theory studies the emergence of connected regions in networks when edges or nodes are added or removed. Here we will only discuss percolation in a two-dimensional rectangular lattice, i.e. bond percolation in two dimensions. We can view the lattice as a graph $G=(V,E)$. A fundamental percolation model is the following: Each edge in the lattice is \emph{open} or \emph{closed} following the i.i.d. probability distribution $\{1-p,p\}$. Let $\Omega:=\{0,1\}^E$. We summarise which edges are closed and open by denoting a closed edge by 0 and an open edge by 1. Let $\omega \in \Omega$ denote a specific edge configuration. 
We call any $\mathcal{A} \subseteq \Omega$ an \emph{event}. Note that $\omega$ is a characteristic vector encoding $Q-L$ for a loss configuration $L \subseteq Q$, as we defined it in Section~\ref{sec:formalisation}.

We say there exists an open path from the left to the right side of the lattice if there exists a connected path of edges where the first edge is attached to a vertex on the very left side of the lattice and the last edge of the path is attached to one on the right side. A fundamental question in percolation theory is the question for which probabilities $p$ there exists an open path, i.e. a connected set of edges, from the left to the right side of a two-dimensional square lattice. (Or equivalently from top to bottom.) That is, one asks whether there exists a $p'$ such that for all $p< p'$:
\begin{equation}\label{eq:percolation_threshold}
   \lim_{\Lambda \to \infty} P_{p}(\mathcal{A})=1,
\end{equation}
where $P_{p}$ is the probability of the event $\mathcal{A}$ for a specific $p$ and $\mathcal{A}$ is given by:
\begin{equation}
    \mathcal{A}=\{\omega \in \Omega\;|\;\omega\text{ contains an open path from left to right}\}.
\end{equation}
In Ref.~\cite{kestenCriticalProbabilityBond1980}, H. Kesten showed that the threshold $p'$ in Eqn.~\eqref{eq:percolation_threshold} is $p'=\frac{1}{2}$. The bond percolation model on a square lattice is in one-to-one correspondence with the planar surface code under qubit loss, where each qubit is lost with probability $p$. The existence of a surviving logical representative directly corresponds to the existence of an open path from one boundary to the opposite one, as pointed out in Ref.~\cite{staceThresholdsTopologicalCodes2009}. Therefore, the standard erasure threshold is $1/2$. 

Here, we care about the slightly different event:
\begin{equation}\label{eq:event_many_paths}
    \mathcal{A}=\{\omega \in \Omega\;|\;\omega  \text{ contains many edge-disjoint paths from left to right}\}.
\end{equation}
The wording `many edge-disjoint paths' is purposefully kept vague for comprehension purposes. We will specify it later. 
In particular, we wish to prove that for a specific $p'>0$, there  exist many edge-disjoint paths from left to right \emph{and} top to bottom with probability converging to one. Strictly speaking, this is the intersection of two separate events, which in general cannot be treated independently. However, the two are increasing events, which we define in Definition~\ref{def:increasing_events}:
For two configurations $\omega$ and $\omega'$, $\omega \leq \omega'$ means that for any $e \in E$ $\omega_e \leq \omega'_e$. 
\begin{definition}[Increasing events]\label{def:increasing_events}
An event $\mathcal{A}$ is called \emph{increasing} if $\omega \leq \omega'$ for some $\omega \in \mathcal{A}$ implies that $\omega' \in \mathcal{A}$.
\end{definition}
Intuitively, an event is increasing if opening more edges does not hurt the probability of a specific configuration belonging to this event. 
For two increasing events $\mathcal{A}$ and $\mathcal{B}$, Harris' inequality applies~\cite{harrisLowerBoundCritical1960}: 
\begin{theorem}
Let $\mathcal{A}$ and $\mathcal{B}$ be two increasing events. Then:
\begin{equation}\label{eq:Harris_inequality}
P(\mathcal{A}\cap \mathcal{B})\geq P(\mathcal{A})\cdot P(\mathcal{B}).
\end{equation}
\end{theorem}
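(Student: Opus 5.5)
The statement is Harris' inequality for the product measure $P_p$ on $\Omega=\{0,1\}^E$ with $E$ finite, which is all we need for the $\Lambda\times(\Lambda-1)$ lattice. I would prove the slightly stronger functional version: for any two increasing functions $f,h:\Omega\to\mathbb{R}$ (increasing meaning $\omega\le\omega'\Rightarrow f(\omega)\le f(\omega')$) we have $\mathbb{E}[fh]\ge\mathbb{E}[f]\,\mathbb{E}[h]$. Applying this to the indicator functions $f=\mathbbm{1}_{\mathcal{A}}$ and $h=\mathbbm{1}_{\mathcal{B}}$ gives Eqn.~\eqref{eq:Harris_inequality}. These indicators are increasing exactly by Definition~\ref{def:increasing_events}.

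\textbf{Base case.} I would induct on $m=|E|$, enumerating the edges as $e_1,\dots,e_m$. For $m=1$ the space is $\{0,1\}$ with weights $\{p,1-p\}$. The key observation is that for $\omega_1,\omega_2\in\{0,1\}$ we have $(f(\omega_1)-f(\omega_2))(h(\omega_1)-h(\omega_2))\ge 0$, since both functions are monotone in the same direction. Summing this over independent copies $\omega_1,\omega_2$ gives
\begin{equation*}
0\le\sum_{\omega_1,\omega_2}P(\omega_1)P(\omega_2)\bigl(f(\omega_1)-f(\omega_2)\bigr)\bigl(h(\omega_1)-h(\omega_2)\bigr)=2\bigl(\mathbb{E}[fh]-\mathbb{E}[f]\,\mathbb{E}[h]\bigr).
\end{equation*}

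\textbf{Inductive step.} Assume the claim for $m-1$ coordinates. Write $\omega=(\omega',\omega_m)$. Because the measure is a product measure, $\omega_m$ is independent of $\omega'$. Define the conditional expectations $F(\omega')=\mathbb{E}[f\mid\omega']$ and $H(\omega')=\mathbb{E}[h\mid\omega']$; concretely these are averages over $\omega_m$ with weights independent of $\omega'$. For each fixed $\omega'$, the maps $\omega_m\mapsto f(\omega',\omega_m)$ and $\omega_m\mapsto h(\omega',\omega_m)$ are increasing. The base case therefore gives $\mathbb{E}[fh\mid\omega']\ge F(\omega')H(\omega')$. Moreover, $F$ and $H$ are increasing in $\omega'$, because they are averages of increasing functions under weights that do not depend on $\omega'$. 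The induction hypothesis then yields $\mathbb{E}[FH]\ge\mathbb{E}[F]\,\mathbb{E}[H]=\mathbb{E}[f]\,\mathbb{E}[h]$. Chaining the two inequalities completes the step:
\begin{equation*}
\mathbb{E}[fh]=\mathbb{E}\bigl[\mathbb{E}[fh\mid\omega']\bigr]\ge\mathbb{E}[FH]\ge\mathbb{E}[f]\,\mathbb{E}[h].
\end{equation*}

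\textbf{Main obstacle.} The only delicate point is checking that $F$ and $H$ remain increasing. This is where the product (i.i.d.) structure of $P_p$ is essential, since it makes the conditional law of $\omega_m$ independent of $\omega'$. Beyond that the argument is routine. Since the lattices we consider are finite, no limiting argument over infinite edge sets is needed.
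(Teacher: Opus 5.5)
Your proof is correct. The paper does not prove this statement. It quotes it as Harris' inequality and cites Harris' 1960 paper, so there is no internal argument to match yours against. What you give is the standard textbook proof (as in Grimmett's \emph{Percolation}): you prove the functional form $\mathbb{E}[fh]\ge\mathbb{E}[f]\,\mathbb{E}[h]$ for increasing $f,h$ on $\{0,1\}^E$ by induction on $|E|$, and then specialise to indicators. The one-coordinate case follows from the symmetrisation identity, and the inductive step from conditioning on the first $m-1$ coordinates. Every step checks out, including the one you single out: $F$ and $H$ are increasing because the weights on $\omega_m$ do not depend on $\omega'$.

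Compared with the bare citation, your route makes the section self-contained and covers exactly what the paper uses, namely a finite edge set, so the limiting (martingale convergence) step needed for events depending on infinitely many edges never arises. It is also slightly more general than required: the same induction works verbatim for product measures with non-identical marginals and for increasing random variables, not just events.

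One point is worth making explicit when you invoke the result for $(M_{\mathrm{lr}}>\beta(\Lambda-1))$ and $(M_{\mathrm{tb}}>\beta(\Lambda-1))$. Both events must be viewed on the same configuration space $\{0,1\}^E$ indexed by qubits, with a dual edge open exactly when its qubit survives. They are then increasing with respect to the same partial order, and your theorem applies directly.
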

We now cite Lemma 11.22 in Ref.~\cite{grimmettBondPercolationTwo1999}, which precisely deals with the existence of many edge-disjoint paths. Let $M_{\Lambda}$ be the maximum number of edge-disjoint open paths from left to right in a rectangular lattice that is $\Lambda$ edges wide and $(\Lambda-1)$ edges high.
\begin{lemma}\label{lem:edge_disjoint_paths_below_threshold}
Suppose that $p<\frac{1}{2}$. Then there exist strictly positive constants $\beta=\beta(p)$ and $\gamma=\gamma(p)$ such that:
\begin{equation}\label{eq:Mlambda_prob}
    P_p(M_{\Lambda}\leq \beta(\Lambda-1))\leq e^{-\gamma (\Lambda-1)}.
\end{equation}
\end{lemma}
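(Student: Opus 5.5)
The plan is to reduce the statement to a dual crossing event via Menger's theorem, and then bound that event using exponential decay of subcritical dual connections together with the BK inequality. This is the route of Ref.~\cite{grimmettBondPercolationTwo1999}, which we would reconstruct as follows.

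\textbf{Step 1 (Menger / max-flow min-cut).} View the open subgraph of the $\Lambda\times(\Lambda-1)$ lattice, with the left and right boundary vertices merged into a source and a sink. By the edge version of Menger's theorem, $M_\Lambda$ equals the minimum number of open edges in an edge set separating left from right. In planar duality, every minimal separating set corresponds to a dual path crossing the $(\Lambda-1)\times\Lambda$ dual lattice from top to bottom. Hence $M_\Lambda\le k$ if and only if there is a top-to-bottom dual path that crosses at most $k$ open primal edges. Declare a dual edge \emph{dual-open} if its primal edge is closed. Dual edges are then i.i.d.\ dual-open with probability $p<1/2$, so dual percolation is strictly subcritical.

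\textbf{Step 2 (decomposition of a bad dual path).} A top-to-bottom dual path using at most $k$ dual-closed edges splits into at most $k+1$ maximal dual-open segments, joined by the $\le k$ dual-closed ``jump'' edges. The segments are edge-disjoint dual-open connections whose vertical spans $\ell_0,\dots,\ell_k$ satisfy $\sum_i \ell_i \ge (\Lambda-1)-k$.

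\textbf{Step 3 (exponential decay and BK).} Subcritical sharpness (Kesten/Menshikov, equivalently the uniqueness of $p_c=1/2$ from Ref.~\cite{kestenCriticalProbabilityBond1980}) gives a $\psi=\psi(p)>0$ with $P_p(x\leftrightarrow \text{distance } \ell \text{ in the dual-open graph})\le e^{-\psi \ell}$. Fix the sequence of jump edges. The segments occur disjointly, so the BK inequality bounds the probability of the configuration by $\prod_i e^{-\psi\ell_i}\le e^{-\psi(\Lambda-1-k)}$.

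Next we sum over the combinatorial data: the starting point ($\le \Lambda$ choices), and the positions of the jump edges. Each jump edge sits at the end of a segment whose endpoint lies within distance $\ell_i$ of the previous jump, giving $\le C\ell_i^2$ choices per segment. The polynomial factors $\prod_i C\ell_i^2$ can be absorbed by replacing $\psi$ with $\psi/2$ at the cost of a constant $C'^{k+1}$. Summing over the compositions $(\ell_i)$ contributes at most $\binom{\Lambda+k}{k}$. This yields
\begin{equation*}
P_p(M_\Lambda\le k)\le \Lambda\, C'^{\,k+1}\binom{\Lambda+k}{k}e^{-\psi(\Lambda-1-k)/2}.
\end{equation*}

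\textbf{Step 4 (choice of $\beta$).} Take $k=\beta(\Lambda-1)$. Then $C'^{k}\binom{\Lambda+k}{k}\le e^{(\Lambda-1)\,[\beta\log C'+H(\beta)+o(1)]}$, where $H$ is the binary entropy. Choose $\beta$ small enough that $\beta\log C'+H(\beta)+\beta\psi/2<\psi/4$. The whole bound is then $\le e^{-\gamma(\Lambda-1)}$ for large $\Lambda$ with, for example, $\gamma=\psi/8$. Small $\Lambda$ are handled by shrinking $\gamma$.

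\textbf{Main obstacle.} The hard step is Step 3. The naive path-counting bound, $3^m$ paths times $p^{m-k}$, only works for $p<1/3$. Reaching all $p<1/2$ genuinely requires the sharp exponential decay of the subcritical dual cluster, together with the careful BK bookkeeping so that the jump-edge counting costs only $e^{O(H(\beta)+\beta)\Lambda}$.
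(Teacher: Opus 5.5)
The paper gives no proof of this lemma; it quotes it as Lemma~11.22 of Grimmett's book. Your Menger--duality--subcritical-decay--BK outline is therefore a reconstruction of the cited argument, not an alternative to anything in the paper. Its architecture is sound: Step~1 correctly identifies $\{M_\Lambda\le k\}$ with the existence of a top--bottom dual path containing at most $k$ dual-closed edges.

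The counting in Step~3 fails as written. You index the dual-open segments by their \emph{vertical spans} $\ell_i$, but then claim each segment's endpoint lies within distance $\ell_i$ of its start, giving $C\ell_i^2$ choices. A dual-open segment with small vertical span can run horizontally across the whole box, so its endpoint has up to order $\Lambda\ell_i$ possible positions. Over $k=\beta(\Lambda-1)$ jumps this gives a factor of order $\Lambda^{\beta(\Lambda-1)}=e^{\beta(\Lambda-1)\log\Lambda}$, which beats $e^{-\psi(\Lambda-1)}$ for every fixed $\beta>0$. So the displayed bound does not follow.

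The repair is to parametrise each segment by its $L^\infty$ radius $r_i$ about its starting vertex $x_i$.
\begin{itemize}
\item The segment witnesses $\{x_i\leftrightarrow\partial B(x_i,r_i)\}$, and these witnesses are edge-disjoint, so BK gives $\prod_i e^{-\psi r_i}$.
\item The endpoint has at most $(2r_i+1)^2$ positions, and the next jump edge at most $4$.
\item A segment's vertical span is at most $2r_i$, so you still have $\sum_i r_i\ge(\Lambda-1-k)/2=:N$.
\end{itemize}
The sum over the $r_i$ runs over all tuples with $\sum_i r_i\ge N$, not over compositions of a fixed total, so the binomial count is not the right bookkeeping. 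Use $\mathbf{1}\{\sum_i r_i\ge N\}\le e^{(\psi/2)(\sum_i r_i-N)}$ instead. This factorises and gives $P_p(M_\Lambda\le k)\le (k+1)\,\Lambda\,K^{k+1}e^{-\psi(\Lambda-1-k)/4}$ with $K=\sum_{r\ge 0}4(2r+1)^2e^{-\psi r/2}<\infty$. Your choice of $\beta$ in Step~4 then goes through, with $\log K$ in place of $\log C'$ and no entropy term.

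Finally, the input Step~3 needs is exponential decay of the subcritical cluster radius for $p<1/2$ on $\mathbb{Z}^2$ (i.e.\ $p_T=p_c=1/2$). Knowing only $p_c=1/2$ does not give decay without a sharpness result, so cite the decay theorem directly rather than calling it ``equivalent'' to Kesten's value of $p_c$.
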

Here, $(M_{\Lambda}\leq \beta (\Lambda-1))$ is a short form for the event $\{\omega \in \Omega \,|\, M_{\Lambda}(\omega) \leq \beta (\Lambda-1)\}$, i.e. all loss configurations $\omega$ for which there exist at most $\beta (\Lambda-1)$ edge-disjoint open paths from left to right in the $\Lambda \times (\Lambda-1)$ rectangular lattice. Consider now a lattice that is $\Lambda-1$ edges wide and $\Lambda$ edges high. Then, Eqn.~\eqref{eq:Mlambda_prob} holds for the event that there are at most $\beta (\Lambda-1)$ open paths from \emph{top to bottom} due to the equivalence of the problem statements under a $90^{\circ}$ rotation of the lattice. 

\subsection{Disjointness and logical support intersection}\label{sec:disjointness}

Disjointness is a property of stabiliser codes first introduced in Ref.~\cite{jochym-oconnorDisjointnessStabilizerCodes2018} to analyse fault-tolerant operations on stabiliser codes. Roughly speaking, a code exhibits high disjointness if there are many disjoint logical representatives. In this section, we will briefly introduce the definition of disjointness and bound the minimum intersection between the support of two anti-commuting logical representatives with its help. We say a qubit $q$ participates in an operator $a \in \mathcal{P}_n$ if $q \in \supp(a)$. We will first define a $c$-disjoint set. Recall that we write $[P]$ for the set of all representatives for an element $P$ in $\Ncurvy(\Scurvy)/\Scurvy$. After loss, however, not all elements in $[P]$ still constitute valid representatives.

We will define the $c$-disjointness $\Delta^c(R)$ for a \emph{subset} $R \subseteq [P]$ of representatives for some $P \in \Ncurvy(\Scurvy)/\Scurvy$ for a code $\Scurvy$\footnote{As opposed to the standard definition from Ref.~\cite{jochym-oconnorDisjointnessStabilizerCodes2018}, where $c$-disjointness is defined for the set of all representatives for any~$P \, \in \, \Ncurvy(\Scurvy)/\Scurvy$.}.  

\begin{definition}[$c$-Disjointness (adapted from Ref.~\cite{jochym-oconnorDisjointnessStabilizerCodes2018}) of a set $R$]\label{def:PRdisjointness}
Let $\Scurvy$ be a stabiliser code.
    A set $P_D \subseteq [P]$ is $c$-disjoint if any qubit only participates in at most $c$ operators $A \in P_D$. We define the $c$-disjointness of an $R \subseteq [P]$, $P \in \Ncurvy(\Scurvy)/\Scurvy$, as:
\begin{equation}
    \Delta^c(R)= \frac{1}{c}\max\{|P_D| \,:\, P_D \subseteq R,\ P_D \text{ is } c\text{-disjoint}\}.
\end{equation}
We define the disjointness of $R$ as $\Delta(R):=\max_{c} \Delta^c(R)$. 
\end{definition}
We will now derive an upper bound on the minimum intersection size of all possible pairs $(X,Z)$, $X \in X_R$, $X_R \subseteq [X]$ and $Z \in Z_R$, $Z_R \subseteq [Z]$. Our Lemma~\ref{lem:intersection_disjointness} is similar to the scrubbing lemma in Ref.~\cite{jochym-oconnorDisjointnessStabilizerCodes2018}, but it does not require knowledge of the distance of $\Scurvy$. 
\begin{restatedlemma}[Upper bound on minimum intersection size]\label{lem:intersection_disjointness}
Let $\{\Scurvy_n\}$ be a family of codes encoding one logical qubit. Let $X_R$ and $Z_R$ be non-empty subsets of $[X]$ and $[Z]$ respectively, defined for each $\Scurvy_n$. Let $\Delta_n(X_R)$ and $\Delta_n(Z_R)$ be the disjointness of $X_R$ and $Z_R$, as defined in Definition~\ref{def:PRdisjointness}, for every code size $n$. Then 
\begin{equation}\label{eq:disjointness_bound_intersection}
    \min_{X \in X_R, Z\in Z_R} \big|\mathrm{supp}(Z) \cap \mathrm{supp}(X)\big| \leq n/(\Delta_n(X_R)\Delta_n(Z_R)).
\end{equation}
\end{restatedlemma}
\begin{proof}
From the definition of disjointness, we know that for any $n$ there exist constants $c_X$ and $c_Z$ and sets of representatives $X_D \subseteq X_R$, $Z_D \subseteq Z_R$ such that $|X_D|=c_X\Delta(X_R)$ and $|Z_D| =c_Z \Delta(Z_R)$, where we dropped the subscript $n$ from $\Delta_n$. Moreover, we know that any qubit $q \in Q$ participates in at most $c_X$ ($c_Z$) elements in $X_D$ ($Z_D$). 
Consider the following sum over the size of the support intersections for all pairs $(X,Z) \in X_D \times Z_D$:
\begin{equation}
H:=\sum_{(X,Z) \in X_D \times Z_D} |\supp(X)\cap \supp(Z)|.
\end{equation}
We can rewrite this sum as a sum over the set of all qubits $Q$ instead:
\begin{equation}
H=\sum_{q \in Q} \sum_{(X,Z) \in X_D \times Z_D} \delta\left(q \in \supp(X) \cap \supp(Z)\right),
\end{equation}
 where $\delta\left(q \in \supp(X) \cap \supp(Z)\right)$ is zero if $q \notin \supp(X)\cap \supp(Z)$ and one otherwise. We can write the latter as a product:
 \begin{equation}
     \delta(q \in \supp(X) \cap \supp(Z))=\delta(q \in \supp(X))\cdot \delta(q \in \supp(Z))
 \end{equation} 
 and further simplify by splitting up the sum into a product of sums over $X_D$ and $Z_D$:
\begin{align}
H&=\sum_{q \in Q} \sum_{(X,Z) \in X_D \times Z_D} \delta\left(q \in \supp(X) \cap \supp(Z)\right)\\
&=\sum_{(X,Z) \in X_D\times Z_D} \sum_{q \in Q} \delta(q \in \supp(X))\cdot \delta(q \in \supp(Z))\\
&=\sum_{q \in Q} \left[\sum_{X \in X_D}\left(\delta(q \in \supp(X))\right)\sum_{Z \in Z_D}\left(\delta(q \in \supp(Z))\right)\right].
\end{align}
 We now utilise the aforementioned fact that each qubit participates in at most $c_X$ ($c_Z$) elements of $X_D$ ($Z_D$), which follows from the definition of disjointness, to obtain:
\begin{align}
 H\leq \sum_{q \in Q}  c_X  c_Z  =n c_X c_Z. \label{eq:H_upperbound}
 \end{align}
 $H$ is a sum over all intersection sizes for each possible pair $(X,Z) \in X_D \times Z_D$. (Note that the intersection must be at least one, otherwise $X$ and $Z$ cannot anti-commute.) Thus, the arithmetic mean $\overline{H}$ of all intersection sizes per pair $(X,Z)$ is $H$ divided by the total number of pairs $(X,Z)$, i.e. $|X_D \times Z_D|$. This quantity is directly related to disjointness via:
 \begin{equation}
     \big|X_D \times Z_D\big|=\big |X_D \big| \cdot \big|Z_D\big|=c_X \Delta(X_R) c_Z\Delta(Z_R),
 \end{equation}
where we inserted the definition of $c$-disjointness (Definition~\ref{def:PRdisjointness}) for the sets $X_D$ and $Z_D$. We conclude that $\overline{H}$ is given by:
 \begin{equation}
 \overline{H}:=\frac{H}{\big | X_D \times Z_D \big|}=\frac{H}{c_X \Delta(X_R) c_Z \Delta(Z_R)}.
 \end{equation}
For any set of non-negative numbers, the arithmetic mean of this set of numbers is greater than or equal to its minimum and thus:
 \begin{align}
     \min_{(X,Z)\in X_D \times Z_D} |\supp(X)\cap \supp(Z)| \leq \overline{H} \leq \frac{nc_Xc_Z}{c_X \Delta(X_R) c_Z \Delta(Z_R)}=\frac{n}{\Delta(X_R) \Delta(Z_R)},
 \end{align}
 where we inserted Eqn.~\eqref{eq:H_upperbound}. Finally, since $X_D \subseteq X_R$ and $Z_D \subseteq Z_R$, the minimum over $X_R \times Z_R$ is at most the minimum over $X_D \times Z_D$, which yields Eqn.~\eqref{eq:disjointness_bound_intersection}.
 
\end{proof}
We will use Lemma~\ref{lem:intersection_disjointness} as the last step in our proof of Theorem~\ref{thm:threshold_surface_informal}.

\subsection{Proof of Theorem~\ref{thm:threshold_surface_informal}}\label{sec:proof_spfthreshold_surface}
 
In this section we prove Theorem~\ref{thm:threshold_surface_informal}. We do so by showing that for any $p<1/2$, there exist subsets $X_R\subseteq [X]$ and $Z_R \subseteq [Z]$ of the logical representatives of the planar surface code whose disjointness grows at least linearly in the lattice size $\Lambda$, i.e. as $\Omega(\sqrt{n})$, which follows from Lemma~\ref{lem:edge_disjoint_paths_below_threshold}. We will then utilise Lemma~\ref{lem:intersection_disjointness} to show that there exists a $g$-SPF pair for any $p < 1/2$, provided that $n$ is large enough. 

First, as discussed in Section~\ref{sec:surface_code}, any path from left to right in the $\Lambda \times (\Lambda-1)$ lattice ($\Lambda$ edges wide and $\Lambda-1$ edges high), or from top to bottom in its dual $(\Lambda-1)\times \Lambda$, corresponds to a representative of the logical $X$- or $Z$-operator, respectively. Our goal is to lower bound the probability of there being \emph{at least} $\beta (\Lambda-1)$ edge-disjoint paths from left to right in the $\Lambda \times (\Lambda-1)$ lattice \emph{and} at least $\beta (\Lambda-1)$ edge-disjoint paths from top to bottom in the dual lattice. We denote the number of left-right edge-disjoint open paths in the lattice by $M_{\mathrm{lr}}$ and the number of top-bottom ones in the dual lattice by $M_{\mathrm{tb}}$. We then have:
\begin{equation}
    P_p\big((M_{\mathrm{lr}}>\beta (\Lambda-1)) \cap(M_{\mathrm{tb}}>\beta (\Lambda-1))\big)\geq P_p\big((M_{\mathrm{lr}}>\beta (\Lambda-1))\big) \cdot P_p\big((M_{\mathrm{tb}}>\beta ( \Lambda-1))\big),
\end{equation}
where we used Harris' inequality (Eqn.~\eqref{eq:Harris_inequality}) for the increasing events $(M_{\mathrm{lr}}>\beta (\Lambda-1)) $ and $(M_{\mathrm{tb}}>\beta (\Lambda-1))$. 
We now use Lemma~\ref{lem:edge_disjoint_paths_below_threshold} and Eqn.~\eqref{eq:Mlambda_prob} to obtain:
\begin{align}
     P_p\big((M_{\mathrm{lr}}>\beta (\Lambda-1)) \cap(M_{\mathrm{tb}}>\beta (\Lambda-1))\big) \geq (1-e^{-\gamma (\Lambda-1)})^2 \geq 1-2 e^{-\gamma (\Lambda-1)}.\label{eq:lr_tb_paths_prob}
\end{align}
Note that we can use the same $\beta$ and $\gamma$ for both $M_{\mathrm{lr}}$ and $M_{\mathrm{tb}}$ due to the symmetry of the problem statement under rotation.

We will now state a corollary that uses Eqn.~\eqref{eq:lr_tb_paths_prob} to bound the probability of there being $\Theta(\sqrt{n})$ edge-disjoint $X$ and $Z$ representatives in the planar surface code \emph{after loss}. After qubit loss, a path from left to right in the $\Lambda \times (\Lambda-1)$ lattice or top to bottom in the $(\Lambda -1) \times \Lambda$ dual lattice without support on the lost qubit set $L$ is a valid logical representative for $X$ or $Z$ respectively. Hence, to find a large number of disjoint logical $X$- or $Z$-representatives \emph{after} loss, it suffices to find a large number of edge-disjoint paths from left to right (top to bottom) in the (dual) lattice that have no support on the lost qubits $L$.  

For a given surface code $\Scurvy_n$, let $X_R$ be the set of $X$-representatives defined by the paths from left to right and $Z_R$ the set of $Z$-representatives defined by the paths from top to bottom in the dual lattice. After loss, only the representatives with no support on the lost qubits remain valid.

For simplification of notation we introduce the function $\mathcal{L}_L:\mathbf{P}(\mathcal{P}_n) \rightarrow \mathbf{P}(\mathcal{P}_n)$ that maps any set of Pauli operators $A$ to its subset of operators with no support on the lost qubits $L$:
\begin{equation}\label{eq:Lldef}
    \mathcal{L}_L(A)=\{a \in A \mid \supp(a) \cap L = \emptyset\}.
\end{equation}
We obtain the following corollary:
\begin{corollary}\label{cor:disjointness_large}
Consider a family $\{\Scurvy_n\}$ of $n$-qubit planar surface codes on a $\Lambda \times (\Lambda-1)$ lattice where qubits are assigned to edges. Let each qubit be lost independently with probability $p$, i.e. the loss configuration $L$ is distributed according to Eqn.~\eqref{eq:loss_prob}. Let $X_R \subseteq [X]$ and $Z_R \subseteq [Z]$ be the sets of path representatives defined above. Then, there exist strictly positive constants $\beta=\beta(p)$ and $\gamma=\gamma(p)$ such that:
\begin{equation}\label{eq:disjointness_small_exp_decay}
    P_p\left((\Delta_n(\mathcal{L}_L(X_R)) > \beta (\Lambda-1)) \cap (\Delta_n(\mathcal{L}_L(Z_R)) > \beta (\Lambda-1))\right)\geq 1-2 e^{-\gamma (\Lambda-1)}.
\end{equation}
\end{corollary}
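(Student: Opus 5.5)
The plan is to reduce the corollary to the event already controlled in Eqn.~\eqref{eq:lr_tb_paths_prob}. We show the event-level inclusion
\begin{equation*}
\big(M_{\mathrm{lr}}>\beta(\Lambda-1)\big)\cap\big(M_{\mathrm{tb}}>\beta(\Lambda-1)\big)\subseteq \big(\Delta_n(\mathcal{L}_L(X_R))>\beta(\Lambda-1)\big)\cap\big(\Delta_n(\mathcal{L}_L(Z_R))>\beta(\Lambda-1)\big),
\end{equation*}
where $M_{\mathrm{lr}}$ and $M_{\mathrm{tb}}$ are evaluated on the configuration $\omega$ that is the characteristic vector of $Q\setminus L$. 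Monotonicity of $P_p$ then turns Eqn.~\eqref{eq:lr_tb_paths_prob} directly into Eqn.~\eqref{eq:disjointness_small_exp_decay}, with the same $\beta(p),\gamma(p)$ supplied by Lemma~\ref{lem:edge_disjoint_paths_below_threshold}.

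\textbf{Step 1: identify open paths with surviving representatives.} Fix a loss configuration $L$ and set $\omega$ as above. An edge is open exactly when its qubit is not lost. Hence an open left-to-right path in the $\Lambda\times(\Lambda-1)$ lattice is a path whose edge set avoids $L$. By Section~\ref{sec:surface_code} the product of $X$ on its edges lies in $[X]$, and since its support avoids $L$ it lies in $\mathcal{L}_L(X_R)$. Likewise, an open top-to-bottom path in the dual lattice gives an element of $\mathcal{L}_L(Z_R)$. Each dual edge corresponds to exactly one qubit, so the dual lattice sees the same i.i.d.\ open/closed configuration. This also justifies applying Lemma~\ref{lem:edge_disjoint_paths_below_threshold} to the dual via the $90^\circ$ rotation remark.

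\textbf{Step 2: edge-disjoint paths give a 1-disjoint set.} Take $M_{\mathrm{lr}}$ edge-disjoint open left-right paths. Their supports are pairwise disjoint because qubits are in bijection with edges, so each qubit participates in at most one of the corresponding operators. These operators are also distinct elements of $\mathcal{L}_L(X_R)$, since disjoint nonempty supports cannot coincide. By Definition~\ref{def:PRdisjointness} with $c=1$,
\begin{equation*}
\Delta_n(\mathcal{L}_L(X_R))\geq \Delta^1(\mathcal{L}_L(X_R))\geq M_{\mathrm{lr}}.
\end{equation*}
The same argument gives $\Delta_n(\mathcal{L}_L(Z_R))\geq M_{\mathrm{tb}}$. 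Together these yield the inclusion.

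\textbf{Main obstacle.} The only delicate point is checking that the combinatorial objects match. Paths in the percolation lemma must start and end on the correct boundaries, which means the dangling rough-boundary edges count as part of the lattice. Any open left-right path (possibly non-self-avoiding) must also give a genuine representative. I would handle both by noting that every open crossing contains a self-avoiding crossing on the same or fewer edges, so edge-disjointness is preserved. I would then match the $\Lambda\times(\Lambda-1)$ box of Lemma~\ref{lem:edge_disjoint_paths_below_threshold} to the lattice of Section~\ref{sec:surface_code}, including the boundary edges. This step is bookkeeping rather than a substantive difficulty.
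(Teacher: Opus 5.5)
Your proposal is correct and follows essentially the same route as the paper. Open crossings avoiding $L$ are surviving path representatives, and $k$ edge-disjoint crossings form a $1$-disjoint subset, so $\Delta_n(\mathcal{L}_L(X_R))\geq M_{\mathrm{lr}}$ and $\Delta_n(\mathcal{L}_L(Z_R))\geq M_{\mathrm{tb}}$; Eqn.~\eqref{eq:lr_tb_paths_prob} then transfers by monotonicity. Your event-inclusion phrasing (where the paper merely ``identifies'' the events) and your boundary and self-avoidance bookkeeping are a bit more careful than the paper, but the argument is the same.
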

\begin{proof}
$X_R$ and $Z_R$ correspond to the sets of all paths from left to right in the lattice and from top to bottom in the dual lattice, respectively. After loss, these sets are reduced to $\mathcal{L}_L(X_R)$ and $\mathcal{L}_L(Z_R)$. They correspond to the \emph{open paths} from left to right or top to bottom in the $\Lambda \times (\Lambda-1)$ or $(\Lambda-1) \times \Lambda$ lattice where the edges corresponding to $L$ are closed (see Section~\ref{sec:relation_percolation}). A set of $k$ edge-disjoint open paths yields $k$ pairwise disjoint representatives, i.e. a $1$-disjoint subset of size $k$. Hence, the number of edge-disjoint open paths from left to right or top to bottom lower bounds the disjointness of $\mathcal{L}_L(X_R)$ and $\mathcal{L}_L(Z_R)$, respectively. We can thus identify the event  $(\Delta_n(\mathcal{L}_L(X_R)) > \beta (\Lambda-1))$ with the event  $(M_{\mathrm{lr}}>\beta (\Lambda-1))$ and the event $(\Delta_n(\mathcal{L}_L(Z_R)) > \beta (\Lambda-1))$ with the event $(M_{\mathrm{tb}}>\beta (\Lambda-1))$ and use Eqn.~\eqref{eq:lr_tb_paths_prob} to lower bound the probability of their intersection, arriving at Eqn.~\eqref{eq:disjointness_small_exp_decay}.
\end{proof}
Corollary~\ref{cor:disjointness_large} states that with a probability converging to one the disjointness of the surviving elements of $X_R$ and $Z_R$ scales with $(\Lambda-1)$. 
 The total number of qubits is $n=\Lambda^2+(\Lambda-1)^2 < 2\Lambda^2$, meaning the disjointness approximately scales with $\sqrt{n}$. It only remains to apply Lemma~\ref{lem:intersection_disjointness} to arrive at Theorem~\ref{thm:threshold_surface_informal}. 
 
 Inserting $\Delta_n(\mathcal{L}_L(X_R)) > \beta (\Lambda-1)$,  $\Delta_n(\mathcal{L}_L(Z_R)) > \beta (\Lambda-1)$ and  $n=\Lambda^2+(\Lambda-1)^2  < 2 \Lambda^2$ into Eqn.~\eqref{eq:disjointness_bound_intersection}, we obtain (for $\Lambda >1$):
\begin{equation}\label{eq:upper_bound_intersection_final}
\min_{X \in \mathcal{L}_L(X_R), Z \in \mathcal{L}_L(Z_R)} \big|\supp(Z) \cap \supp(X)\big| < \frac{2 \Lambda^2}{\beta^2 (\Lambda-1)^2}=\frac{2}{\beta^2}\left(\frac{1}{1-(2/\Lambda)+(1/\Lambda^2)}\right).
\end{equation}
As the second factor in the product is monotonically decreasing for $\Lambda \geq 2$, we bound it by 4 for $\Lambda \geq 2$ and finally arrive at:
\begin{equation}\label{eq:upper_bound_intersection_final_2}
\min_{X \in \mathcal{L}_L(X_R), Z \in \mathcal{L}_L(Z_R)} \big|\supp(Z) \cap \supp(X)\big| < 8/\beta^2.
\end{equation}
Set $g:=\lfloor 8/\beta^2 \rfloor$. As the minimum intersection size in Eqn.~\eqref{eq:upper_bound_intersection_final_2} is an integer strictly smaller than $8/\beta^2$, it is at most $g$. Since $\alpha(X,Z) \leq |\supp(X) \cap \supp(Z)|$ for any pair $(X,Z)$, and every $X \in \mathcal{L}_L(X_R)$, $Z \in \mathcal{L}_L(Z_R)$ fulfils the Lost-Qubits condition by construction, the minimising pair is a $g$-SPF pair. Hence:
\begin{equation}\label{eq:eventinclusion}
    (\Delta_n(\mathcal{L}_L(X_R)) > \beta(\Lambda-1)) \cap (\Delta_n(\mathcal{L}_L(Z_R)) > \beta(\Lambda-1)) \subseteq L_{\mathrm{success}}(\leq g),
\end{equation}
where the event $L_{\mathrm{success}}(\leq g)$ is defined in Eqn.~\eqref{eq:loss_config_SPFpairexists}. It is the event that there exists a $g$-SPF pair. By monotonicity of probability, if  $A \subseteq B$ for the events $A$ and $B$, then $P(A) \leq P(B)$ and therefore Corollary~\ref{cor:disjointness_large} gives:
\begin{align}
    P_p(L_{\mathrm{success}}(\leq g)) &\geq  P_p\left((\Delta_n(\mathcal{L}_L(X_R)) > \beta(\Lambda-1)) \cap (\Delta_n(\mathcal{L}_L(Z_R)) > \beta(\Lambda-1))\right) \label{eq:final_line_1}\\
    &\geq  1-2 e^{-\gamma (\Lambda-1)},
\end{align}
which converges to one for $n \rightarrow \infty$ (equivalently $\Lambda \rightarrow \infty$), thereby proving the first statement in Theorem~\ref{thm:threshold_surface_informal}.  

To see that for $p>1/2$ the $g$-SPF success rate converges to zero, we observe that localisation via $g$-SPF requires the survival of logical information. The erasure threshold from Ref.~\cite{staceThresholdsTopologicalCodes2009} is exact. That is, for $p>1/2$ logical information is destroyed with a probability converging to one. Hence, the last statement in Theorem~\ref{thm:threshold_surface_informal} is proved. This completes the proof of Theorem~\ref{thm:threshold_surface_informal}.
It should be noted that, technically, disjointness is not needed for the proof. Instead, we could have simply spoken of sets of edge-disjoint operators. However, we chose to purposefully illuminate the connection between low intersection and disjointness.

\section{Optimisation algorithms}\label{sec:opt_algos}

In this section we introduce our optimisation algorithms to solve Target- and $g$-SPF. We formulate both problems as integer linear programs (ILPs) and use an ILP solver to solve it. This constitutes our algorithm D-LoFi for Target- and $g$-SPF. Moreover, we cast $g$-SPF as a decoding problem and use a decoder to construct a fast heuristic solver, dubbed H-LoFi. Both algorithms use the symplectic representation for stabiliser codes and share the same pre-processing step to handle the Lost-Qubits condition. We discuss the pre-processing and its formalisation into linear algebra tasks in Section~\ref{sec:prepro}. In Section~\ref{sec:algo_deterministic}, we discuss D-LoFi for both target and $g$-SPF. Finally, in Section~\ref{sec:algo_heuristic} we will explain H-LoFi for $g$-SPF.  

In what follows, we often write $X$ and $Z$ to denote symplectic vectors in $\Ftwotwon$ representing logical $X$ and $Z$ operators; thus we do not specify the difference between the space $\Ftwotwon$ and $\mathcal{P}_n$ and use them interchangeably. Moreover, we write $v \in \mathbb{F}_2^{l\times 1}$ to denote an $l \times 1$ column vector in the vector space $\mathbb{F}_2^l$; and $w \in \mathbb{F}_2^{1 \times l}$ for a vector in the dual space of $\mathbb{F}_2^l$, which can be interpreted as a $1 \times l$ row vector.  

\subsection{Pre-processing for $g$-/Target-SPF}\label{sec:prepro}

Given a set of lost qubits $L$, the pre-processing step allows the update of the stabiliser tableau $T$ to represent the \emph{reduced} stabiliser code describing the state after loss. Similarly, we choose some representatives for $X$ and $Z$ before qubit loss and find new representatives $X'$ and $Z'$ that have no support on the lost qubits. This creates the tableau of the punctured code where all qubits in $L$ have been removed. This tableau is then passed onto the ILP solver or the decoder. All pre-processing amounts to linear algebra over $\mathbb{F}_2$. In this section, we explain how the Lost-Qubits condition (Definition~\ref{def:Lemptyset}) translates to finding a solution to a system of linear equations. 

 Given a stabiliser group $\Scurvy$ experiencing the loss of the qubits in the set $L$, we find the updated stabiliser group $\Scurvy'$ that has no support on the lost qubits:
\begin{definition}[Reduced stabiliser group]
A stabiliser group $\Scurvy$ is updated to $\Scurvy'$ after losing qubits in $L$ by removing all $s \in \Scurvy$ with $\supp(s)\cap L \neq \emptyset$. That is, $\Scurvy'=\mathcal{L}_L(\Scurvy)$, where $\mathcal{L}_L$ is defined by Eqn.~\eqref{eq:Lldef}.
\end{definition}
The goal is to find a new stabiliser tableau describing the updated $\Scurvy'$. However, before we find that, we check for logical loss. Logical loss is defined as follows:
\begin{definition}[Logical loss]\label{def:logical_loss}
    Let $\Scurvy$ be a stabiliser code. Let $L$ be the set of lost qubits. Logical information of at least one qubit is lost if $L$ contains the full support of a logical representative for some non-trivial $P \in \Ncurvy(\Scurvy)/\Scurvy$.
\end{definition}
Whether or not logical information has been lost can be easily checked by attempting to clean a representative $X \in [X]$ or $Z \in [Z]$.
\begin{lemma}[Cleaning a logical]
Let $\Scurvy$ be a stabiliser code defining one logical qubit.
Logical information has been lost if and only if there exists a non-trivial $P \in  \Ncurvy(\Scurvy)/\Scurvy$ so that no $s \in \Scurvy$ fulfils $\supp(\tilde{P}s)\cap L=\emptyset$ for every single $\tilde{P} \in [P]$.
\end{lemma}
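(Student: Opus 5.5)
The plan is to read the statement as a tautology about ``cleanability'' and prove both directions directly from Definition~\ref{def:logical_loss}. The key observation is that, as $s$ ranges over $\Scurvy$ and $\tilde P$ over $[P]$, the product $\tilde P s$ ranges exactly over $[P]$, since $[P]=\tilde P\Scurvy$ is a coset. The condition ``no $s\in\Scurvy$ fulfils $\supp(\tilde P s)\cap L=\emptyset$'' is therefore equivalent, for any single $\tilde P\in[P]$, to $\mathcal{L}_L([P])=\emptyset$. The lemma thus reduces to the claim that logical information is lost if and only if some non-trivial class $P$ has no representative supported entirely on $Q\setminus L$.

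The second step is to connect this to Definition~\ref{def:logical_loss}, which speaks of a representative supported entirely inside $L$. I would invoke the standard cleaning lemma for stabiliser codes (the erasure/cleaning dichotomy). For a region $L$, either some non-trivial logical operator is supported on $L$, or every logical class has a representative supported on $Q\setminus L$.

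The proof goes by counting in the symplectic picture. Let $\mathcal{N}$ denote $\ker(T\Omega)$. Let $V_A$ denote the subspace of Paulis supported on a region $A$. The quotient $\mathcal{N}/\Scurvy$ is a $2k$-dimensional symplectic space. Consider the subspaces $\mathcal{N}_L=(\mathcal{N}\cap V_L+\Scurvy)/\Scurvy$ and $\mathcal{N}_{\bar L}=(\mathcal{N}\cap V_{Q\setminus L}+\Scurvy)/\Scurvy$. I would show that $\mathcal{N}_{\bar L}$ equals the symplectic complement of $\mathcal{N}_L$ inside $\mathcal{N}/\Scurvy$, so that $\dim\mathcal{N}_L+\dim\mathcal{N}_{\bar L}=2k$. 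This uses the fact that a Pauli commuting with $\Scurvy$ and with every logical supported on $L$ can be multiplied by a stabiliser so as to vanish on $L$. That fact follows from a rank argument on the restriction map $\Scurvy\to V_L$.

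With $k=1$ the conclusion follows directly. If $\mathcal{N}_L=0$, then $\mathcal{N}_{\bar L}$ is everything, so every class, in particular $X$, $Z$ and $Y$, can be cleaned off $L$. Conversely, if some non-trivial $P$ has a representative inside $L$, that representative anticommutes with some non-trivial class $P'$. Any representative of $P'$ avoiding $L$ would commute with it, by disjoint support, which is a contradiction; so $P'$ cannot be cleaned.

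This argument yields ``lost $\Rightarrow$ some class uncleanable'' only with $P'$ rather than $P$. Since the statement is existential in $P$, this is fine. For ``some class uncleanable $\Rightarrow$ lost'', I use the contrapositive of the dimension identity: if nothing non-trivial is supported on $L$, all classes are cleanable.

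The main obstacle I expect is the dimension identity $\mathcal{N}_{\bar L}=\mathcal{N}_L^{\perp}$. It requires carefully tracking phases-free restrictions $v[L]$ and the kernel of the restriction of $\Scurvy$ to $L$. I would prove it via $\dim(\Scurvy|_L)+\dim(\Scurvy\cap V_{\bar L})=r$ and the analogous identity on $\mathcal{N}$, following the standard Bravyi–Terhal cleaning argument.
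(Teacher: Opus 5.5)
Your proposal is correct and follows the paper's proof. In the forward direction, a non-trivial logical supported in $L$ anticommutes with some other non-trivial class, which therefore has no representative avoiding $L$; the backward direction is the Bravyi--Terhal cleaning lemma. The only difference is that you spell out the coset reformulation $\mathcal{L}_L([P])=\emptyset$ and sketch a proof of the cleaning lemma (via $\mathcal{N}_{\bar L}=\mathcal{N}_L^{\perp}$ inside $\mathcal{N}/\mathcal{S}$), whereas the paper simply cites it.
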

\begin{proof}
We first show the forward direction. Assume $L$ contains the support of some representative of a non-trivial $P \in  \Ncurvy(\Scurvy)/\Scurvy$. Then, each representative of $P$ either anti-commutes with every $Z \in [Z]$ or $X \in [X]$ or both. Without loss of generality, let each representative of $P$ anti-commute with every $Z \in [Z]$. This implies that $Z$ and the representative $\tilde{P}$ must intersect and thus $\supp(Z) \cap L \neq \emptyset$ $\forall Z \in [Z]$. We show the backward direction with the help of the cleaning lemma in Ref.~\cite{bravyiNogoTheoremTwodimensional2009a}, which states that a subset $L$ of qubits either contains the full support of a non-trivial logical or every logical representative $\tilde{P} \in [P]$ can be multiplied with an $s \in \Scurvy$ such that $\supp(\tilde{P}s) \cap L =\emptyset$, i.e. any logical representative can be `cleaned' out of the region. Assume, w.l.o.g., one cannot clean $Z$ out of $L$. Then, per the cleaning lemma, $L$ must contain the support of a non-trivial logical.
\end{proof}
In the pre-processing, we pick some representatives for $X$ and $Z$ and attempt to clean both of them from $L$. When this is not possible, logical information is lost.  We now discuss how to find a clean $X$ or $Z$ using the symplectic representation. 

Given an $r  \times (2n)$ stabiliser tableau $T$, a set of lost qubits $L \subseteq Q$ and a representative $X \in [X]$, we want to find another representative $X'$ that has no support on the lost qubits. 
To do so we need to find a stabiliser $s$ we can multiply $X$ with such that $Xs$ has no support on $L$, meaning we must find a solution to:
\begin{equation}\label{eq:cleanlogical}
(cT+X)[L]=0,
\end{equation}
where $c$ and $X$ are $1 \times r$ and $1 \times (2n)$ row vectors respectively. After finding a solution $c$ to Eqn.~\eqref{eq:cleanlogical}, we can set $X':=cT+X$ and use the reduced punctured logical, i.e. $X' \mapsto X'[Q-L]$.
One can find a cleaned $Z$-logical analogously. An advantage of the pre-processing is that Eqn.~\eqref{eq:cleanlogical} immediately flags when no clean logical can be found and thus all logical information has been lost. When it has no solution, our algorithms stop early and return a failure flag.  

It remains to update the stabiliser group $\Scurvy$ by removing all $s \in \Scurvy$ with support on $L$. To do so we map $T$ to a $T'$ where we keep only the linear combinations of rows of $T$ with no support on $L$.
For this, we find the solution space of the following linear equation:
\begin{equation}\label{eq:clean_T}
    (cT)[L]=0.
\end{equation}
For any solution $c$ the vector $cT$ will have no support on $L$. Therefore, to update $T$ after loss, we first find the solution space $\mathcal{C}_{L}$ of Eqn.~\eqref{eq:clean_T}, define the matrix $C_{L}$ whose rows are a basis of $\mathcal{C}_{L}$ and update $T$ after loss:
\begin{equation}
    \mathcal{C}_{L}=\{c \in \mathbb{F}_2^{1\times r}: (cT)[L]=0\}, \quad  T \mapsto T':= (C_LT)[Q-L].
\end{equation}
Note that $T'$ now has dimension $r'  \times 2(n-|L|)$, where $r'$ is the dimension of $ \mathcal{C}_{L}$.
After cleaning the logicals $X$ and $Z$ and updating the stabiliser tableau, the \emph{reduced} tableau $T'$ and the reduced representatives $X'$ and $Z'$ are handed to the deterministic or heuristic algorithm. We summarise the pre-processing in Algorithm~\ref{algo:prepro} in Appendix~\ref{app:algos}.

\subsection{Deterministic algorithm for Target-SPF and $g$-SPF}\label{sec:algo_deterministic}

In this section we describe a deterministic algorithm for solving the Target- and $g$-SPF problems. Following the pre-processing step that encodes the Lost-Qubits condition, we encode the Target condition and the $g$-condition, as defined in Section~\ref{sec:formalisation}, as constraints to find valid measurement patterns that localise the logical information. We then formulate Target-SPF and $g$-SPF as optimisation problems using these constraints to find logical operators with the smallest support. We solve these optimisation problems by mapping them to integer linear programs and using an ILP solver. 

Given a loss configuration $L$, an $r \times 2n$ stabiliser tableau $T$ and two representatives $X \in \mathbb{F}_2^{1 \times (2n)}$ and $Z \in \mathbb{F}_2^{1 \times (2n)}$ of the logical operators, we first apply the pre-processing from Algorithm~\ref{algo:prepro} to handle the Lost-Qubits condition from Definition~\ref{def:Lemptyset}. If no logical loss occurs, we obtain the cleaned and punctured $r' \times 2(n-|L|)$ tableau $T'$ and the cleaned and punctured representatives $X_0,Z_0$ $\in \mathbb{F}_2^{1 \times (2(n-|L|))}$. We can thus represent any other clean and punctured representative $X$ and $Z$ in symplectic notation as row vectors in $\mathbb{F}_2^{1\times (2(n-|L|))}$ via: 
\begin{align}
X &=X_0 + B_xT',\\
Z &= Z_0 + B_zT',
\end{align}
where $B_x$, $B_z \in \mathbb{F}_2^{1\times r'}$ and $X_0$, $Z_0 \in \mathbb{F}_2^{1\times (2(n-|L|))}$. Note that we now work with a new qubit index set with indices from 0 to $(n-|L|)-1$. The target qubit $t$ is re-indexed.
The Target condition from Definition~\ref{def:tanti} requires that $X$ and $Z$ valid for SPF only anti-commute qubit-wise on the target qubit $t$. Therefore, 
\begin{align}\label{eq:target_constraint}
 \langle X[t], Z[t] \rangle =1, \quad \sum_{i \neq t} \langle X[i], Z[i] \rangle =0.
\end{align}
Recall that we write $\langle \cdot ,\cdot \rangle$ for the symplectic product. All representatives that satisfy the Target condition (Definition~\ref{def:tanti}) or equivalently constraint~\eqref{eq:target_constraint} are valid for localisation of the logical information. However, out of all valid pairs, we find a pair of representatives that minimises the total support to mitigate future losses and possible measurement errors. Therefore, Target-SPF (see Problem~\ref{prob:tSPF}) can be summarised as the following optimisation problem:
 \begin{align}\label{eq:tspf_optimisation}
\min_{B_x, B_z \in \mathbb{F}_2^{1\times r'}} \hspace{2em} & \big|\supp(X_0 + B_xT') \cup  \supp(Z_0 + B_zT')\big|\\
\text{s.t.} \hspace{2em} 
    &   \langle X[t], Z[t] \rangle =1 \notag,\\
    & \sum_{i \neq t }^{n-|L|-1} \langle X[i], Z[i] \rangle = 0 \notag.
\end{align}
The sums are over $\mathbb{Z}$, whereas $\langle \cdot,\cdot\rangle$ uses summation over $\mathbb{F}_2$.
Similarly, the $g$-condition (Definition~\ref{def:gcondition}) for $g$-SPF (Problem~\ref{prob:gSPF}) can be written as:
\begin{equation}\label{eq:fewerganticommute}
\sum_{i=0}^{n-|L|-1} \langle X[i] , Z[i] \rangle \leq g,
\end{equation}
where the sum goes over $\mathbb{Z}$. Eqn.~\eqref{eq:fewerganticommute} ensures that $X$ and $Z$ anti-commute qubit-wise on at most $g$ qubits. Moreover, we can also minimise the support of representative logical operators to construct the following optimisation problem for $g$-SPF:
\begin{align}\label{eq:gspf_optimisation}
\min_{B_x, B_z \in \mathbb{F}_2^{1\times r'}} \hspace{2em} &   \big|\supp(X_0 + B_xT') \cup  \supp(Z_0 + B_zT')\big|\\
\text{s.t.} \hspace{2em} 
    & \sum_{i=0}^{n-|L|-1} \langle X[i], Z[i] \rangle \leq g\notag.
\end{align}
We solve the optimisation problems by mapping them to an integer linear program (ILP). However, the constraints for the qubit-wise (anti-)commutation of logical operators (Eqn.~\eqref{eq:target_constraint} and Eqn.~\eqref{eq:fewerganticommute}) are quadratic constraints over binary variables.  
Therefore, we convert the binary quadratic constraints into integer linear constraints by introducing additional constraints and variables. First, we linearise the quadratic terms by introducing auxiliary binary variables and additional linear constraints. More specifically, we replace each product $uv$ by a binary variable $w$ and add the constraints $w\leq u$, $w\leq v$, $w\geq u+v-1$. Second, we convert each sum modulo 2 into an equality over integers by introducing a non-negative integer variable and a binary variable. For example, for each sum $s=a_1+\dots+a_m \bmod 2$, we introduce a non-negative integer variable $y$ and impose $a_1+\dots+a_m = s+ 2y$. 

The resulting problem can then be solved using an ILP solver, for example Google's CP-SAT~\cite{ortools,cpsatlp}.
We summarise our deterministic solver D-LoFi in Algorithm~\ref{algo:det} in Appendix~\ref{app:algos}.

\subsection{Heuristic algorithm: H-LoFi}\label{sec:algo_heuristic}

In this section we explain how we formulate $g$-SPF, as defined in Problem~\ref{prob:gSPF}, as a decoding problem, which can then be solved heuristically using any decoder. We will then explain our heuristic algorithm H-LoFi that solves $g$-SPF.

 H-LoFi is divided into two parts. In the first part it finds a representative $Z$ of low (but not per se minimal) support that fulfils the Lost-Qubits condition and in the second part it finds a representative $X$ of low support that anti-commutes with the previously found $Z$ on a small number of qubits and also fulfils the Lost-Qubits condition. We use low support of both $X$ and $Z$ as a heuristic for finding a good SPF pair $(X,Z)$ for two reasons. Firstly, the shorter the supports of $X$ and $Z$, the fewer qubits on which they can overlap, and hence the fewer qubits on which they can anti-commute. Secondly, our deterministic algorithm finds the SPF pair with minimum support. Hence, in order to obtain similar solutions with H-LoFi, we use short support of $X$ and $Z$ individually.  

 The Lost-Qubits condition is encoded into the pre-processing. After applying the pre-processing, H-LoFi works with reduced and punctured stabiliser tableaus and logical representatives. The latter do not contain the indices of the lost qubits and hence the following decoding formulation need not consider any lost qubits. 
 
 To find a low weight $Z$, H-LoFi solves a decoding problem, where an arbitrary initial $X$ representative is appended to the stabiliser group as a new stabiliser, and the decoder then searches for an error that commutes with all stabiliser elements except this particular $X$ representative. In the second part, when it tries to find a low weight $X$ representative that anti-commutes with $Z$ on a small number of qubits, we add the previously found $Z$ to the original stabiliser group, and let the decoder search for an error that commutes with all stabilisers except $Z$. We formulate the decoding problem so that solutions $X$ that anti-commute qubit-wise with $Z$ on a large number of qubits become unlikely. Both decoding problems are defined with respect to error weights and the task is to find the most likely error based on the error weights. We therefore use a decoder that can solve such `most-likely-error' problems. 
We note that a heuristic algorithm for Target-SPF can also be implemented  by first finding a short $Z$ (by decoding) and then finding an $X$ that anti-commutes qubit-wise with $Z$ only on qubit $t$ by solving a system of linear equations. However, this approach did not lead to good performance in a first basic implementation and we left out its discussion in this section and Section~\ref{sec:numerics}.

 We first discuss the formulation as a decoding problem in Section~\ref{sec:decoding_problem} and then explain the exact implementation of the problem in Section~\ref{sec:decoding_solution}. In the following, we will use the word `anti-commute' somewhat sloppily and write $t \in \mathbb{F}_2^{2n}$ anti-commutes with some $e \in \mathbb{F}_2^{2n}$ if we mean that $\langle t, e \rangle=1$ and thus the associated Pauli operators anti-commute.

 \subsubsection{Formulation as a decoding problem}\label{sec:decoding_problem}

 Let $M$ be a $\mu \times \nu$ matrix. We introduce the notation $(M,v)^T$ to denote that we vertically append a $1 \times \nu$ row-vector $v$ to $M$. 
 
 H-LoFi takes as an input an $r  \times 2n$ stabiliser tableau $T$ and a set of lost qubits $L$. It first finds a symplectic basis $\{(X_i,Z_i)\}$ for the logical space. As we only consider codes with one logical qubit, the symplectic basis amounts to a single pair $(X,Z)$. It then applies our pre-processing Algorithm~\ref{algo:prepro} to $T$ and the representatives $X$ and $Z$, yielding either logical information loss, in which case it returns failure, or an updated reduced $r'  \times 2(n-|L|)$  stabiliser tableau $T'$ and cleaned logical representatives $(X',Z')$. The reduced tableau and representatives are only defined on the surviving qubits. Hence, the solution SPF pair of H-LoFi will automatically fulfil the Lost-Qubits condition. 
 If we were to consider the reduced and punctured stabiliser tableau $T'$ of size $r'  \times 2(n-|L|)$ as a new code, it can generally contain additional logical qubits beyond the original one. These additional degrees of freedom do not carry any logical information from the original code, but need to be considered to properly define the decoding problem H-LoFi is based on.
H-LoFi therefore finds a symplectic basis for $\ker(T'\Omega)/\mathcal{R}(T')$ such that $(X',Z')$ is an element of it. We elaborate on how to do so in Appendix~\ref{app:misc}. 

For each symplectic basis pair $(X_j,Z_j)\neq(X',Z')$ it appends $X_j$ and $Z_j$ to $T'$, and finally $X'$:
\begin{equation}\label{eq:append_sympl}
   T' \gets \begin{pmatrix} T'\\ X_j\\ Z_j \end{pmatrix}
   \quad\text{for all } (X_j,Z_j)\neq(X',Z'), \qquad
   T' \gets \begin{pmatrix} T'\\ X' \end{pmatrix}.
\end{equation}
The first step is to replace the arbitrary $Z'$, which may be an operator with large support, with a \emph{short} $Z' \in [Z']$.  To do so H-LoFi solves the following decoding problem:
\begin{equation}\label{eq:decoding_short_Z}
   \mathrm{argmax}_{e \in \mathbf{S}} q(e) | \mathbf{S}=\{e \in \mathbb{F}_2^{2(n-|L|)} |  T'\Omega e=(0,0,0,\ldots,1)^T\},
\end{equation}
where $q(e)$ are \emph{error weights}, which normally determine how likely an error $e$ is; however, here we do not consider them to be probabilities. The solution to Eqn.~\eqref{eq:decoding_short_Z} yields an operator that commutes with all stabilisers defined by the rows of $T'$ and all other newly created logical operators given by all $(X_j,Z_j)$ in the symplectic basis except for $(X',Z')$, and anti-commutes with $X'$. These anti-commutation relations uniquely define the representative class $[Z']$. Therefore, the solution of Eqn.~\eqref{eq:decoding_short_Z} yields a short $Z'$, as desired. We simply would like to steer the decoder in Eqn.~\eqref{eq:decoding_short_Z} to a low-weight solution $Z'$ and hence assign each single-qubit Pauli error $P_i \in \{X_i,Z_i,Y_i\}$ an equal weight $q$ per qubit $i$. 

Having found a short $Z'$, H-LoFi proceeds to find a short $X'$ that fulfils the $g$-condition. To do so, the initial reduced $T'$ is modified similarly to Eqn.~\eqref{eq:append_sympl}, except that after appending the symplectic basis, H-LoFi appends the \emph{newly found short $Z'$} instead of $X'$: $T'\gets (T',Z')^T$. 
The weights in the decoding problem are modified to punish a solution that anti-commutes with $Z'$ on a large number of qubits. To do so, we start out with equally distributed weights. Then, for each qubit $i$ in $\supp(Z')$ we map the weight $q^P[i]$  of a single-qubit Pauli error $P \in \{X,Y,Z\}$ on qubit $i$ to:
\begin{align}
    q^P[i] \mapsto \begin{cases} q^-  & \text{if} \quad Z'[i] \neq P   \\
    q^P[i] &\text{if} \quad Z'[i] = P
    \end{cases}, \label{eq:q_punish}
\end{align}
where $q^-$ is some fixed small value, e.g.~$10^{-3}$.
That is, we make all single-qubit errors that anti-commute with the $i$-th Pauli component of $Z'$ unlikely.
In the end we solve the decoding problem defined in Eqn.~\eqref{eq:decoding_short_Z} but with the updated weights $q$ and the updated tableau $T'$, yielding an $X'$ that fulfils the Lost-Qubits condition and hopefully the $g$-condition. We reject any solution $X'$ for which $\alpha(X',Z')>g$. The entire algorithm repeats for a fixed number of iterations as long as no $g$-SPF pair has been found, where the equal weights $q^P:=q$ are drawn at random during every iteration in order to find new solutions. We summarise the entire heuristic algorithm H-LoFi in Algorithm~\ref{algo:gspf_heuristic} in Appendix~\ref{app:algos}.

\subsubsection{Solving the decoding problem}\label{sec:decoding_solution}
Here we specify how we pass the decoding problems to the actual decoder. We use the Belief-Propagation Ordered-Statistics (BP-OSD) decoder from the Python ldpc package~\cite{Roffe_LDPC_Python_tools_2022,roffeDecodingQuantumLowdensity2020}, which takes an error weight per \emph{index} of the \emph{symplectic} vector, meaning one can assign individual uncorrelated weights to the computational basis vectors of $\mathbb{F}_2^{2(n-|L|)}$ but not to errors $e$ with Hamming weight larger than one. As a result, when using symplectic representation, a single-qubit $Y$-error, e.g.~$e=(1,0,\ldots,0|1,0,\ldots,0)^T$, becomes less likely. However, for our purposes, we want an error model where single-qubit $Y$-operators are not seen as unfavourable, as there might be a solution to $g$-SPF consisting of a pair that contains many $Y$-operators on single qubits which a decoder using the symplectic representation might fail to find. Hence, instead of the symplectic representation of Pauli operators, we use the three-block representation~\cite{websterDistanceFindingAlgorithmsQuantum2026}, a generic decoder modification method meant to treat $Y$ errors on equal footing. We will very briefly define the three-block representation and directly state the decoding problem in three-block representation and weights we pass to the BP-OSD decoder; for more information see for example Ref.~\cite{websterDistanceFindingAlgorithmsQuantum2026}. In the three-block representation, each $P \in \mathcal{P}_1/\langle \pm i \mathbf{I} \rangle$ is mapped to an element of $\mathbb{F}_2^{3}$: $(x|z|x+z)$.
 
For example, the $X$ operator is mapped to $(1|0|1)$. Each single-qubit Pauli operator is thus mapped to an element of the even-weight subspace $H$ of $\mathbb{F}_2^{3}$. Accordingly, a multi-qubit $P \in \mathcal{P}_n$ with symplectic representation $(x|z) \in \mathbb{F}_2^{2n}$ is mapped to $(x|z|x+z) \in H^{\oplus n} \subset \mathbb{F}_2^{3n}$. The product of two elements in $\mathcal{P}_n$ still corresponds to addition in $\mathbb{F}_2^{3n}$. 
For integers $r'$ and $n'$, we introduce the map $R_{2 \mapsto 3}:\mathbb{F}_2^{r' \times 2n'} \rightarrow \mathbb{F}_2^{r' \times 3n'}$ that maps an $r' \times (2n')$ symplectic tableau $T$ to its three-block representation.  
The decoding problem in Eqn.~\eqref{eq:decoding_short_Z} that has to be solved to find a short representative $Z'$ becomes:
\begin{equation}\label{eq:decoding_short_Z_threeblock}
    \mathrm{argmax}_{e \in \mathbf{S}} q(e) \, | \, \mathbf{S}=\{e \in \mathbb{F}_2^{3(n-|L|)} | R_{2 \mapsto 3}(T')e=(0,0,0,\ldots,1)^T\}.
\end{equation}
Using the three-block representation, one can now assign the weights $q^P[i]$ to computational basis vectors in $\mathbb{F}_2^{3(n-|L|)}$. As a result, equal weights $q^P[i]$ for every $P$ $\in \{X,Y,Z\}$ and later the modified weights defined by Eqn.~\eqref{eq:q_punish} can be passed to the BP-OSD decoder. 

We did not insert any matrix analogous to $\Omega$ in Eqn.~\eqref{eq:decoding_short_Z_threeblock} that switches the positions of $x$- and $z$-components. That is because a solution $e$ cannot as straightforwardly be mapped to a Pauli operator as in the case of symplectic representation. We let $e$ be a general element of $\mathbb{F}_2^{3(n-|L|)}$. Let $e[i]$ of a solution $e \in \mathbb{F}_2^{3(n-|L|)}$ denote the restriction of $e$ onto the components of qubit $i$, analogously to the notation used for symplectic representation. Let $e[i]=(e_X|e_Z|e_Y)$. Then, per qubit, $e[i]$ is mapped to its symplectic representation $e'[i] \in \mathbb{F}_2^{2}$ via the map $\lambda$:  
\begin{equation}\label{eq:lambda}
    \lambda(e[i])=e_X(0|1)+e_Z(1|0)+e_Y(1|1).
\end{equation}  
The reader can check that this way the components of $e$ can be interpreted as indicators for which Pauli operators the operator $e$ anti-commutes with\footnote{The map $\lambda$ in Eqn.~\eqref{eq:lambda} can be derived by considering the standard matrix product over $\mathbb{F}_2$ of some stabiliser $t \in \mathbb{F}_2^{1 \times (3n)}$ in three-block notation with some $e  \in \mathbb{F}_2^{(3n) \times 1}$. Let us consider one qubit. As one can easily check, an error $e=(1|0|0)^T$ fulfils $t \cdot e=1$ for some stabiliser $t$ in three-block notation if and only if the $x$-component of $t$ is 1, i.e. if $t=(1|0|1)$ or $t=(1|1|0)$. Thus, we know that $e=(1|0|0)^T$ anti-commutes with the single-qubit Paulis $X$ and $Y$, making it a $Z$-error. Similarly, we can argue that $e=(0|1|0)^T$ and $e=(0|0|1)^T$ correspond to an $X$ and a $Y$-error respectively. Therefore, Eqn.~\eqref{eq:lambda} follows.
}.

\section{Numerics and discussion}\label{sec:numerics}

In this section, we present a numerical analysis of our deterministic and heuristic algorithms. We first use our algorithms to show the existence of a $g$-SPF threshold near $p=1/2$ for the rotated surface code,\footnote{We choose the rotated surface code due to its higher encoding efficiency, making it easier to simulate. Note that the arguments made in Section~\ref{sec:threshold_surfacecode} carry over to the rotated surface code: identifying qubits with the edges between the two checks that contain them yields an analogous rectangular lattice, on which logical representatives are again boundary-to-boundary paths and lost qubits are closed edges.} consistent with Theorem~\ref{thm:threshold_surface_informal}. We then compare the runtime of D-LoFi and H-LoFi, and the minimisation performance of H-LoFi against D-LoFi in terms of the size of the total support of the SPF pair solution $(X,Z)$. Furthermore, we analyse the performance of our algorithms as the code distance (system size) is increased. 

We then reproduce the existence of loss thresholds for Target-SPF on two graph codes from Ref.~\cite{morley-shortLosstolerantTeleportationLarge2019}. More specifically, we consider the triangular and crazy graph code, shown in Fig.~\ref{fig:channels}. Additionally, we compare the runtime of our algorithm against the runtime of the algorithm in Ref.~\cite{morley-shortLosstolerantTeleportationLarge2019} for the crazy graph code. All numerical simulations in this section were performed on the DelftBlue supercomputer at TU Delft~\cite{DHPC2024}. We use Google's CP-SAT solver from Ref.~\cite{ortools,cpsatlp} for D-LoFi and the BP-OSD decoder from the Python ldpc package from Ref.~\cite{roffeDecodingQuantumLowdensity2020,Roffe_LDPC_Python_tools_2022}.

We will approximate the success rate of $P_p(L_{\mathrm{success}}(\leq g))$ for $g$-SPF in the presence of loss in the following way: For every $p$ we will sample $N_{L}$ loss configurations according to the probability distribution $P_p(L)$ in Eqn.~\eqref{eq:loss_prob}. We will check for which loss configurations our algorithms find a $g$-SPF pair and divide the number of successful instances by $N_L$, i.e.:
\begin{equation}\label{eq:P_p_estimator}
    P_p(L_{\mathrm{success}}(\leq g)) \approx \frac{\sum_{L \sim P_p(L)} \delta_L(g\text{-SPF pair exists for }L) }{N_L},
\end{equation}
where $\delta_L(g\text{-SPF pair exists for }L)$ is zero if no $g$-SPF pair exists for the loss configuration $L$ and one if it exists. While D-LoFi is exact and thus applying Eqn.~\eqref{eq:P_p_estimator} to its output is an unbiased estimator for the actual $g$-SPF success rate that converges to the real rate for $N_L\to \infty$, we do not have such guarantees for H-LoFi. However, the point is to approximate the real success rate of perfect $g$-SPF accurately with either algorithm.

In the case of Target-SPF, we, following Ref.~\cite{morley-shortLosstolerantTeleportationLarge2019}, assume that the target and input qubit are never lost, i.e. the loss configuration $L$ is only sampled as a subset of $Q - \{i,t\}$, where $i$ and $t$ stand for the input and the target qubit.  $P_p: \mathbf{P}(Q - \{i,t\}) \rightarrow [0,1]$ is then given by a slight modification of Eqn.~\eqref{eq:loss_prob}, namely $ P_p(L)= p^{|L|}(1-p)^{n-|L|-2}$ for any $L \in \mathbf{P}(Q - \{i,t\})$. We then define $L_{\mathrm{success}}(t)$ and the estimator for the success rate $P_p(L_{\mathrm{success}}(t))$ as:
\begin{align}
    P_p(L_{\mathrm{success}}(t)) &\approx \frac{\sum_{L \sim P_p(L)} \delta_L(\text{Target-SPF pair exists for }L) }{N_L}, \label{eq:P_p_estimator_target}\\
    L_{\mathrm{success}}(t)&:=\{L \in \mathbf{P}(Q - \{i,t\})\big | \text{Target-SPF pair exists for }L \, \text{on target qubit } t\}.
\end{align}
 Note that  $L_{\mathrm{success}}(t)$ and $L_{\mathrm{success}}(\leq g)$ use similar notation, but differentiate between the events that a \emph{target}-SPF pair exists and that a $g$-SPF pair exists. In the latter case the pair is only required to anti-commute qubit-wise on at most $g$ qubits, whereas in the former case they must anti-commute on exactly the target qubit.

Even though minimising the support size of the SPF pair $(X,Z)$ is not necessary for computing the threshold, we chose to design D-LoFi such that it minimises the support of the SPF pair and H-LoFi such that it prefers solutions with small total support.

\begin{figure*}[t]
    \begin{minipage}{0.49\textwidth}
        \includegraphics[width=0.95\linewidth]{Thresholds_D-LoFi.pdf}
        \subcaption{Method: D-LoFi}
    \end{minipage}
    \begin{minipage}{0.49\textwidth}
        \includegraphics[width=0.95\linewidth]{Thresholds_H-LoFi.pdf}
        \subcaption{Method: H-LoFi}
    \end{minipage}
    \caption{$g$-SPF threshold, where $g=1$, for the rotated surface code. For every loss probability we sample at least 5000 loss configurations according to an i.i.d. loss probability distribution and estimate the success rate according to Eqn.~\eqref{eq:P_p_estimator}. The standard error of the mean is too small to be visible in the figure and has been omitted for that reason.}
    \label{fig: sc_thres}
\end{figure*}

\begin{figure*}[h!]
    \begin{minipage}{0.49\textwidth}
        \includegraphics[width=0.95\linewidth]{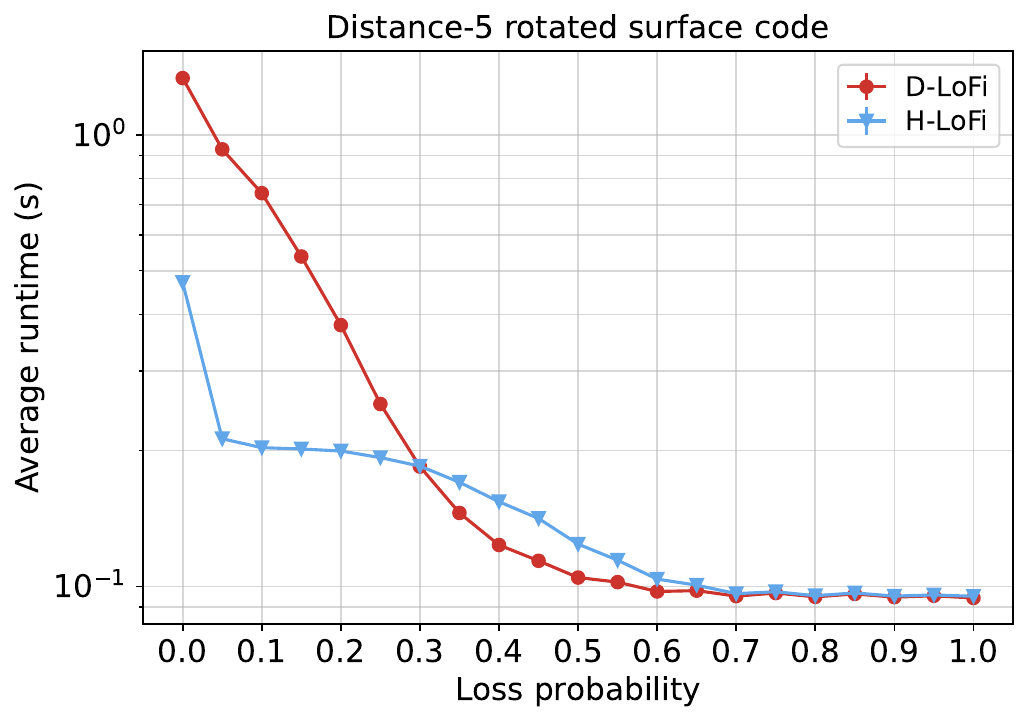}
        \subcaption{}
    \end{minipage}
    \begin{minipage}{0.49\textwidth}
        \includegraphics[width=0.95\linewidth]{p_0.1.pdf}
        \subcaption{}
    \end{minipage}
    \begin{minipage}{0.49\textwidth}
        \includegraphics[width=0.95\linewidth]{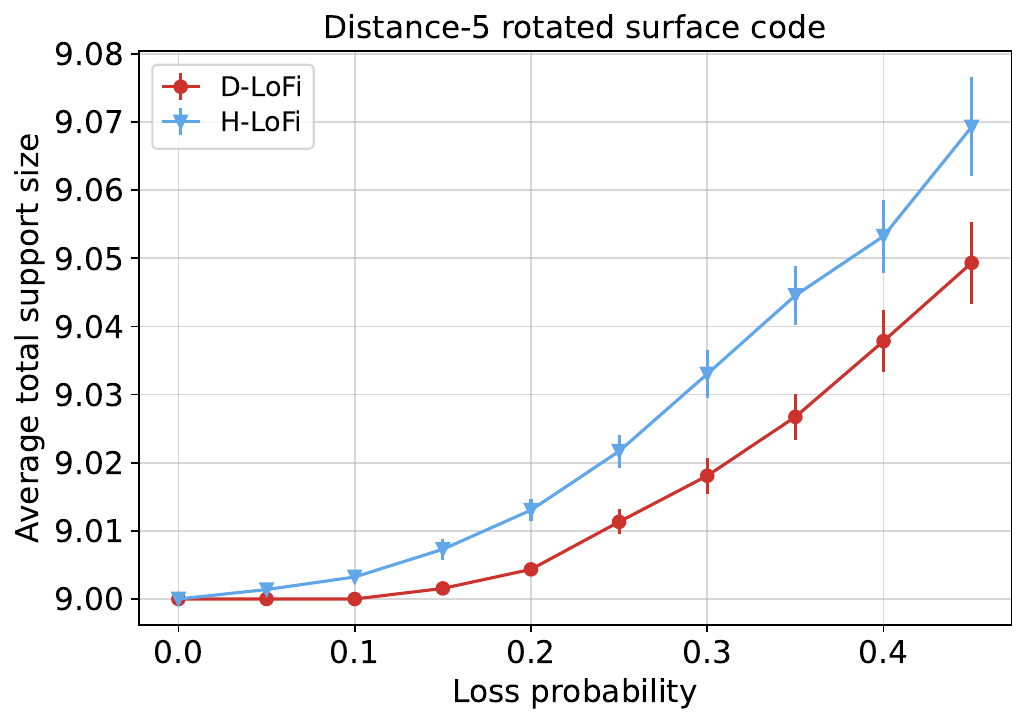}
        \subcaption{}
    \end{minipage}
    \begin{minipage}{0.49\textwidth}
        \includegraphics[width=0.95\linewidth]{sup_size_d5_p01.pdf}
        \subcaption{}
    \end{minipage}
    \caption{Here, we benchmark the runtimes and minimisation performance of D-LoFi and H-LoFi. We plot the averages over 5000 sampled loss configurations from the experiment run to obtain Fig.~\ref{fig: sc_thres}, along with the standard error of the mean (SEM). The SEM is sometimes too small to be visible. (a) shows the average runtime for the distance-5 rotated surface code with increasing loss probability. In (b), we plot the runtime with increasing code distance for a fixed loss probability ($p=0.1$). The runtime of our algorithms decreases with loss probability and increases with the code distance. (c) shows that the average size of the total support $|\supp(X)\cup \supp(Z)|$ obtained from the two algorithms increases with the loss probability, up to the threshold $p=0.5$. Here, only cases in which both algorithms find a $g$-SPF pair are considered. (d) shows the relative total support size compared to the no-loss case with increasing code distance at a fixed loss probability ($p=0.15$). We observe that the increase in the optimal support size (from D-LoFi) is very small with the code distance. This suggests that the effect of loss on the size of the optimal $g$-SPF pair becomes weaker for larger-distance rotated surface codes. Moreover, H-LoFi manages to attain the optimal solution (given by D-LoFi) in the vast majority of cases.}
    \label{fig: sc_thres_d5}
\end{figure*}
 In Fig.~\ref{fig: sc_thres}, we show the $g$-SPF threshold for the rotated surface code, with the value of $g$ fixed to 1. We consider code distances from 3 to 9. For each loss probability, we sample at least $N_L=5000$ loss configurations and test whether logical operators satisfying $g$-SPF can be found.  ($N_L$ is larger than 5000 for some probabilities and distances in order to keep the statistical error as small as possible.)  We plot the rate of success given by Eqn.~\eqref{eq:P_p_estimator} as a function of loss probability. The results from D-LoFi and H-LoFi are shown in Fig.~\ref{fig: sc_thres}(a) and Fig.~\ref{fig: sc_thres}(b), respectively. Both our algorithms validate the existence of a $g$-SPF threshold near $p=1/2$ for the rotated surface code, notably for $g=1$, which is a stricter statement than Theorem~\ref{thm:threshold_surface_informal}. Even though H-LoFi is not exact, we observe that H-LoFi recovers very similar success rates to D-LoFi and can thus be applied as a fast alternative for the analysis of localisation in stabiliser codes. 

We compare the runtime of D-LoFi and H-LoFi in Fig.~\ref{fig: sc_thres_d5}(a) and Fig.~\ref{fig: sc_thres_d5}(b). We average the runtimes over the loss configurations sampled for the creation of Fig.~\ref{fig: sc_thres} and compute the average runtimes for each loss probability and code size. In Fig.~\ref{fig: sc_thres_d5}(a), we compare the average runtime of the two algorithms for increasing loss probability.  We observe that H-LoFi is faster than D-LoFi for low loss probabilities.  

For a small range of high loss probabilities above the threshold $p=1/2$, D-LoFi is faster, contrary to what one would expect.
Moreover, we observe in Figures~\ref{fig: sc_thres}(a) and (b) that H-LoFi does not reach the same success rates as D-LoFi for small codes in the low probability regime. These two phenomena may have the following reasons: The pre-processing is the exact same for both algorithms and aborts the entire algorithm early and outputs failure if logical information has been destroyed. Above the erasure decoding threshold of $p=1/2$, this pre-processing dominates, which is why the behaviours of both algorithms are almost indistinguishable. Hence, H-LoFi needs to be faster than D-LoFi for a small number of instances to be faster on average. Additionally, at high probabilities more stabiliser generators are lost and the local structure of the code may not be preserved very well by the pre-processing. Very small codes, such as the distance-3 code, are sensitive to this disturbance of the local structure already at small probabilities. The BP-OSD decoder from Refs.~\cite{Roffe_LDPC_Python_tools_2022,roffeDecodingQuantumLowdensity2020} may struggle to converge due to this arbitrary non-local structure, whereas CP-SAT may be able to handle arbitrary stabiliser tableau structures better. This could be why H-LoFi does not yield the same success rates as D-LoFi for very small codes and becomes slower for high loss probabilities regardless of code size. A more careful manual implementation of the decoder instead of relying on the ldpc python package may be needed for more robust behaviour of H-LoFi in these regimes.

In Fig.~\ref{fig: sc_thres_d5}(b), we plot the runtime of our algorithms as a function of code distance. We fix the loss probability at 0.1. We observe that the runtime of D-LoFi increases fast with the code distance. On the other hand, the heuristic H-LoFi algorithm has a more favourable scaling compared to D-LoFi, demonstrating that it serves as a suitable alternative to the exact but slow D-LoFi. In order to evaluate the differences between solutions found by D-LoFi and the ones found by H-LoFi, we plot the average total support size of the solution pair $(X,Z)$ , i.e. $|\supp(X) \cup \supp(Z)|$, for increasing loss probability in Fig.~\ref{fig: sc_thres_d5}(c). Note that the total support size is an integer, meaning the non-integer numbers on the $y$-axis indicate that the total support size is fluctuating between \emph{integer} values between 9 and some large integer. We fix the distance of the rotated surface code to be 5.
  We only consider loss configurations where both our algorithms find solutions to $g$-SPF. We observe that the average support size increases with the loss probability for both algorithms, meaning at higher loss probabilities one finds measurement patterns with increasingly large support. H-LoFi diverges from D-LoFi at higher loss probabilities, but only to a very small extent. In Fig.~\ref{fig: sc_thres_d5}(d) we fix the loss probability $p$ at 0.15 and increase the distance. We plot the relative increase in total support size compared to the no-loss case. That is, we find a measurement pattern $(X_0,Z_0)$ at loss probability $p=0$ with minimal support and divide the total support $|\supp(X) \cup \supp(Z)|$ for each measurement pattern found at loss probability $p=0.15$ by $|\supp(X_0) \cup \supp(Z_0)|$. We observe that it becomes harder for H-LoFi to find the optimal solutions at higher distances and the gap to the optimal solution seems to grow exponentially, albeit with a very small exponent. We conclude that the H-LoFi algorithm provides a close approximation of D-LoFi, which finds the exact optimum. Therefore, H-LoFi is a suitable alternative to D-LoFi. Moreover, when one is not interested in the SPF pair with minimal support, for example when computing a threshold, H-LoFi constitutes a fast and accurate method, as demonstrated in Fig.~\ref{fig: sc_thres}.

\begin{figure*}[t]
    \begin{minipage}{0.49\textwidth}
        \includegraphics[width=0.95\linewidth]{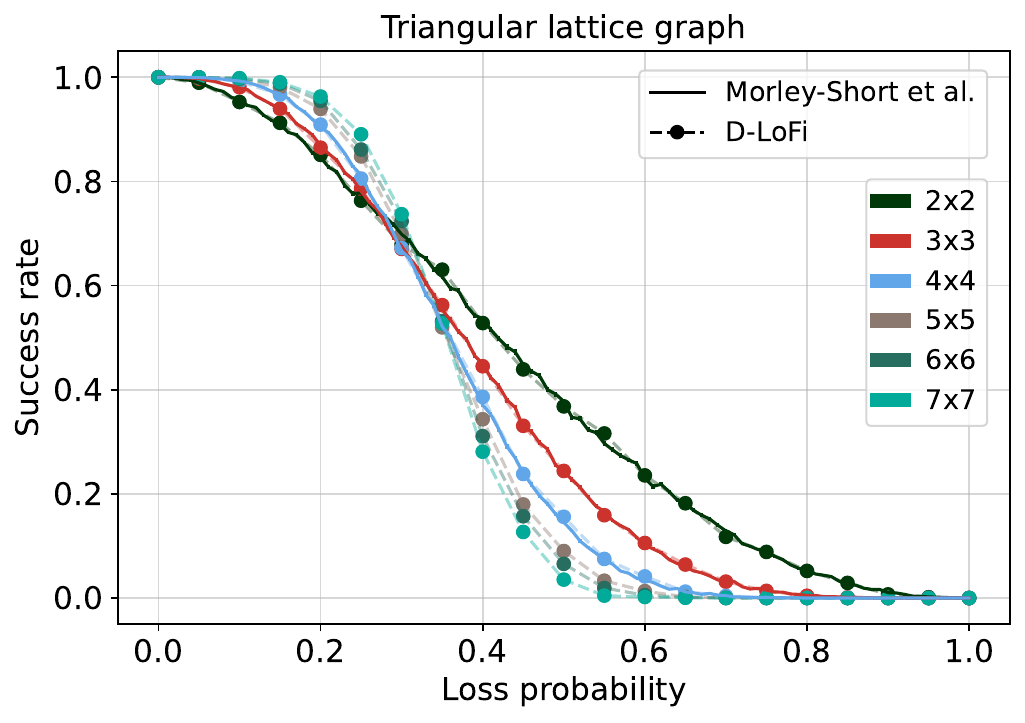}
        \subcaption{}
    \end{minipage}
    \begin{minipage}{0.49\textwidth}
        \includegraphics[width=0.95\linewidth]{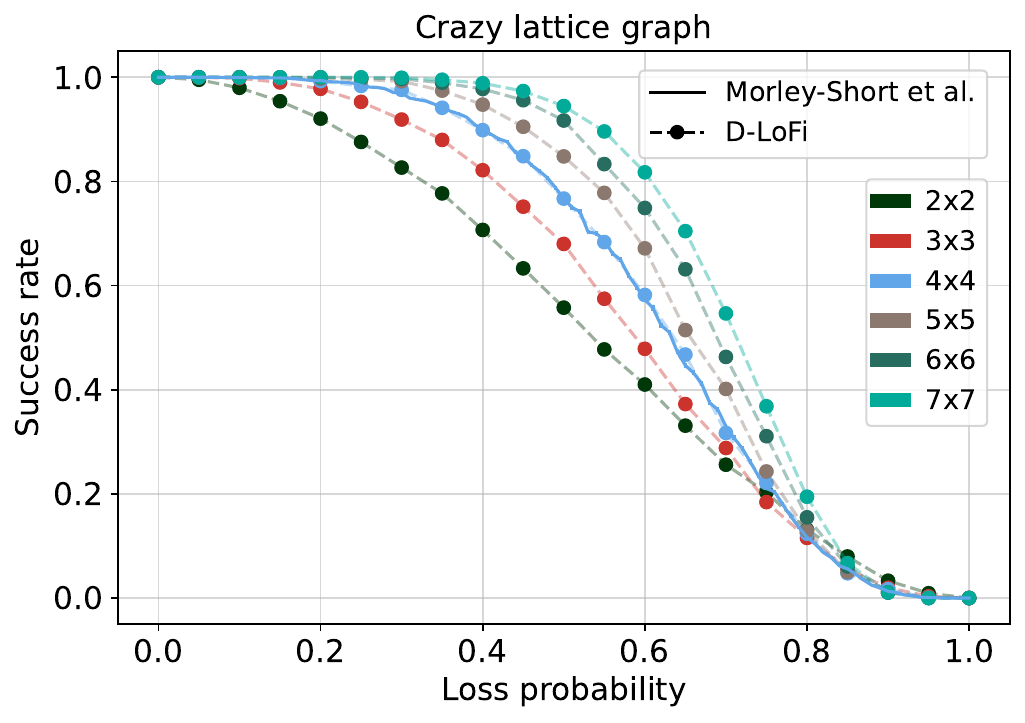}
        \subcaption{}
    \end{minipage}
    \caption{Threshold behaviour comparison of graph codes in the presence of loss. We sample 2000 loss configurations for each loss probability according to the i.i.d. single-qubit loss probability $p$. We plot the success rate of Target-SPF for the triangular lattice and crazy lattice graph codes depicted in Fig.~\ref{fig:channels} for increasing losses in {(a)} and {(b)} respectively. We consider the lattices of shape $l \times l$, and we increase the value of $l$ from 2 to 7. We also plot the data accompanying Ref.~\cite{morley-shortLosstolerantTeleportationLarge2019}, which is only available for lattice sizes up to $l=4$ for the triangular lattice graph code and for the single lattice size of $l=4$ for the crazy graph code. For the available data, we successfully reproduce similar results using the D-LoFi algorithm. We go further and plot the success rate of Target-SPF for lattices with size up to $7 \times 7$. The target qubits are the last qubits with respect to the indexation given in Fig.~\ref{fig:channels}. The SEMs are too small to be visible and have been omitted.}
    \label{fig: gc_thres_comparison}
\end{figure*}

We continue by comparing D-LoFi for Target-SPF to the algorithm from Ref.~\cite{morley-shortLosstolerantTeleportationLarge2019}.
In Fig.~\ref{fig: gc_thres_comparison} we recreate the convergence towards a loss threshold for Target-SPF that was demonstrated in Ref.~\cite{morley-shortLosstolerantTeleportationLarge2019}. We consider the triangular and crazy graph code (see Fig.~\ref{fig:channels}), and plot our results created using the D-LoFi algorithm together with the data obtained from Ref.~\cite{morley-shortLosstolerantTeleportationLarge2019}. As discussed in Section~\ref{sec:graph_codes}, the graph codes from Ref.~\cite{morley-shortLosstolerantTeleportationLarge2019} are parametrised by $l$, plus one input qubit and one target qubit. Ref.~\cite{morley-shortLosstolerantTeleportationLarge2019} assumes in its analysis that neither of these qubits is lost; therefore, we assume the same for these plots. 

We demonstrate in Fig.~\ref{fig: gc_thres_comparison} that D-LoFi yields the same success rates as in Ref.~\cite{morley-shortLosstolerantTeleportationLarge2019}, confirming the validity of our method. We run the D-LoFi algorithm for codes with lattices of sizes up to $7 \times 7$ (51 physical qubits). On the other hand, the data from Ref.~\cite{morley-shortLosstolerantTeleportationLarge2019} only goes up to the $4 \times 4$ code (18 physical qubits).

\begin{figure}[h!]
\hspace{0.225\textwidth}
    \begin{minipage}{0.99\textwidth}
        \includegraphics[width=0.5\linewidth]{ms_rt_comparison_crazy.pdf}
    \end{minipage}
    \caption{Here, we plot the runtime comparison of D-LoFi with the algorithm from Ref.~\cite{morley-shortLosstolerantTeleportationLarge2019} (labelled as \emph{Morley-Short et al.}) when searching for an SPF pair with minimal support. We measured the build time, which is the time to list all non-trivial stabilisers (see Section~\ref{sec:SPFprevious}) only once, as the algorithm is deterministic and because we only had limited computational resources. We do the same for the XZ list creation time, which is the time to create a list of all possible combinations of non-trivial $(X,Z)$ that fulfil the Target condition. We then sample either all possible loss configurations (for small graphs) or 1000 unique loss configurations and compute the average time it takes to find the SPF pair with minimal support, along with the standard error of the mean, which is too small to be visible. Due to the long runtime of the algorithm from Ref.~\cite{morley-shortLosstolerantTeleportationLarge2019}, we could only scale the crazy lattices to the $4 \times 4$ lattice. The build time exceeded the imposed wall time limit of 60,000 seconds ($\sim$ 16 hours and 40 minutes) for larger lattices. We compare the runtimes to the average runtime of D-LoFi for the same loss configurations. We observe that our algorithm has a significantly shorter runtime and better scaling with increasing lattice size, allowing us to scale up to the $7  \times 7$ lattice. We ran these simulations on the DelftBlue supercomputer on one CPU core with 1 GB of memory.}
    \label{fig: ms_rt_comp}
\end{figure}

We compare the runtime for the algorithm from Ref.~\cite{morley-shortLosstolerantTeleportationLarge2019} with D-LoFi for finding minimum logical support size while solving Target-SPF. We consider the crazy graph code with increasing lattice size.  The algorithm from Ref.~\cite{morley-shortLosstolerantTeleportationLarge2019} consists of (1) exhaustively listing all possible non-trivial stabilisers, where they take advantage of the graph structure of the code, and (2) creating a list of all possible non-trivial logical operator combinations $(X,Z)$, sorted by support size. In step (2) they create logical representatives fulfilling the Target condition by multiplying non-trivial stabilisers with some starting representative and checking whether the resulting logical is non-trivial. (For more details we refer to the GitHub repository~\cite{sammorley_short_spf_2019}.) In Fig.~\ref{fig: ms_rt_comp}, we call the time to perform step (1) the `build time' and the time to perform step (2) the `XZ list creation time'. It should be noted that the sorting process that happens in step (2) can already be considered part of the search for a minimal SPF pair. We then sample loss configurations and search the list for an SPF pair with no support on the lost qubits using a primitive loop that starts with the lowest-weight pairs in the list. We call this the `minimum pair time' and plot its average in Fig.~\ref{fig: ms_rt_comp}. The search could be optimised, but as it is by far the fastest part of the entire method, our focus is on the creation of the look-up table. We compare the runtimes to the runtime of D-LoFi averaged over the same unique loss configurations. 
We ran all code on the DelftBlue supercomputer on one CPU core with 1 GB of memory.

We observe that both the build time and the XZ list creation time from Ref.~\cite{morley-shortLosstolerantTeleportationLarge2019} scale orders of magnitude worse than the average runtime of the D-LoFi algorithm. Moreover, due to the unfavourable scaling of their algorithm, we could only run these simulations for lattices of size up to $4 \times 4$, as simulations for larger codes exceeded the 60,000 seconds ($\sim$ 16 hours and 40 minutes) wall time limit. In order to simulate larger codes, one could increase the number of cores used on the supercomputer. However, we are interested in a simple comparison between D-LoFi and the algorithm from Ref.~\cite{morley-shortLosstolerantTeleportationLarge2019} and hence use the same computational resources for both of them.
One argument for the exhaustive listing of all possible measurement patterns $(X,Z)$ in Ref.~\cite{morley-shortLosstolerantTeleportationLarge2019} is that it creates a look-up table which can then enable very fast identification of measurement patterns in an experimental setting. However, our algorithms can create look-up tables on the fly. That is, one can simply sample loss configurations and store each measurement pattern, creating a cache. As D-LoFi is orders of magnitude faster than the exhaustive method from Ref.~\cite{morley-shortLosstolerantTeleportationLarge2019}, running D-LoFi for thousands of instances could still be faster than the creation of the look-up table in Ref.~\cite{morley-shortLosstolerantTeleportationLarge2019}. Moreover, our heuristic algorithm H-LoFi could be adapted for Target-SPF and can easily be randomised to yield different SPF solutions for the same loss configuration. D-LoFi technically need not be formulated as a minimisation problem, but rather as a satisfiability problem, for which CP-SAT can be randomised. Moreover, CP-SAT from Ref.~\cite{ortools,cpsatlp} can also exhaustively list all solutions. All of this is to say that our methods can also be used to create a large number of valid measurement patterns. On top of that, the search for a measurement pattern fulfilling the Target- and Lost-Qubits condition in the list of all combinations of non-trivial logicals is not necessarily a small task, as also evidenced by the scaling of the minimum pair time in Fig.~\ref{fig: ms_rt_comp}.

Lastly, Fig.~\ref{fig: tvsg_spf_gc} shows the success rate of finding SPF pairs for increasing loss probability for Target-SPF (fixed target) vs $g$-SPF, where $g=1$ (no fixed target). For Target-SPF, we assume that the target qubit cannot be lost (same as Ref.~\cite{morley-shortLosstolerantTeleportationLarge2019}). In the case of $g$-SPF, we assume that all the code qubits can be lost. We observe that relaxing the Target condition to requiring $X$ and $Z$ to anti-commute on any arbitrary qubit yields higher success rates, demonstrating that the a priori designation of a target qubit may sometimes be sub-optimal.

\begin{figure*}[t]
    \begin{minipage}{0.49\textwidth}
        \includegraphics[width=0.95\linewidth]{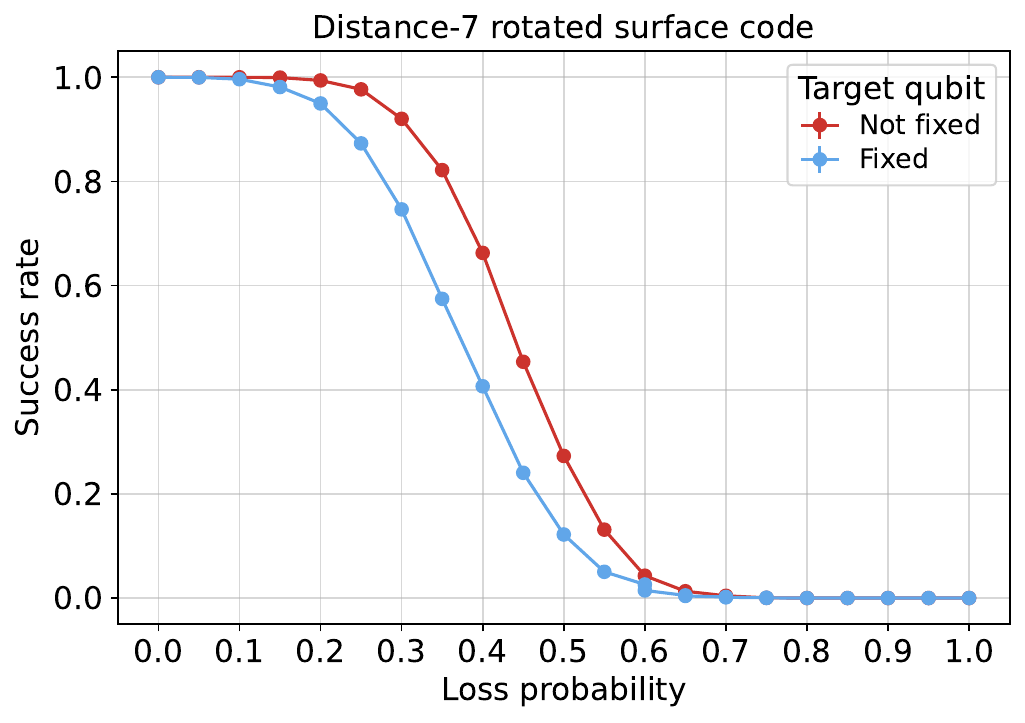}
        \subcaption{}
    \end{minipage}
    \begin{minipage}{0.49\textwidth}
        \includegraphics[width=0.95\linewidth]{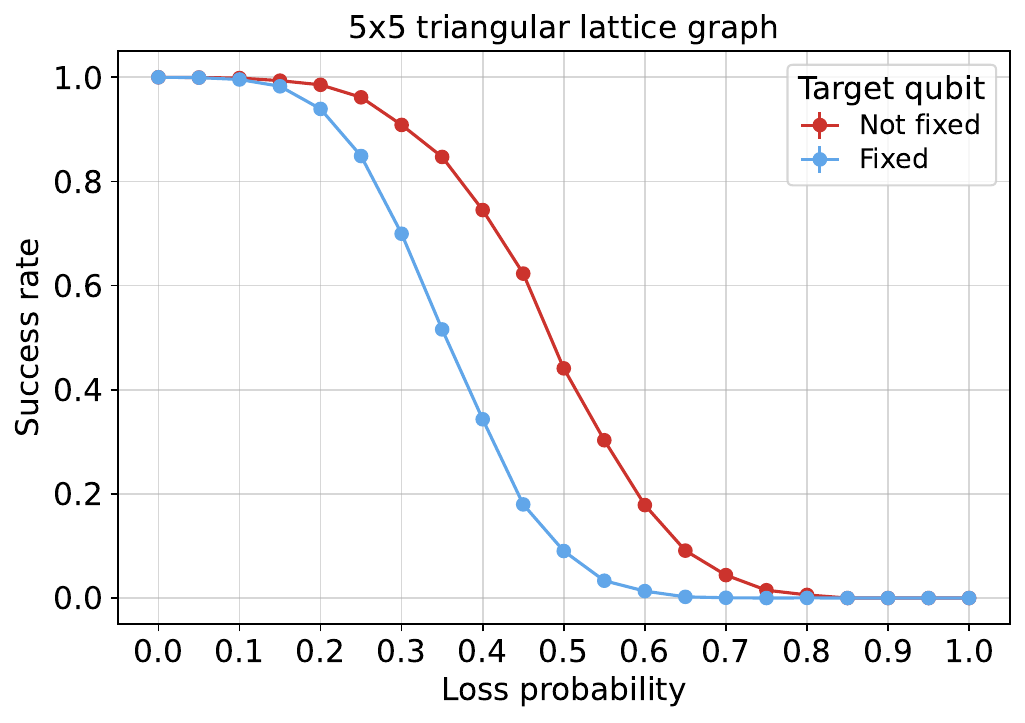}
        \subcaption{}
    \end{minipage}
    \caption{Here, we compare the success rate of the D-LoFi algorithm for Target-SPF and $g$-SPF with $g=1$. In (a), we sample 3500 loss configurations for each loss probability, while in (b) we sample 2000 loss configurations for each loss probability. The SEMs are too small to be visible. In the case of Target-SPF we assume that the input and the target qubit are never lost. We consider the distance-7 rotated surface code in (a), and the $5 \times5$ triangular lattice graph code in (b). For both codes we observe that choosing a fixed target qubit reduces the average success rate compared to letting it be arbitrary. }
    \label{fig: tvsg_spf_gc}
\end{figure*}

\section{Conclusion and outlook}\label{sec:concl}

In this work, we have provided a broad analysis of the stabiliser path finding (SPF) problem and associated localisation phenomena. We have proved that the planar surface code has a localisation threshold of $p=1/2$ via $g$-SPF for constant $g$, meaning that for sufficiently large physical codes one can localise information onto a constant number of qubits via single-qubit Pauli measurements with a probability converging to one. Moreover, we have designed and implemented two algorithms for Target- and $g$-SPF: (1) A deterministic algorithm (D-LoFi) based on the formulation of SPF as an integer linear program. (2) A heuristic algorithm (H-LoFi) based on formulating SPF as a decoding problem. For a given code and a loss configuration $L$, both algorithms directly find a measurement pattern $(X,Z)$ (H-LoFi only for $g$-SPF).
 
We use our H-LoFi and D-LoFi algorithms to solve $g$-SPF for the rotated surface code with code distances up to 9 (81 physical qubits). We show the existence of loss thresholds for the rotated surface code for $g=1$, consistent with our analytical result in Theorem~\ref{thm:threshold_surface_informal}. H-LoFi, despite not being an exact algorithm, recovers similar success rates to the deterministic D-LoFi. Furthermore, we show that H-LoFi scales favourably with the system size compared to D-LoFi. 
We also demonstrate the use of the D-LoFi algorithm to solve Target-SPF for graph codes. We compute the SPF success rates for two graph codes, namely the triangular and the crazy graph code considered in Ref.~\cite{morley-shortLosstolerantTeleportationLarge2019}. We first compare the results from the D-LoFi algorithm with the data from Ref.~\cite{morley-shortLosstolerantTeleportationLarge2019,sammorley_short_spf_2019} and reproduce similar loss thresholds. Moreover, we extend the analysis to graph codes of larger size (up to 51 physical qubits) beyond the 18 physical qubits that were considered previously. Finally, by comparing the runtimes of the algorithm from Ref.~\cite{morley-shortLosstolerantTeleportationLarge2019,sammorley_short_spf_2019} and of the D-LoFi algorithm for Target-SPF we show that our algorithm is substantially faster and scales better.

There are many interesting open directions to be investigated further. 
On the theoretical side, an interesting question is whether our proof for the existence of a localisation threshold for the planar surface code (i.e. Theorem~\ref{thm:threshold_surface_informal}) can be extended to general codes. To be precise, it would be interesting to see which properties of a code make it well suited for localisation via $g$- and Target-SPF. The proof in Section~\ref{sec:threshold_surfacecode} uses support intersection as a proxy for qubit-wise anti-commutation. However, heavily overlapping operators may still anti-commute qubit-wise on a small number of qubits, meaning intersection is a loose proxy. Hence, a generalisation of the proof possibly using a notion of disjointness that is aware of qubit-wise commutation is needed. 

The authors of Ref.~\cite{PhysRevA.110.052617} consider localisation via LOCC of quantum information \emph{in the absence of loss} and derive a necessary and sufficient condition for the existence of such a protocol. The condition and the efficiency of the algorithm construction rely on the expressibility of a \emph{pure} stabiliser state as a graph state and thus cannot be trivially applied to the generally mixed state after loss. It would be interesting to see whether one can find a necessary and sufficient condition for localisation via LOCC after loss and whether it is related to the existence of an SPF pair. In addition, it would be interesting to analyse whether there exist fundamental limits on the localisation success rate for a code undergoing qubit loss, similar to  Shannon's channel capacity bound for communication over noisy channels~\cite[Chs.~11 and 12]{nielsenQuantumComputationQuantum2010}. Another future direction could be the consideration of multiple logical qubits in $g$-SPF and Target-SPF. 

Furthermore, we here did not analyse the complexity of SPF as a decision problem, but we conjecture it is NP-complete given its similarity to the problem of deciding whether a given  stabiliser state can be transformed into a set of Bell pairs on specific vertices using only single-qubit Clifford operations, single-qubit Pauli measurements and classical communication~\cite{Dahlberg2020transforming}. In Ref.~\cite{PhysRevA.110.052617}, the authors also construct an efficient algorithm to find an LOCC localisation protocol. We suspect that no such efficient algorithm can be found after loss.  
Practically, SPF has previously been utilised in the loss-tolerance analysis of graph codes in Ref.~\cite{bellOptimizingGraphCodes2023}. There, the authors use SPF to search for graph codes well suited for teleportation, fusion and other applications. Considering that our methods demonstrate a considerable speed-up, one could use them to include larger graph codes in the search. It should be noted that Ref.~\cite{bellOptimizingGraphCodes2023} discusses heralded as well as unheralded loss. In the case of unheralded loss one only detects qubit loss when one attempts to measure the respective qubit. To model unheralded loss, one needs to model these single-qubit measurements. However, the pre-processing can straightforwardly be extended to take previous measurements as an input. (For a more detailed explanation see Appendix~\ref{app:measurements}.) We would like to extend this work by trying to characterise optimal measurement policies for SPF in the case of unheralded loss, both numerically and analytically. To find good strategies for localisation after loss, one could try to find analogies to the findings in Ref.~\cite{goodenough2026optimalfusionstrategiesquantum}, where the authors discuss optimal fusion strategies in the case of fusion measurements that can fail. Additionally, combining the work in Ref.~\cite{goodenough2026optimalfusionstrategiesquantum} with our work here may lead to an interesting future direction of finding fusion failure strategies in the presence of photon loss. 
On the numerical side, one could for example apply reinforcement learning to find good SPF strategies, similarly to Refs.~\cite{PhysRevApplied.23.034048,doherty2026faststabilizerstatepreparation}. When searching for good policies in the case of unheralded loss, creating a large number of measurement patterns may be advantageous for the fast evaluation of some heuristic for choosing the next measurement. For example, in Ref.~\cite{morley-shortLosstolerantTeleportationLarge2019} the authors evaluate how many measurement patterns $(X,Z)$ a specific qubit is part of. Ref.~\cite{bellOptimizingGraphCodes2023} also creates look-up tables and performs exhaustive tree-searches based on them. Even though our algorithms find one measurement pattern per loss configuration, they can, as discussed in Section~\ref{sec:numerics}, also be used to create a large number of valid measurement patterns on the fly. That is, one can create look-up tables while already solving SPF. Hence, one need not sacrifice speed and can use our algorithms to systematically search for strategies and suitable codes for SPF in the case of unheralded loss.

To keep the comparison with the methods from Ref.~\cite{morley-shortLosstolerantTeleportationLarge2019} fair, our proposed algorithms have not been optimised for performance and serve as proof-of-principle implementations.  Multi-threading and multi-processing could be used to speed up both our algorithms further. More specifically, the D-LoFi algorithm could be sped up by making better use of all features in CP-SAT~\cite{ortools,cpsatlp}, which we used to solve the ILP. Furthermore, the heuristic algorithm could be improved by encoding the anti-commutation condition directly as a constraint into the decoder from Refs.~\cite{Roffe_LDPC_Python_tools_2022,roffeDecodingQuantumLowdensity2020} that was used for H-LoFi, instead of treating it like a black box, as we have done in this work. Target-SPF could also be directly encoded this way. 

When building experimental loss-tolerant photonic systems, such as the one in Ref.~\cite{qec_in_photonic}, our work would allow for faster localisation than previously possible. Further runtime reductions could allow for the online application of SPF for localisation of information in experimental systems undergoing loss. As SPF can be used for teleportation over a lossy channel or fault-tolerant fusion measurements~\cite{bellOptimizingGraphCodes2023}, our work could lead to higher success rates with respect to these applications.

\section*{Acknowledgements}
The authors thank Francisco Machado and Kenneth Goodenough for helpful discussions and interesting suggestions. All authors acknowledge funding from the Dutch Research Council (NWO) through the Quantum Software Consortium (NWO Zwaartekracht Grant No.024.003.037). JH acknowledges funding through a Veni grant (grant No.VI.Veni.222.331).

\section*{Code availability}
The accompanying code can be found at \url{https://github.com/arr0w-hs/graph_code_decoding}.

\printbibliography
\appendix

\section{D-LoFi and H-LoFi pseudocode}\label{app:algos}
Here we give the pseudo-code for the pre-processing described in Section~\ref{sec:prepro} and D- and H-LoFi.
\begin{algorithm}[H]
\caption{Cleaning the logicals and the stabiliser tableau}\label{algo:prepro}
\KwIn{$r \times (2n)$ stabiliser tableau $T$, $1 \times (2n)$ representatives $X \in [X]$ and $Z \in [Z]$, set of lost qubits $L \subseteq Q$}
\KwOut{Cleaned and reduced $r' \times 2(n-|L|)$ tableau $T'$, cleaned and reduced $1 \times 2(n-|L|)$ representatives $X'$ and $Z'$}
Solve Eqn.~\eqref{eq:cleanlogical} for $X$ and for $Z$: obtain $c_X$, $c_Z$ or no solution\;
\If{no solution}{
    \Return{failure}\;
}
$X' \leftarrow (X + c_X T)$, $Z' \leftarrow (Z + c_Z T)$\;
Solve Eqn.~\eqref{eq:clean_T}: obtain a basis $C_L$ of $\mathcal{C}_L$\;
$T' \leftarrow (C_L T)[Q-L]$\;
\Return{$T'$, $X'[Q-L]$, $Z'[Q-L]$}\;
\end{algorithm}

\begin{algorithm}[H]
\caption{D-LoFi for Target-SPF and $g$-SPF}\label{algo:det}
\KwIn{$r \times 2n$ full-rank stabiliser tableau $T$ with $k=1$,
set of lost qubits $L \subseteq Q$,
and either a target qubit $t \notin L$ (Target-SPF) or an integer $g$ ($g$-SPF)}
\KwOut{SPF pair $(X^\star,Z^\star)$ of minimal support, or failure}

$(X,Z) \gets$ Algorithm~\ref{algo:sympl}$(T)$
\tcp*{$k=1$, so a single symplectic pair is returned}

\tcp{Pre-processing: Lost-Qubits condition (Definition~\ref{def:Lemptyset})}
$(T',X_0,Z_0) \gets$ Algorithm~\ref{algo:prepro}$(T,X,Z,L)$\;
\lIf{failure}{\Return{failure}}

\tcp{Set up the ILP over the $n' = n-|L|$ remaining qubits, re-indexed as $0,\ldots,n'-1$; $t$ denotes the new index of the target qubit}
Declare binary variables $B_x,B_z \in \mathbb{F}_2^{1 \times r'}$ and set
$X = X_0 + B_xT'$, $Z = Z_0 + B_zT'$\;
\tcp{All sums of symplectic products below are taken over $\mathbb{Z}$}
\uIf{Target-SPF}{
  Add $\langle X[t],Z[t]\rangle = 1$ and $\sum_{i \neq t}\langle X[i],Z[i]\rangle = 0$
  \tcp*{Target condition, Definition~\ref{def:tanti}}
}
\Else(\tcp*[f]{$g$-SPF}){
  Add $\sum_{i=0}^{n'-1}\langle X[i],Z[i]\rangle \leq g$
  \tcp*{$g$-condition, Definition~\ref{def:gcondition}}
}
Set the objective to minimise $|\supp(X) \cup \supp(Z)|$\;
Linearise the mod-2 sums and the products of binary variables by introducing auxiliary variables and linear constraints\;

Solve the ILP with an ILP solver\;
\lIf{infeasible}{\Return{failure}}
\Return{$X^\star = X_0 + B_x^\star T'$, $Z^\star = Z_0 + B_z^\star T'$}\;
\end{algorithm}

\begin{algorithm}[H]
\caption{H-LoFi for $g$-SPF}\label{algo:gspf_heuristic}
\KwIn{$r \times 2n$ stabiliser tableau $T$; lost qubits $L$; integer $g$; penalty $q^-$; iterations $N$}
\KwOut{A $g$-SPF pair $(X',Z')$, or failure}
Find a symplectic basis $\{(X_i,Z_i)\}_i$ of $\ker(T\Omega)/\mathcal{R}(T)$ via Algorithm~\ref{algo:sympl}\;
$(X,Z) \gets (X_0,Z_0)$\tcp*{there is only one logical qubit}
Apply pre-processing (Algorithm~\ref{algo:prepro}) to $(T,X,Z,L)$\;
\If{logical information loss}{\Return{failure}}
Obtain reduced $r' \times 2(n-|L|)$ tableau $T'$ and cleaned logicals $(X',Z')$\;
Find a symplectic basis of $\ker(T'\Omega)/\mathcal{R}(T')$ containing $(X',Z')$ via Algorithm~\ref{algo:sympl}\;
$T_0 \gets T'$\;
\ForEach{pair $(X_j,Z_j) \neq (X',Z')$}{
    $T_0 \gets \begin{pmatrix} T_0\\ X_j\\ Z_j \end{pmatrix}$ \tcp*{Eqn.~\eqref{eq:append_sympl}}
}
\For{$\mathrm{iter} \gets 1$ \KwTo $N$}{
    Set weights $q$: equal if $\mathrm{iter}=1$, else drawn at random\;
    \tcp{Part 1: replace $Z'$ with a short representative}
    $T_Z \gets \begin{pmatrix} T_0\\ X' \end{pmatrix}$\;
    Solve Eqn.~\eqref{eq:decoding_short_Z_threeblock} on $R_{2\mapsto 3}(T_Z)$ with syndrome $(0,\dots,0,1)^{\top}$ and weights $q$; obtain short $Z'$\;
    \If{no solution}{\textbf{continue}}
    \tcp{Part 2: find a short $X'$ obeying the $g$-condition}
    $T_X \gets \begin{pmatrix} T_0\\ Z' \end{pmatrix}$\;
    \ForEach{$i \in \supp(Z')$}{
        \ForEach{$P \in \{X,Y,Z\}$ with $P \neq Z'[i]$}{
            $q^P[i] \gets q^-$ \tcp*{Eqn.~\eqref{eq:q_punish}}
        }
    }
    Solve Eqn.~\eqref{eq:decoding_short_Z_threeblock} on $R_{2\mapsto 3}(T_X)$ with syndrome $(0,\dots,0,1)^{\top}$ and weights $q$; obtain $X'$\;
    \If{$X'$ found \textbf{and} $\alpha(X',Z') \leq g$}{\Return{$(X',Z')$}}
}
\Return{failure}\;
\end{algorithm}

\section{Miscellaneous algorithms}\label{app:misc}

Here we explain some miscellaneous algorithms. First, we will discuss how to construct a symplectic basis for a stabiliser group $\Scurvy$. Then, we will explain how to add Pauli measurements to the pre-processing of Section~\ref{sec:prepro}.

\subsection{Construction of a symplectic basis}

Let us define a symplectic basis:
\begin{definition}[Symplectic basis]
Let $\Scurvy \subseteq \mathcal{P}_n$ be a stabiliser group defining $k$ logical qubits. A symplectic basis $\{(X_i,Z_i)\}_i$ of $\mathcal{N}(\Scurvy)/\Scurvy$ consists of $k$ pairs $(X_i,Z_i)$, where each $X_i$, $Z_i$ $\in \mathcal{N}(\Scurvy)-\Scurvy$ and $[X_i,Z_j]=0$ $\forall i \neq j$, $\{X_i,Z_i\}=0$ and $[X_i,X_j]=0$ and $[Z_i,Z_j]=0$ $\forall$ $i,j$.
\end{definition}
In other words, the symplectic basis generates a group isomorphic (up to phases) to the logical Pauli operators $X$ and $Z$ on $k$ qubits. To find a symplectic basis, we can first find a basis $\{b_i\}$ for $\ker(T\Omega)/\mathcal{R}(T)$. However, it is not guaranteed that $\{b_i\}$ fulfil the desired anti-commutation relations. We can create a symplectic basis iteratively.

Given any initial $v \in \{b_i\}$ and $w \in \{b_i\}$ with $\langle v,w \rangle=1$, we find a linearly independent set $\{\tilde{b}_i\}$ that commutes with both $v$ and $w$ element-wise the following way:

\begin{itemize}
\item Take some $b_i \in \{b_i\}.$ Check whether $\langle b_i,v\rangle=0$. If yes, proceed with the next $b_i$. If not, map $b_i \mapsto b_i + w$. Obtain $\{\tilde{b}_i\}$.
\item Do the same but this time for $\{\tilde{b}_i\}$ and $w$, mapping $\tilde{b}_i \mapsto \tilde{b}_i + v$ where necessary.
\item Stack all $\tilde{b}_i$ as rows of a matrix and row reduce.
\end{itemize}

We can then proceed with the remaining elements in $\{\tilde{b}_i\}$ analogously. 

We summarise the construction of a symplectic basis in Algorithm~\ref{algo:sympl}. It assumes that the input $T$ is full rank.
\begin{algorithm}[H]
\caption{Creating a symplectic basis}\label{algo:sympl}
\KwIn{$r  \times 2n$ full-rank stabiliser tableau $T$}
\KwOut{A symplectic basis $\{(X_i,Z_i)\}_{i=0}^{k-1}$}
$k \gets \big(\dim(\ker(T\Omega)/\mathcal{R}(T))\big)/2$\tcp*{$=n-r$}
Find a basis $\{b_1,\dots,b_{2k}\}$ for $\ker(T\Omega)/\mathcal{R}(T)$\;
$B \gets \{b_1,\dots,b_{2k}\}$\;
\For{$i \gets 1$ \KwTo $k$}{
    Pick any $X_i \in B$;\quad $B \gets B \setminus \{X_i\}$\;
    Pick any $Z_i \in B$ with $\langle X_i, Z_i\rangle = 1$;\quad $B \gets B \setminus \{Z_i\}$\;
    \ForEach{$b \in B$}{
        $b \gets b \;\oplus\; \langle b, Z_i\rangle\, X_i \;\oplus\; \langle b, X_i\rangle\, Z_i$\;
    }
    Stack the elements of $B$ as rows, row-reduce, keep it as the updated $B$\;
}
\Return{$\{(X_i,Z_i)\}_{i=1}^{k}$}\;
\end{algorithm}

 \subsection{Updating $T$ after measurements}\label{app:measurements}
Here we discuss how to update a stabiliser tableau $T$ after a Pauli measurement has been performed. This is needed to adapt our pre-processing from Section~\ref{sec:prepro} to SPF in the presence of unheralded loss.
A stabiliser group $\Scurvy$ is updated after a Pauli measurement $\pm M$, where the sign is determined by the outcome, by adding $\pm M$ to the stabiliser group and removing any $s \in \Scurvy$ that anti-commutes with $M$. Any representative $X \in [X]$ and $Z \in [Z]$ commuting with $M$ thus remains a valid logical representative~\cite{morley-shortLosstolerantTeleportationLarge2019,gottesmanStabilizerCodesQuantum1997}. We will now discuss how to perform this stabiliser group update via the symplectic representation. 

To be precise, when one performs a measurement of a Pauli operator $P \in \mathcal{P}_n-\Scurvy$ with respect to a state stabilised by $\Scurvy$, the group $\Scurvy$ is updated the following way~\cite{gottesmanStabilizerCodesQuantum1997} (neglecting the specification of measurement results as they are irrelevant for deciding whether an SPF pair exists):

\begin{itemize}
\item If $P \in \mathcal{N}(\Scurvy)-\Scurvy$: $P$ is added to $\Scurvy$. The number of generators of $\Scurvy$ is increased by one.
\item If $P \notin \mathcal{N}(\Scurvy)$: Any $s \in \Scurvy$ that anti-commutes with $P$ is removed while any $s$ that commutes with $P$ remains. $P$ is added to $\Scurvy$. The number of generators of $\Scurvy$ does not change.
\end{itemize}
In the following we assume that none of the measurements is an element of $\mathcal{N}(\Scurvy)$, i.e. each measurement anti-commutes with some stabiliser. Consider a generating set $\{m_i\}$ of the measurement set $M \subseteq \mathcal{P}_n$. We write $n_M:=|\{m_i\}|$.
To update $\Scurvy$, one can find a destabiliser basis $\{d_i\} \subseteq \Scurvy$ of $\{m_i\}$. That is, we must find a basis $\{d_i\}$ such that $\{d_i,m_i\}=0$ and $[d_i,m_j]=0$ for all $j \neq i$. Then, for any $g$ in some minimal generating set $\{g_i\}$ of $\Scurvy$, one does the following:
\begin{equation}\label{eq:destab_multiplication}
g \mapsto c(g):= g \prod_{i:\,\{m_i,g\}=0} d_i.
\end{equation}
This makes any $g$ commute with all measurements. The set $\Scurvy_c:=\{c(g) \mid g \in \{g_i\}\}$ of all updated generators is not a minimal generating set; it generates a group with $r-n_M$ independent generators. The full, new stabiliser group is generated by $\Scurvy_c$ and $\{m_i\}$.

In the symplectic representation, the destabilisers $d_i$ are given by the rows of the $n_M \times 2n$ matrix $D:=CT$, where $C \in \mathbb{F}_2^{n_M \times r}$ is a coefficient matrix. Let $M$ be the $2n \times n_M$ matrix whose $i$-th column is the symplectic representation of $m_i$ as a column vector. In order for the rows of $D$ to define a valid destabiliser basis, $C$ must fulfil:
\begin{equation}\label{eq:find_destab}
    CT\Omega M= I_{n_M},
\end{equation}
where $I_{n_M}$ is the $n_M \times n_M$ identity matrix.
Finding $\Scurvy_c$ corresponds to mapping $T$ to $T_c$:
\begin{equation}
    T_c :=T + (T\Omega M)D
\end{equation}
and row-reducing, obtaining a rank $r-n_M$ matrix. In the end, one removes all all-zero rows from the row-reduced $T_c$ and appends $M^T$ vertically.

Any vector $v \in \symplspace$ corresponding to an element of $[X]$ or $[Z]$ that anti-commutes with at least one of the measurements  can also be updated via:
\begin{equation}
    v\mapsto v+(v\Omega M)D.
\end{equation}
The result then commutes with all measurements.

\end{document}